\newtheorem{theorem}{Theorem}[section]
\newtheorem{lemma}[theorem]{Lemma}
\newtheorem{corolary}[theorem]{Corolary}
\newtheorem{property}[theorem]{Property}
\newtheorem{definition}{Definition}[section]
\newtheorem*{remark}{Remark}
\newcommand{\hide}[1]{}
\newcommand{\bra}[1]{{\langle{#1}\vert}}
\newcommand{\ket}[1]{{\vert{#1}\rangle}}
\newcommand{\bracket}[2]{\langle #1 \vert #2 \rangle}
\newcommand{\ketbra}[1]{\vert #1 \rangle \ \!\!\! \langle #1 \vert}
\newcommand{\cnot}			{{\textsf{c-Not}}}
\newcommand{\swap}		{{\textsf{Sw}}}
\newcommand{\qid}			{\mathds{1}}
\newcommand{\had}			{{\textsf{H}}}
\newcommand{\phase}		{{\textsf{S}}}
\newcommand{\qs}	{{\textsf{q}_\textsf{s}}}
\newcommand{\qh}	{{\textsf{q}_\textsf{h}}}
\newcommand{\blk}			{{\Box}}
\newcommand{\qA}			{{\mathbb{A}}}
\newcommand{\qU}			{{\mathbb{U}}}
\newcommand{\qD}			{{\mathbb{D}}}
\newcommand{\xL}			{{\textsf{L}}}
\newcommand{\xN}			{{\textsf{N}}}
\newcommand{\xR}			{{\textsf{R}}}
\newcommand{\smn}                  {{$s$-$m$-$n$}}
\DeclareMathOperator{\K}{K}
\DeclareMathOperator{\I}{I}
\DeclareMathOperator{\tr}{Tr}
\newcommand{\nats}                  {{\mathbb N}}
\newcommand{\qurtains}             {\hfill{QED}}
\title{Quantum Kolmogorov complexity and quantum correlations in deterministic-control quantum Turing machines}
\author[1,2]{Mariano Lemus}
\thanks{Corresponding author}
\email{mlemus@tecnico.ulisboa.pt}
\author[1,2]{Ricardo Faleiro}
\email{ricardofaleiro@tecnico.ulisboa.pt}
\author[1,2]{Paulo Mateus}
\email{paulo.mateus@tecnico.ulisboa.pt}
\author[1,2]{Nikola Paunkovi\'c}
\email{npaunkov@mathtecnico.ulisboa.pt}
\author[2,3,4]{Andr\'e Souto}
\email{ansouto@ciencias.ulisboa.pt}
\affiliation[1]{Departamento de Matemática, Instituto Superior Técnico, Universidade de Lisboa, \\Av.Rovisco Pais, 1049-001 Lisboa, Portugal}
\affiliation[2]{Instituto de Telecomunicações, Av.Rovisco Pais, 1049-001 Lisboa, Portugal}
\affiliation[3]{Lasige, Faculdade de Ciências da Universidade de Lisboa, Campo Grande, 1749-016 Lisboa, Portugal}
\affiliation[4]{Departamento de Informática,Faculdade de Ciências da Universidade de Lisboa,\\Campo Grande, 1749-016 Lisboa, Portugal}
\date{}
\begin{document}

\maketitle
\begin{abstract}
This work presents a study of Kolmogorov complexity for general quantum states from the perspective of deterministic-control quantum Turing Machines (dcq-TM). We extend the dcq-TM model to incorporate mixed state inputs and outputs, and define dcq-computable states as those that can be approximated by a dcq-TM. Moreover, we introduce (conditional) Kolmogorov complexity of quantum states and use it to study three particular aspects of the algorithmic information contained in a quantum state: a comparison of the information in a quantum state with that of its classical representation as an array of real numbers, an exploration of the limits of quantum state copying in the context of algorithmic complexity, and study of the complexity of correlations in quantum systems, resulting in a correlation-aware definition for algorithmic mutual information that satisfies symmetry of information property. 
\end{abstract}

% \begin{abstract}
% We extend the deterministic-control quantum Turing machine (dcq-TM) model to incorporate mixed state inputs and outputs. Moreover, we define dcq-computable states as those that can be accurately approximated by a dcq-TM, and we introduce (conditional) Kolmogorov complexity of quantum states. We show that this notion is  machine independent  and that the set of dcq-computable states coincides with states having computable classical representations. Furthermore, we prove an algorithmic information version of the no-cloning theorem stating that cloning most quantum states is as difficult as creating them. Finally, we also propose a correlation-aware definition for algorithmic mutual information and shown that it satisfies symmetry of information property. \todo{improve this}
% \end{abstract}

\section{Introduction}
\label{sec:introduction}

Kolmogorov complexity is a mathematical formulation to capture the intuitive idea of the amount of information an individual object (usually a string) has. This notion was independently proposed by Kolmogorov in 1965~\cite{kol:65}, Solomonoff in 1964~\cite{sol:64} and Chaitin in 1966~\cite{cha:66}. It defines the amount of information of an object as the number of minimal instructions to be given to a machine to reproduce that object. If the object is a string $x$, and the machine is a Universal Turing machine $\mathsf{U}$, the Kolmogorov complexity of $x$ is the length of the shortest program that running in $\mathsf{U}$ as input, outputs $x$. We refer to the book by Li and Vit\'anyi \cite{li:vi:19} for a comprehensive reading of this theme. 

Several aspects must be considered to extend the notion of complexity to the quantum realm, as, in this context, the objects are quantum states, and the set of potential operations is not discrete.  In~\cite{vit:00}, Vit\'anyi identifies three approaches for the quantitative definition of information in a quantum state. The first one is the description (using classical bits) of the physical apparatus that outputs the given state. In this context, the continuous set of quantum states can be approximated by Cauchy sequences of states that can be directly outputted by the machine. The second approach consist on the \emph{qubit} description of the quantum machine outputting the state, which upper bounds the information in a given quantum state by the logarithm of the dimension of the associated Hilbert space. The third approach considers the information content in a quantum state to be the classical information content of the list of real numbers involved in a (fixed) mathematical expression for the state. Furthermore, there are additional aspects to consider, including how to handle the distinction between a machine outputting a state or a high fidelity approximation of it; and the issue of directly addressing mixed states, and therefore define the machine through density operators instead of pure states.

In the last few decades, several definitions and approaches have been proposed in the literature that explore all the aforementioned possibilities. Svozil~\cite{svo:96}, in his pioneer work, defined the quantum Kolmogorov complexity of a state as the length of the minimum classical description in a quantum Turing machine (in the machine model defined in~\cite{deu:85}) of that state. In~\cite{vit:01}, Vit\'anyi proposed the definition of the (quantum) Kolmogorov complexity of a pure quantum state $\ket \varphi$ as the minimum value of the sum of two parts: 
(i)~the length of a classical program computing $\ket\psi$, an approximation of $\ket\varphi$, 
and 
(ii)~the negative log-fidelity of the approximation of $\ket\psi$ to $\ket\varphi$.

These two first approaches considered classical descriptions for quantum states. In \cite{ber:dam:lap:01}, the authors considered the possibility of using a quantum program (encoded in qubits instead of bits) to define the complexity similarly to Vit\'anyi's.  In their work, Berthiaume, van Dam, and Laplante defined the Kolmogorov complexity of a qubit string as the length of the shortest quantum bit string that, given to a quantum universal Turing machine, produces the qubit string with high fidelity.  Its properties were further studied in \cite{mul:08, mul:07}.

Based on a generalization of the notion of universal semi-measure, G\'acs~\cite{gac:01} proposed to define Kolmogorov complexity based on density matrices. Later, in \cite{mor:bri:05}, the Kolmogorov complexity of a pure quantum state $\ket\varphi$ was defined as the length of the shortest description of a quantum circuit capable of producing such state (or an approximation of it). More recently, a new approach to define Kolmogorov complexity based on a particular type of quantum Turing machines, called deterministic controlled Turing machines was proposed~\cite{pmat:acs:asouto:17}. In this approach, a description of a quantum state $\ket{\varphi}$ is given by a sequence of quantum gates, which transforms an initial reference state of the quantum tape into the state $\ket{\varphi}$.

Although (classical) Kolmogorov complexity has been successfully applied to several fields ranging from security, computational complexity, bio-medicine,  etc~\cite{ant:etal:13,sou:etal:18,aze:etal:23}, corresponding applications of its quantum versions are yet just a few. For example, in \cite{mor:bri:05} the authors establish the ground to characterize quantum entanglement via Kolmogorov complexity. In~\cite{mor:bri:kra:06} Mora and Briegel and Kraus studied the application of quantum Kolmogorov complexity to thermodynamics and complexity theory. In~\cite{ben:kru:etal:06} and~\cite{miy:11} the authors explore the connection between (quantum) Kolmogorov complexity and entropy applied to Brudno's theorem and information-disturbance theorem, respectively. Another example of application is in cryptography, in~\cite{miy:ima:09}, the authors use (classical) Kolmogorov complexity to derive the security of BB84 quantum key distribution~\cite{ben:bra:84}. Recently, in~\cite{sar:al:ber:21}, algorithmic information theory has been used in genomics applications.

The interconnection between Kolmogorov Complexity Theory and Information Theory is well-known and extensively studied in the literature~\cite{gru:vit:08,ant:etal:11}. Concepts in one theory can find analogous versions in the other. One of the most important concepts in Information Theory is the notion of mutual information, which captures the inherent dependency of how much information is needed to explain one object having another object as a starting point. One of the most significant characteristics of (classical) mutual information is that it is symmetric, that is, $I(X,Y)=I(Y,X)$. On the other hand, for algorithmic mutual information such equality holds up to a logarithmic term of the size of the objects~\cite{zvo:lev:70,lee:rom:05,ant:etal:11}. The concept is briefly mentioned in the quantum Kolmogorov case in~\cite{vit:01}. In the context of quantum Kolmogorov complexity theory, the study of quantum algorithmic mutual information can help understand and quantify the complexity of the correlations between parts of multipartite quantum systems, since these systems can be inherently correlated. It is well-known that the correlations in quantum systems can be stronger than those present in classical systems due to the presence of entanglement~\cite{hor:09} or the more general quantum discord~\cite{mod:12}. Understanding quantum correlations and their relation to algorithmic mutual information measured in terms of a quantum version of Kolmogorov complexity is important, as it opens the possibility of quantifying the amount of manipulation required to create such correlations between two systems. Consequently, this connection can be useful in analyzing complexity in quantum computing, leading to more efficient algorithms, or in relating security and complexity in quantum cryptography protocols, for example.

\subsection{Overview of results}
The main results of this work can be summarized as follows:
\begin{itemize}
    \item We present a natural generalization of the \emph{deterministic-control quantum Turing machine} (dcq-TM) model, originally introduced in~\cite{pmat:acs:asouto:17}, to allow for mixed state inputs and outputs. 
    \item We define the set of dcq-computable states as the set of states which can be approximated with arbitrary precision using a dcq-TM.
    \item We define the (conditional) Kolmogorov complexity $\K$ of quantum states in the dcq-TM model and show that it is machine independent up to a constant.
    \item Using the concept of \emph{classical representation} of a quantum state to refer the array of real numbers describing its density matrix in the computational basis, we show that the defined set of dcq-computable states coincides with the set of states with computable classical representations. Moreover, we note that the algorithmic complexity of a quantum state and that of its classical representation are equal up to a constant.
    \item To contrast with the above result, we compare the properties of quantum states and their classical representations when used as a resource. In this context, we show that the representation contains more descriptive information when used as a resource for computation. We then further explore this difference by reducing the problem of computing an $n$-qubit state $\rho$ to the problem of computing a \emph{duplicate} of $\rho$ given access to one copy of it, resulting in a relation between the information in a single copy of a quantum state versus the information in two copies of itself. We conclude that for the majority of quantum states, cloning them is essentially as hard as to create them. The result can be interpreted as an algorithmic information version of the no-cloning theorem.
    \item Finally, we explore the concept of algorithmic mutual information for dcq-computable states. We show that the analogous version of the classical chain rule is not satisfied for general bipartite states and interpret this discrepancy to be caused by the presence of correlations among the two subsystems. We propose an alternative generalization of the chain rule for quantum systems and show that it is indeed satisfied under our definition of $\K$. Using this result, we present a correlation-accounting definition for algorithmic mutual information and show that it satisfies the symmetry of information property.
\end{itemize}

This paper is organized as follows: In Section~\ref{sec:preliminaries}, we briefly go through some important definitions on classical Kolmogorov complexity and computability of real numbers. In Section~\ref{sec:dcqtm}, we describe the dcq-TM machine model and present definitions for classical representations of quantum states, as well as classical simulators for quantum machines. In Section~\ref{sec:prepvscomp}, we define the concept of computable quantum states and compare it to the set of states with computable classical representations. In Section~\ref{sec:quantumk}, we define the concept of Kolmogorov complexity in the dcq-TM model and study some comparisons with the complexity of the associated classical representations, culminating in a dcq-TM version of no-cloning theorem. Later, in Section~\ref{sec:kcorrelations} we study potential definitions of algorithmic mutual information with the intent of quantifying quantum correlations, resulting in a generalization of the chain rule and an alternative definition for mutual information. Finally, in Section~\ref{sec:conclusionsk}, we summarize and discuss the results, as well as point to further research directions on the topic.

\section{Preliminaries}
\label{sec:preliminaries}

\subsection{Kolmogorov complexity}
\label{subsec:classicalk}

In this subsection we briefly review some basic properties of the standard classical algorithmic complexity $\K$ in order to better motivate the definitions in the following sections. Classically, the Kolmogorov complexity of a string $x$ is defined as the length of the shortest program that produces $x$ when given to a classical universal Turing machine.\footnote{In this work we focus on inputs to the Turing Machines which need not be prefix-free. In the literature, this is sometimes called ``plain'' Kolmogorov complexity.} We refer the reader to the book of Li and Vit\'anyi \cite{li:vi:19} for a complete study on this topic. 

\begin{definition} (Kolmogorov complexity) \\
\label{def:kcompclassical}
Let $x,y\in\{0,1\}^*$ be two strings and $\mathsf{U}$ a classical universal Turing machine. The Kolmogorov complexity of $x$ given $y$ is defined as:
\begin{equation}
\K(x\:|\:y) = \inf_{p}\{|p|: \mathsf{U}(p, y) = x\}.
\end{equation}
\end{definition}
The default value for $y$, the auxiliary input for the program $p$, is the empty string. In order to avoid overloaded notation, in those cases we typically drop this argument in the notation. Notice that Kolmogorov complexity is machine independent in the sense that we can fix a universal Turing machine $\mathsf{U}$ whose program size is at most a constant additive term worst than any other machine, and the running time is, at most, a logarithmic multiplicative factor slower than in any other machine (see Theorem 7.1.1 from~\cite{li:vi:19}).

In information theory, one of the fundamental quantities is mutual information. There is an analogous version for Kolmogorov complexity that we will call {\em algorithmic mutual information}, and is given by 
\begin{equation}
\label{eq:miclassicalk}
    \I_{\K}^{(1)}(x,y) = \K(x) - \K(x \:|\: y).
\end{equation}
In contrast with standard (Shannon) mutual information, this quantity is independent of any probability distribution and is concerned only with how much of the information in the string $y$ can be used to describe $x$. In information theory, one useful result is the chain rule, which allows us to express the entropy of a joint distribution in terms of its conditional and marginal distributions. For Kolmogorov complexity we have,
\begin{property}(Chain rule 1) \\
For all strings $x$ and $y$ in $\{0,1\}^*$,
\begin{equation}
\label{eq:chainrulek}
\K(x,y)=\K(x)+\K(y\:|\:x)+O(\log(\K(x,y))).
\end{equation}
\end{property}

Using Equation~(\ref{eq:chainrulek}) in combination with Equation~(\ref{eq:miclassicalk}) leads to a result called \emph{symmetry of information}, which states that
\begin{equation}
     \label{eq:siminf}
    \I_{\K}^{(1)}(x,y) = \I_{\K}^{(1)}(y,x) + O(\log(\K(x,y))).
\end{equation}
The chain rule also allows us to relate $I_{\K}^{(1)}$ with another quantity, which can also be understood as a measure of mutual descriptive information
\begin{equation}
    \I_{\K}^{(2)}(x,y) = \K(x) + \K(y)- \K(x, y).
\end{equation}
Analogous expressions for $\I_{\K}^{(1)}$ and $\I_{\K}^{(2)}$ exist in classical information theory for random variables $X,Y$ by replacing the Kolmogorov complexity $\K$ by the Shannon entropy function $H$. In classical information theory those quantities are proven to be equivalent~\cite{sha:48}. In the case of algorithmic information theory, they are related by 
\begin{equation}
    \label{eq:classicali1i2}
     \I_{\K}^{(1)}(x,y) = \I_{\K}^{(2)}(x,y) + O(\log(\K(x,y))).
\end{equation}
Let $n = |x| + |y|$, by noting that $\K(x,y) \leq \K(xy) + O(\log(n))$ we can slightly modify the chain rule to a form that will be later more suitable for generalizing to the quantum case~\cite{li:vi:19}: 

\begin{property}(Chain rule 2) \\
For all strings $x$ and $y$ in $\{0,1\}^*$ with $n = |x| + |y|$,
\begin{align}
    \label{eq:chainrulek2}
    \K(xy) &= \K(x)+\K(y\:|\:x)+O(\log(n \K(xy))) \\
    \label{eq:chainrulek21}
    &= \K(x)+\K(y\:|\:x)+O(\log(n)).
\end{align}
\end{property}
Even though Eq.~\eqref{eq:chainrulek21} is the simplest form of the classical chain rule, we want to highlight Eq.~\eqref{eq:chainrulek2} because it will allow us to better compare it with its quantum counterpart later on.

\subsection{Computability of real numbers}
\label{sec:realcomputability}
In this section, we briefly review the notions associated with the computability and algorithmic complexity of real numbers. Throughout the paper, we shall need to encode tuples of strings into a single string. Therefore, we  denote by $[x_1, \ldots, x_m]$  the encoding of the $m$-tuple $(x_1, \ldots, x_m)$ of binary strings into a single binary string. Fix a one-to-one encoding of natural numbers in strings and encoding of rational numbers by assigning the information of sign, numerator, and denominator to the different components of a tuple.

\begin{definition} 
\label{def:realcomputability}
A real number $r$ is said to be computable if there exists a (classical) Turing machine $\mathsf{T}$ which on any input $k \in \mathds{N}$ outputs $q_k \in \mathds{Q}$, such that
\begin{equation}
    \label{eq:realcomputability}
    | r - q_k | \leq \frac{1}{k}.
\end{equation}
In this case, we say that $\mathsf{T}$ computes $r$.
\end{definition}

Computable real numbers can be encoded effectively in natural numbers. Given an enumeration of the set of Turing machines, we will consider that a computable real number is encoded by any index that represents a Turing machine that will output the approximation of it according to Def.~\ref{def:realcomputability}. A tuple of real numbers $(r_1, \ldots, r_n) \in \mathds{R}^{n}$ is computable if all its components $r_i$ are computable. This encompasses complex numbers (ordered pairs of real numbers) and, notably, density matrices (arrays of complex numbers). We can proceed now to define the Kolmogorov complexity of a real number by identifying it with the length of a program that produces a sequence that converges to it according to Equation~\eqref{eq:realcomputability}: 
\begin{definition}
\label{def:kreals}
Let $r$ be a real number and $\mathsf{U}$ a universal classical Turing machine. The Kolmogorov complexity of $r$ is given by
\begin{equation}
    \K(r) = \inf_{p}\left\{|p|: \mathsf{U}_p \;\textnormal{computes}\; r )\right\},
\end{equation}
where $\mathsf{U}_p(x) = \mathsf{U}(p,x)$.
\end{definition}
\noindent The notion Kolmogorov complexity generalizes straightforwardly to arrays of real numbers and matrices as expected.

\begin{remark}
Since the set of computable numbers contains $\mathbb{N}$, some ambiguity may arise when talking about the complexity of a natural number, as both Def.~\ref{def:kcompclassical} and Def.~\ref{def:kreals} may apply (depending on the chosen encoding), but they in general do not coincide. In this document, unless stated otherwise, whenever a number $x$ is understood to take values only in $\mathbb{N}$ (index in an enumeration, counter in a \emph{for} loop, etc.) we assume that $\K(x)$ is given by  Def.~\ref{def:kcompclassical}. Otherwise, we assume its complexity is given by  Def.~\ref{def:kreals}.
\end{remark}

%%%%%%%%%%%%%%%%%% Determinisit control quantum turing machine  %%%%%%%%%%%%%%%%%%%%%
\section{Definition of the machine}
\label{sec:dcqtm}

\subsection{Deterministic control quantum Turing machine (dcq-TM)}
A dcq-TM is a completely deterministic machine in the sense that the flow of computation is entirely classically controlled, but it is also capable of having quantum states as inputs and outputs. The following description of a dcq-TM is based on the one defined in~\cite{pmat:acs:asouto:17} with the added feature that the machine is also able to handle general mixed states, and not just pure states.  The machine is understood as having at least two infinite tapes, one classical and one quantum, with their respective classical and a quantum heads\footnote{Please note that our ``quantum head'' is in fact a classical head that operates over the quantum tape. In this sense, it evolves according to a completely classical-deterministic transition function, as opposed to the Deutsch and Bernstein-Vazirani models \cite{deu:85, ber:vaz:97} where the quantum head evolves unitarilly and can be found in superpositions of its classical internal states.}. The distinctive feature of the machine is that it has a deterministic control over the computation happening on the quantum tape -- this control is only dependent on the set of internal states ($Q$) of the dcq-TM, which are classical, and the contents of the classical tape. Thus, the set of internal states  and the classical input specify the dcq-TM's computational dynamics entirely, and the quantum tape is a ``work only" tape whose contents do not influence the computation. The quantum tape is effectively an infinite quantum memory, composed by quantum cells, each associated with a two dimensional (qubit) Hilbert space $\mathds{C}^{2}$. The classical control of the dcq-TM commands the quantum head in a deterministic way. It chooses where the quantum head moves to and picks, from a pre-chosen universal set $\qU$ (which without loss of generality is assumed to be finite), which unitary gates to apply and to which cells of the quantum tape to apply them to. It is important to stress that the unitary transformations in $\qU$ must be described only with computable complex numbers, otherwise the machine would compute super-Turing distribution/functions. The machine is rigorously defined as follows,
\begin{definition}
\label{def:dcqtm}
  A deterministic-control Quantum Turing machine or dcq-TM is defined by the pair, $(Q,\delta)$, where $Q$ and $\delta$ are, respectively, the finite set of control states containing at least the two distinct states $\qs$ (starting state) and $\qh$ (halting state) and the transition function of the computation. 
  The transition function is, 
  \begin{equation}
\delta: Q \times \qA \rightarrow \qU \times \qD \times \qA \times \qD \times Q,
 \end{equation}
 where $\qA = \{0, 1 ,\blk\}$ is the alphabet of the classical tape, $ \qU$ is a universal set of unitary operators that can be applied to the quantum tape, and $\qD=\{\xL,\xN,\xR\}$
is set of possible head displacements (left, none, right). 
\end{definition}

%%%%%%%%%%%%%%%%%% Quantum Inputs/Outputs  %%%%%%%%%%%%%%%%%%%%%
\subsubsection{Anchor cell and input/output schemes}
In order to have the quantum and classical inputs/outputs properly defined in the dcq-TM we introduce the concept of an \textit{anchor cell}. The anchor cell serves the purpose of specifying a global reference point for each tape, in order to identify unambiguously the inputs and outputs of the machine. For a given tape, we specify the anchor cell (on that tape) to be the first cell before the input starts i.e if we index the input cells with positive numbers $i>0$, the anchor cell  will always be the cell with index $i=0$.
The input and the output schemes of the dcq-TM presented here are slightly different from the ones in the definition of \cite{pmat:acs:asouto:17}, where there is no \textit{anchor cell} defined, and the machine is not aware of the size of its quantum input/output. In the spirit of generalizing the model for general mixed states and more naturally defining inputs and outputs that may start or end with $\ket{0}$ states. We introduce this alternative definition from which it is straightforward to recover all the results already established regarding the dcq-TM. 
 
\textbf{Inputs} -- We say the machine received as input the pair $(x;\rho_{\langle1,n\rangle})$ if its internal state is $\qs$, its classical and quantum heads are parked in their respective anchor cells, and the tapes' contents are as shown in Figure~\ref{fig:inputs}. The classical input $x$ is the string written between the anchor and the first blank after the anchor. At the end of the classical input, we insert a blank cell to distinguish strings $x$ and $n$,  where the latter specifies the quantum input in the quantum tape. The quantum input $\rho_{\langle1,n\rangle}$ is the physical quantum state present in the set of cells between the anchor and the cell specified by string $n$. 
 
\begin{figure}[H]
\begin{center}
\begin{tikzpicture}
\tikzstyle{every path}=[thick]

\edef\sizetape{0.7cm}
\tikzstyle{tmtape}=[draw,minimum size=\sizetape]
\tikzstyle{tmhead}=[arrow box,draw,minimum size=.33cm,arrow box
arrows={north:.23cm}]

%% Draw TM classical tape
\begin{scope}[xshift=.6cm,start chain=1 going right,node distance=-0.15mm]
   \node [on chain=1,tmtape,draw=none] {$\ldots$};
    \node [on chain=1,tmtape] (input) {$\Box$};
    \node [on chain=1,tmtape] (input) {$\Box$};
    \node [on chain=1,tmtape] {$x_1$};
    \node [on chain=1,tmtape] {...};
    \node [on chain=1,tmtape] {$x_k$};
    \node [on chain=1,tmtape] {$\Box$};
    \node [on chain=1,tmtape] {$n_1$};
    \node [on chain=1,tmtape] {...};
    \node [on chain=1,tmtape] {$n_s$};
    \node [on chain=1,tmtape] {$\Box$};
    \node [on chain=1,tmtape] {$\Box$};
    \node [on chain=1,tmtape,draw=none] {$\ldots$};
    \node [on chain=1] {\textbf{Classical Tape}};
\end{scope}

%% Draw TM head below (input) tape cell
\node [tmhead,yshift=-.2cm, color=black] at (input.south) (head) {};

\node [yshift=1cm] at (input.south)  {\footnotesize $\textup{Anchor}_c$};

%% Draw TM quantum tape

\begin{scope}
[xshift=0.7cm,yshift=-2.5cm, start chain=1 going right  ,node distance=-0.15mm]
    \node [on chain=1,tmtape,draw=none] {$\ldots$};
    \node [on chain=1,tmtape] {$\ket{0}\!\bra{0}$};
    \node [on chain=1,tmtape] (input) {$\ket{0}\!\bra{0}$};
    \node [on chain=1,tmtape] {$\rho_1$};
    \node [on chain=1,tmtape] {...};
    \node [on chain=1,tmtape] {...};
    \node [on chain=1,tmtape] {$\rho_n$};
    \node [on chain=1,tmtape] {$\ket{0}\!\bra{0}$};
    \node [on chain=1,tmtape] {$\ket{0}\!\bra{0}$};
    \node [on chain=1,tmtape,draw=none] {$\ldots$};
    \node [on chain=1] {\textbf{Quantum Tape}};
\end{scope}

%% Draw TM head below (input) tape cell
\node [tmhead,yshift=-.2cm, color=gray] at (input.south) (head) {};

\node [yshift=1cm] at (input.south)  {\footnotesize $\textup{Anchor}_q$};

			 \draw [decorate,decoration={brace, mirror,amplitude=5pt},xshift=10pt,yshift=55pt]
(2.05,-2.5) -- (4.20,-2.5)  node [black,midway,yshift=-20pt] {\!\!$x$: \footnotesize{\!\!C-Input}}; 
 \draw [decorate,decoration={brace, mirror,amplitude=5pt},xshift=22pt,yshift=55pt]
(4.49,-2.5) -- (6.5,-2.5)  node [black,midway,yshift=-20pt] {\ \ \ \ \ $n$: \footnotesize{\!\!Q-Input size specification}}; 
%% Quantum Input

\draw [decorate,decoration={brace, mirror,amplitude=5pt},xshift=25pt,yshift=0pt]
(2.25,-3) -- (5.05,-3) node [black,midway,yshift=-20pt] {$\rho_{\langle1,n\rangle}$: \footnotesize{\!\!Q-Input}}; 

\end{tikzpicture}
\end{center}{}
\vspace{-0.4cm}
\caption{\textbf{Starting configuration of the dcq-Turing machine with a classical and quantum input.} $x = x_{1}\cdot x_{2} \cdots x_{k}$ is the classical input, while $\rho_{\langle1,n\rangle}$ is the quantum input ``written" on $n$ cells of the quantum tape, where $\rho_{i}$ is the local state of the $i-$th cell, represented by $ \mathbold{\rho}_i=\textnormal{Tr}_{(1,\ldots, i-1, i+1, \ldots, n)}[\mathbold{\rho}_{\langle1,n\rangle}]$. The string $x\Box n$ is surrounded by blank symbols extending to infinity in both directions.  The quantum input is surrounded by cells in the local zero state $\ket{\text{0}}\!\bra{\text{0}}$ extending to infinity in both directions.}
\label{fig:inputs}
\end{figure}

\begin{figure}[H]
\begin{center}
\begin{tikzpicture}
\tikzstyle{every path}=[ thick]

\edef\sizetape{0.7cm}
\tikzstyle{tmtape}=[draw,minimum size=\sizetape]
\tikzstyle{tmhead}=[arrow box,draw,minimum size=.33cm,arrow box
arrows={north:.23cm}]

%% Draw TM classical tape
\begin{scope}[start chain=1 going right,node distance=-0.15mm]
    \node [on chain=1,tmtape,draw=none] {$\ldots$};
    \node [on chain=1,tmtape] (input) {$\Box$};
    \node [on chain=1,tmtape] (input) {$\Box$};
    \node [on chain=1,tmtape] (input) {...};
    \node [on chain=1,tmtape] (input) {$w$};
    \node [on chain=1,tmtape] {$y_1$};
    \node [on chain=1,tmtape] {...};
    \node [on chain=1,tmtape] {$y_l$};
    \node [on chain=1,tmtape] {$\Box$};
    \node [on chain=1,tmtape] {$m_1$};
    \node [on chain=1,tmtape] {...};
    \node [on chain=1,tmtape] {$m_t$};
    \node [on chain=1,tmtape] {$\Box$};
    \node [on chain=1,tmtape] {$w$};
    \node [on chain=1,tmtape] {...};
    \node [on chain=1,tmtape] {$\Box$};
    \node [on chain=1,tmtape] {$\Box$};
    \node [on chain=1,tmtape,draw=none] {$\ldots$};
    \node [on chain=1] {\!\!\textbf{Classical Tape}};
\end{scope}

%% Draw TM head below (output) tape cell
\node [tmhead,yshift=-.2cm, color=black] at (input.south) (head) {};

\node [yshift=1cm] at (input.south)  {\footnotesize $\textup{Anchor}_c$};

%% Draw TM quantum tape

\begin{scope}
[xshift=1cm, yshift=-2.5cm, start chain=1 going right  ,node distance=-0.15mm]
    \node [on chain=1,tmtape,draw=none] {$\ldots$};
    \node [on chain=1,tmtape] {$\ket0\!\bra0$};
    \node [on chain=1,tmtape] {$\ket0\!\bra0$};
    \node [on chain=1,tmtape] {$...$};
    \node [on chain=1,tmtape] (input) {$\mu$};
    \node [on chain=1,tmtape] {$\sigma_1$};
    \node [on chain=1,tmtape] {...};
    \node [on chain=1,tmtape] {$\sigma_m$};
    \node [on chain=1,tmtape] {$\mu$};
    \node [on chain=1,tmtape] {$...$};
    \node [on chain=1,tmtape] {$\ket0\!\bra0$};
    \node [on chain=1,tmtape] {$\ket0\!\bra0$};
    \node [on chain=1,tmtape,draw=none] {$\ldots$};
    \node [on chain=1] {\!\!\textbf{Quantum Tape}};
\end{scope}

%% Draw TM head below (input) tape cell
\node [tmhead,yshift=-.2cm, color=gray] at (input.south) (head) {};

\node [yshift=1cm] at (input.south)  {\footnotesize $\textup{Anchor}_q$};

			 \draw [decorate,decoration={brace, mirror,amplitude=5pt},xshift=30pt,yshift=55pt]
(2.2,-2.5) -- (4.3,-2.5)  node [black,midway,yshift=-20pt] {\!\!$y$: \footnotesize{\!\!C-Output}}; 

\draw [decorate,decoration={brace, mirror,amplitude=5pt},xshift=119pt,yshift=55pt]
(1.95,-2.5) -- (4.1,-2.5)  node [black,midway,yshift=-20pt] {\ \ \ \ \ $m$: \footnotesize{\!\!Q-Output size specification}}; 
%% Quantum Input

\draw [decorate,decoration={brace, mirror,amplitude=5pt},xshift=22pt,yshift=-1pt]
(4.1,-3) -- (6.2,-3) node [black,midway,yshift=-20pt] {$\sigma_{\langle 1,m\rangle}$: \footnotesize{\!\!Q-Output}}; 

\end{tikzpicture}
\end{center}{}
\vspace{-0.4cm}
\caption{\textbf{Final state of the dcq-Turing machine with a classical and quantum outputs.} $y = y_{1}\cdot y_{2} \cdots y_{r}$ is the classical output, while $\sigma_{\langle1,s\rangle}$ is the quantum output, where $\sigma_{i}$ is the local state of the $i-$th cell, represented by $ \mathbold{\sigma}_i=\textnormal{Tr}_{(1,\ldots, i-1, i+1, \ldots, n)}[\mathbold{\sigma}_{\langle1,s\rangle}]$. The size (in number of qubits) of $\sigma_{(1,s)}$ is specified by the string $s$ in the classical tape. The working cells, which are used during the computation but are not part of the output in the classical and quantum tape are denoted by $w$ and $\mu$, respectively.}
\label{fig:outputs}
\end{figure}
\textbf{Outputs} -- We say the machine produced as output, $(y;\sigma_{\langle 1,s\rangle})$, if the machine is in the halting state $\qh$ and the classical and quantum tapes are in the configuration shown in Figure~\ref{fig:outputs}. The classical output $y$ is defined as the string written in the set of cells that are between the anchor and first blank to the right of the anchor. The quantum output specification $m$ is defined as the string between the first and the second blanks appearing after the anchor. Finally, the quantum output is defined as the physical quantum state present in the set of cells between the anchor and the cell specified by string $m$. 

Although there are many ways to evaluate functions that have multiple classical inputs/outputs using the above configuration, for the purposes of the examples of this paper it is convenient to have a unified encoding to use whenever we refer to algorithms that use a single string $x$ to encode a tuple $(x_1,\ldots, x_k)$ of strings. We use the following encoding
\begin{equation}
    (x_1,\ldots, x_k) \equiv x_1 \cdot \ldots \cdot x_k \equiv \ell(x_1) 0 x_1 \ell(x_2)0 x_2\cdots \ell(x_{k-1})0 x_{k-1} x_k,
\end{equation}
where $\ell(x_i) = 1^{|x_i|}$ is a string of 1s of length $|x_i|$. Using this encoding, each input $x_i$ adds $2|x_i|+1$ bits to the input string, except the last one, which only adds $|x_i|$; that is
\begin{equation}
    \left|(x_1,\ldots, x_k)\right| = \sum_{i=1}^{k-1}2|x_i|+|x_k|+(k-1).
\end{equation}
Multiple quantum inputs/outputs can be defined by encoding the tuple of sizes (number of qubits) of the different systems in the same way described above.

\subsubsection{Symbolic notation for dcq-TM computations}
Thus, an input on the dcq-TM of the form $(x;\rho_{\langle1,n\rangle})$, specifies the following representation of the tapes,
\begin{equation}
\begin{split}
    &[\textrm{Classical tape}] = [\Box]_{(-\infty,0)}\;[x]\;[\Box]\;[n]\;[\Box]_{(|x|+|n|+1,\infty)} \\ \\
   & \rho_{\;\textrm{Quantum Tape}} = \ket{\textbf{0}}\!\bra{\textbf{0}}_{(-\infty,0)}\otimes \rho_{\langle1,n\rangle} \otimes \ket{\textbf{0}}\!\bra{\textbf{0}}_{(n+1,\infty)},
    \end{split}{}
\end{equation}
where $\ket{\textbf{0}}\!\bra{\textbf{0}}_{(a,b)}$ denotes $\ket{0}\!\bra{0}_a\otimes \ket{0}\!\bra{0}_{a+1} \otimes \ldots \otimes \ket{0}\!\bra{0}_{b}$. If a specific dcq-TM machine, $\mathsf{T}$, on input $(x;\rho_{\textnormal{in}})$ produces output $(y;\rho_{\textnormal{out}})$ we represent the computation symbolically as, 
\begin{equation}
    \mathsf{T}(x;\rho_{\textnormal{in}})=(y;\rho_{\textnormal{out}}),
\end{equation}
for the overall output, and
\begin{equation}
   \mathsf{T}^{(C)}(x;\rho_{\textnormal{in}}) = y \quad \quad  \mathsf{T}^{(Q)}(x;\rho_{\textnormal{in}}) = \rho_{\textnormal{out}}, 
\end{equation}
for the classical and quantum outputs. To represent empty inputs/outputs we use the symbols $\blk$ for the classical tape, and $\epsilon$ for the quantum one. Note that $\epsilon$ is used to denote quantum input/outputs of size zero. We call the reader's attention to the fact that $\epsilon$ is not a density operator acting on a Hilbert space, since a zero size input/output does not refer to any system in the quantum tape; we use it to refer to the cases where there may not be a quantum input/output.

The computability properties of dcq-TM may be dependent on the chosen set $\qU$ of basic quantum operations, namely the states that can be prepared exactly by the machine. On this work, we will focus on dcq-TM based on the universal set $\qU = \{\qid,\had,\phase,\frac{\pi}{8},\swap,\cnot\}$ commonly used in the circuit model for quantum computation. This set is important because it is associated with the capabilities of practical quantum computers. 
It is also important to remark that the transition function of a dcq-TM depends only on the contents of the classical tape, this is the reason the model is denoted as {\em deterministic control}. In \cite{pmat:acs:asouto:17}, the universality of this model was proven and, moreover, it was also shown that the dcq-TM model could simulate a special type of quantum circuit called a \textit{seesaw circuit}. A \textit{polynomial translatability theorem} was proven which states that any general quantum circuit can be simulated by one such seesaw circuit with a polynomially bounded overhead on the number of gates,  showing that the dcq-TM can solve in a bounded polynomial-time the same problems that the regular quantum circuit model can, therefore, the BQP complexity class coincides with the analogous dcBQP class defined for the dcq-TM model. We also know that there is a universal dcq-TM, $\mathsf{T_U}$, that can simulate every other dcq-TM on any arbitrary input. As such,  without loss of generality, 
\begin{equation}
    \mathsf{T_U}(x ;\rho_{\textnormal{in}})=(\blk ;\rho_{\textnormal{out}}) \Leftrightarrow \mathcal{C}_x^{*}(\rho_{\textnormal{in}})= \rho_{\textnormal{out}}.
\end{equation}
The previous expression represents symbolically the equivalence between the computation taken on a specific input in the dcq-TM model, and it is implementation of the \textit{seesaw circuit} $\mathcal{C}_x^{*}$ that gives the same output. From now on, whenever we refer to a universal dcq-TM, we mean a universal dcq-TM machine fulfilling the s-m-n property.

Sometimes, we may want to refer to an instance of a universal dcq-TM $\mathsf{U}$, simulating some other dcq-TM $\mathsf{T}$. From the s-m-n property, we know that there is a string $t \in \lbrace 0,1 \rbrace^*$ such that, for any input $(x;\rho_{\textnormal{in}})$ we have that $\mathsf{U}\bigl( (t,x);\rho_{\textnormal{in}} \bigr) = \mathsf{T}(x;\rho_{\textnormal{in}})$. In these cases we say that the machine $\mathsf{U}$ \emph{runs the program} $t$ \emph{on input} $(x;\rho_{in})$, denoted symbolically by:
\begin{equation}
    \mathsf{U}_{t}(x;\rho_{\textnormal{in}}) =
    \mathsf{U} \bigl( (t,x);\rho_{\textnormal{in}} \bigr) = \mathsf{T}(x;\rho_{\textnormal{in}}).
\end{equation}

\subsection{Classical quantum state encoding and simulators}

In this work we are interested in comparing the complexity aspects of two related, but fundamentally different types of objects. The first are the quantum states themselves, understood as density operators acting on a given Hilbert space, encoded in the physical states of the cells of the quantum tape, and are denoted by standard Greek letters $\rho, \sigma$, etc. The second are the \emph{classical representations} of the quantum states, which are understood as arrays of complex numbers describing the respective density matrices in the computational basis, encoded by strings of symbols in the classical alphabet $\qA$ according to an appropriate rule (as described in Section~\ref{sec:realcomputability}), and denoted by bolded Greek letters $\mathbold{\rho}, \mathbold{\sigma}$, etc.
In the context of inputs and outputs of dcq-TMs, a statement of the form
\begin{equation}
    \mathsf{T}(\mathbold{\rho}_{\textnormal{in}};\epsilon) = (\mathbold{\rho}_{\textnormal{out}};\epsilon),
\end{equation}
is interpreted as follows: ``Upon classical input of any string encoding the computational-basis matrix associated to the state $\rho_{\textnormal{in}}$ and no quantum input, the dcq-TM $\mathsf{T}$ outputs a string encoding the respective matrix for the state $\rho_{\textnormal{out}}$ without a quantum output". Finally, for simplicity, following the relation between the states and their matrix representations, for every function $f$ of quantum states $\rho, \sigma, \dots$, we define $f(\mathbold\rho, \mathbold\sigma, \dots) \equiv f(\rho, \sigma, \dots)$. 
 
Before moving on to discuss directly dcq-computable and (plain) dcq-computable quantum states, we establish a notation for the classical simulators of these machines. For any dcq-TM $\mathsf{T}$, consider a ``classical'' machine $\Tilde{\mathsf{T}}$ (a dcq-TM that does not interact with the quantum tape and whose quantum output is $\epsilon$ for any input), defined constructively from $\mathsf{T}$, which appropriately approximates the matrix operations associated with the logic gates applied during the execution of $\mathsf{T}$ such that 

\begin{equation}
    \label{eq:classicalsimulator}
    \mathsf{T}(x;\rho_{\textnormal{in}}) = (y; \rho_{\textnormal{out}}) \Leftrightarrow \Tilde{\mathsf{T}} \bigl( (x, \mathbold{\rho}_{\textnormal{in}}) ; \epsilon \bigr)= \bigl( (y, \mathbold{\rho}_{\textnormal{out}}) ; \epsilon \bigr),
\end{equation}
whenever the respective string encodings $\mathbold{\rho}_{\textnormal{in}}$ and $\mathbold{\rho}_{\textnormal{out}}$ exist (that is, whenever the matrices $\mathbold{\rho}_{\textnormal{in}}$ and $\mathbold{\rho}_{\textnormal{out}}$ are computable).

%%%%%%%%%%%%%%%%%% Preparable and computable states %%%%%%%%%%%%%%%%%%%%%

\section{Computability of quantum states}
\label{sec:prepvscomp}

\subsection{Directly computable states}

\begin{definition}
\label{def:prepstate}
A quantum state $\rho$ is  \textit{directly dcq-computable} whenever it can be outputted by a dcq-TM, provided the machine started the computation with no auxiliary input i.e.,
\begin{equation}
 \exists \mathsf{T}: \mathsf{T}(\blk;\epsilon)=(\blk;\rho).
\end{equation}
The set of \textit{directly dcq-computable} states is denoted by $\mathsf{DCOMP_{q}}$.
\end{definition}

Denote by $\mathsf{T}(x;\sigma)\!\!\downarrow _{t}$ a particular (defined constructively from $\mathsf{T}$ and $t$) dcq-TM that simulates $\mathsf{T}(x;\sigma)$, except that if at the step $t$ it has not halted, if forces it to halt, then outputs the output of the simulated machine (according to the rules specified in Section~\ref{sec:dcqtm}). $\mathsf{T}^{(C)}(x;\sigma)\!\!\downarrow_{t}$ and $\mathsf{T}^{(Q)}(x;\sigma)\!\!\downarrow_{t}$ are defined analogously.
Consider now Algorithm~\ref{alg:prepstategenerator} which, given a universal dcq-TM $\mathsf{U}$, uses Cantor's zig-zag method to assign to each non-negative integer $s$ a different directly dcq-computable state associated with running $\mathsf{U}$ for some combination of input and number of steps, which is different for every value of $s$. Because the output of a dcq-TM is well defined for any state of its tapes, each value of $s$ is uniquely associated with a directly dcq-computable state through Algorithm~\ref{alg:prepstategenerator}. Additionally, because every directly dcq-computable state $\rho$ is associated with $\mathsf{U}$ running \emph{some} program for \emph{some} number of steps, we know that there exists a value of $s$ for which Algorithm~\ref{alg:prepstategenerator} outputs $\rho$. This means that the set of states that can be returned by Algorithm~\ref{alg:prepstategenerator} coincides set of directly dcq-computable states. For a fixed reference universal dcq-TM, the quantum output of Algorithm~\ref{alg:prepstategenerator} defines a surjective function $\Pi_1(s)$ from the set of non-negative integers to $\mathsf{DCOMP_{q}}$, that is, an enumeration of it.

\begin{center}
\begin{algorithm}[H]
\caption{directly dcq-computable state generator algorithm.}
\label{alg:prepstategenerator}
 \textbf{Parameters:} Universal dcq-TM $\mathsf{U}$ \; 
 \textbf{Input:} Integer $s \geq 0$\; 
 Set $n$ equal to the smallest natural number such that $s < {(n+1)(n+2)}/{2}$\;
 %\While{$s > \frac{(n+1)(n+2)}{2}$}{$n=n+1$\;}
 \eIf{$n \leq 1$}{$m=s-n$\;}{$m = s \mod ({n(n+1)}/{2})$\;}
 $\sigma = \mathsf{U}^{(Q)}(n-m;\epsilon)\!\!\downarrow_{m}$\;
 \Return{$(\blk; \sigma)$}\;
\end{algorithm}
\end{center}

\subsection{Computable states}
\begin{definition} 
\label{def:compstate}
A quantum state $\rho$ is \textit{dcq-computable} if there exists a dcq-TM $\mathsf{T}$ and an infinite sequence $\lbrace \sigma_i \rbrace_i$ of directly dcq-computable states such that:
    \begin{enumerate}
        \item $\mathsf{T}(k;\epsilon) = (\blk; \sigma_k)$, i.e., upon classical input $k$ and no quantum input, the machine $T$ outputs the state $\sigma_k$.
        \item $D(\rho, \sigma_k) \leq {1}/{k}$, where $D(\cdot)$ denotes the trace distance of operators acting on the respective Hilbert space.
    \end{enumerate}
We denote the set of all dcq-computable quantum states by $\mathsf{COMP_{q}}$.
\label{def:qcomp}
\end{definition}

Consider the set $\mathsf{COMP_{c}}$ of quantum states for which their density matrix representation, when written in the computational basis, has computable complex numbers as components. This set is relevant because it represents the set of quantum states that can be written in a classical computer. Let us state a pair of algorithms that will be useful in finding relationships between quantum states and their representation in the context of the dcq-TM model. Since our machine model is controlled by classical programs, we would expect that $\mathsf{COMP_{q}}$, the set of computable states as defined by Def.~\ref{def:qcomp} and the set $\mathsf{COMP_{c}}$ coincide. This is indeed the case as shown below.

\begin{theorem}
\label{thm:CQequivalence}
$\mathsf{COMP_{q}} = \mathsf{COMP_{c}}$
\end{theorem}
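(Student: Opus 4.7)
The strategy is to establish both inclusions separately, each passing constructively between a classical matrix description and a dcq-TM computation.

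For the inclusion $\mathsf{COMP_{c}} \subseteq \mathsf{COMP_{q}}$, the plan is: given $\rho$ with computable classical representation $\mathbold{\rho}$ on $n$ qubits, build a single dcq-TM $\mathsf{T}$ which on classical input $k$ outputs a directly dcq-computable state $\sigma_{k}$ within trace distance $1/k$ of $\rho$. First I would compute $\mathbold{\rho}$ classically to a sufficient precision, extract a purification $\ket{\Psi}$ on $2n$ qubits (whose amplitudes are computable from, e.g., the spectral decomposition of $\mathbold{\rho}$), and decompose the standard state-preparation circuit that maps $\ket{0}^{\otimes 2n}$ into $\ket{\Psi}$ into a finite sequence of two-qubit unitaries. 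I would then invoke Solovay--Kitaev to compile each such unitary into a word over the universal set $\qU$ with total operator-norm error at most $1/(2k)$, and execute the compiled word on the quantum tape, declaring only the system register (the first $n$ cells) as the quantum output. Contractivity of trace distance under partial trace yields $D(\rho,\sigma_{k}) \leq 1/k$, and hard-coding any specific $k$ into $\mathsf{T}$ shows that each $\sigma_{k}$ is itself directly dcq-computable, as required by Def.~\ref{def:compstate}.

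For the inclusion $\mathsf{COMP_{q}} \subseteq \mathsf{COMP_{c}}$, the plan is to use the classical simulator $\tilde{\mathsf{T}}$ guaranteed by Eq.~(\ref{eq:classicalsimulator}). Given a dcq-TM $\mathsf{T}$ producing $\sigma_{k}$ on input $k$, and recalling that every gate in $\qU$ has computable matrix entries, $\tilde{\mathsf{T}}$ computes $\mathbold{\sigma}_{k}$ uniformly in $k$ to arbitrary entrywise precision. Combining the trace-distance bound $D(\rho,\sigma_{k}) \leq 1/k$ with the elementary inequality $|\rho_{ij} - \sigma_{k,ij}| \leq 2\,D(\rho,\sigma_{k})$, I obtain a classical algorithm that, given target precision $1/N$, picks $k$ with $4/k \leq 1/N$, runs $\tilde{\mathsf{T}}$ to approximate $\mathbold{\sigma}_{k}$ entrywise to precision $1/(2N)$, and returns the resulting matrix as an approximation to $\mathbold{\rho}$. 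This exhibits $\mathbold{\rho}$ as computable in the sense of Def.~\ref{def:realcomputability}, placing $\rho$ in $\mathsf{COMP_{c}}$.

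The main obstacle is the forward direction: arranging a single classical control algorithm that, uniformly in $k$, carries out eigendecomposition of $\mathbold{\rho}$, construction of a purification, universal circuit synthesis, and Solovay--Kitaev compilation, while propagating the error budget so that the final trace-distance bound on the reduced state is exactly $1/k$. Once such an algorithm is in place, feeding the compiled word of $\qU$-gates to the dcq-TM is immediate from the machine's definition. The converse direction is comparatively routine bookkeeping once the classical-simulator equivalence of Eq.~(\ref{eq:classicalsimulator}) has been invoked.
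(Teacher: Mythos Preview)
Your reverse inclusion $\mathsf{COMP_{q}} \subseteq \mathsf{COMP_{c}}$ is essentially the paper's: run the classical simulator $\tilde{\mathsf{T}}$ to obtain the matrices $\mathbold{\sigma}_k$ and pass to the limit to conclude that $\mathbold{\rho}$ is computable.

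For $\mathsf{COMP_{c}} \subseteq \mathsf{COMP_{q}}$ you take a genuinely different route. The paper does not synthesise a preparation circuit; instead it gives a search procedure (Algorithm~\ref{alg:classicaltoquantum}) that, given the classically computed target matrix $\mathbold{\sigma}_{2k}$, scans the enumeration $\tilde{\Pi}_1$ of directly dcq-computable states until one within trace distance $1/(2k)$ is found, then outputs the corresponding physical state $\Pi_1(s)$; termination follows from density of $\mathsf{DCOMP_q}$ (universality of $\qU$), and the triangle inequality delivers $D(\rho,\sigma_k')\leq 1/k$. Your approach instead compiles explicitly via purification, state-preparation decomposition, and Solovay--Kitaev. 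Both are valid. The paper's search is shorter and sidesteps precisely the error-propagation bookkeeping you flag as the main obstacle; moreover, Algorithm~\ref{alg:classicaltoquantum} is reused verbatim in later arguments (Theorem~\ref{thm:KCQequivalence}, Eq.~\eqref{eq:freecopy}), so the search formulation earns its keep. Your explicit compilation, by contrast, would yield effective gate-count bounds as a function of $k$. One small caution on your sketch: eigenvectors are not uniformly computable at degeneracies, so for the purification step it is safer to pass through the computable matrix square root and form the canonical purification $\sum_{i,j}(\sqrt{\rho})_{ij}\ket{i}\ket{j}$ rather than to rely on the spectral decomposition per se.
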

\begin{proof}\ \\
($\mathsf{COMP_{q}} \subseteq \mathsf{COMP_{c}}$):

\noindent Let $\rho \in \mathsf{COMP_{q}}$, and  $\lbrace \sigma_i \rbrace_i$ a sequence which satisfies Def.~\ref{def:qcomp}. From the definition of $\mathsf{COMP_{c}}$, to show that $\rho \in \mathsf{COMP_{c}}$ we must show that $\mathbold{\rho}$ is computable. Note that the set $\mathds{U}$ consists on matrices with computable components. It follows that the associated sequence of matrices $\lbrace \mathbold{\sigma}_i \rbrace_i$ is computable since, for any dcq-TM $\mathsf{T}$ such that $\mathsf{T}(k;\epsilon) = (\blk; \sigma_k)$, the classical machine $\Tilde{\mathsf{T}}$ simulates the evolution of the initial state though the quantum circuit specified by $\mathsf{T}$ and computes the components $\mathbold{\sigma}_n$. By recalling that the limit of any classically computable sequence is computable, it follows that $\lim_{n \rightarrow \infty} \mathbold{\sigma}_n = \mathbold{\rho}$ is computable and thus, $\rho \in \mathsf{COMP_{c}}$. 
\vspace{5mm} 

\noindent($\mathsf{COMP_{c}} \subseteq \mathsf{COMP_{q}}$): \\
Let $\mathbold{\rho}$ be a density matrix with computable components. From the definition of computability of real numbers, this means that there exists a set $\left\{ f_{rs}^k \right\}_{r,s,k}$, of computable functions over $\mathds{N}$ such that the sequence of matrices ${\mathbold{\sigma}_k} = {(g_{rs}(k))}$, where $g_{rs}(k)= {f_{rs}^0(k)} + i {f_{rs}^1(k)}$, satisfies
\begin{equation}
    D(\mathbold{\sigma}_k, \mathbold{\rho}) \leq \frac{1}{k}.
\end{equation}

\begin{center}
\begin{algorithm}[H]
\caption{Given a density matrix $\mathbold{\sigma}$ and integer $k$, finds a directly dcq-computable state that is closer to $\sigma$ than ${1}/{(2k)}$.}
\label{alg:classicaltoquantum}
\textbf{Parameters:} Universal dcq-TM $\mathsf{U}$ \;
\textbf{Input:} Integer $k$, $2^{n} \times 2^{n}$ density matrix $\mathbold{\sigma}$\; 
$s \leftarrow 0$\;
\While{$\mathbold{\sigma}' = \Tilde{\Pi}_1(s)$ is not an $2^{n} \times 2^{n}$ matrix such that $D(\mathbold{\sigma}' , \mathbold{\sigma} ) \leq {1}/{(2k)}$}{
    $s \leftarrow s+1$\;
}
$\sigma' \leftarrow \Pi_1(s)$\;
\Return{$(s;\sigma')$}
\end{algorithm}
\end{center}

We want to use $\lbrace \mathbold{\sigma}_k \rbrace$ to construct a (computable) sequence $\lbrace \sigma'_k \rbrace$ of states associated with directly dcq-computable states that satisfy Def.~\ref{def:qcomp}. Consider the following construction for $\sigma'_k$: Fix a universal dcq-TM $\mathsf{U}$ and run Algorithm~\ref{alg:classicaltoquantum} with input $((k, \mathbold{\sigma}_{2k});\epsilon)$ and define $\sigma'_k$ as its quantum output. We know that Algorithm~\ref{alg:classicaltoquantum} will always find such state because the set of gates that a dcq-TM has access is universal, and hence the set of directly dcq-computable states is dense in the set of quantum states. To prove that $\rho \in \mathsf{COMP_{q}}$ it only remains to show that the sequence $\lbrace \sigma_i \rbrace$ satisfies Def.~\ref{def:qcomp}(2). Indeed, we have that
\begin{equation}
    D( \sigma_{k}' , \rho) \leq D( \sigma_{k}' ,  \sigma_{2k} ) + D( \sigma_{2k} , \rho) \leq \frac{1}{2k} + \frac{1}{2k} = \frac{1}{k}.
\end{equation}

\end{proof}

%%%%%%%%%%%%%%%%%% Kolmogorov Complexity in the dcq-TM Model %%%%%%%%%%%%%%%%%%%%%
\section{Algorithmic complexity in the dcq-TM model}
\label{sec:quantumk}

Having defined the features of the dcq-TM, we proceed to introduce two (non-equivalent) notions of the concept of Kolmogorov complexity. The first one is a generalization of the one proposed in~\cite{pmat:acs:asouto:17} for general directly dcq-computable states:
\begin{definition}
\label{def:kprep}
Let $\rho$ be a directly dcq-computable quantum state and $\mathsf{T}$ be a dcq-TM. The basic algorithmic complexity of $\rho$ is
\begin{equation}
\K^{d}_{\mathsf{T}}(\rho) = \min_p \left\{|p|: \mathsf{T}^{(Q)}(p;\epsilon) = \rho \right\}.
\end{equation}
\end{definition}

This definition is limited in that it is only applicable to directly dcq-computable states (denoted by the superscript $d$), which are dependent on the specific chosen universal set of gates $\qU$ in Def.~\ref{def:dcqtm}. We are interested in extending the concept of algorithmic complexity for dcq-computable states.

\begin{definition}
\label{def:kcomp}
Let $\rho$ be a quantum state and $\mathsf{T}$ be a dcq-TM. The algorithmic complexity of $\rho$ is given by
\begin{equation}
\label{eq:kcomp}
\K_{\mathsf{T}}(\rho) = \inf \left\{ |p|: \forall k \in \mathbb{N}
\:\: D \left( \mathsf{T}^{(Q)}\bigl( (p, k);\epsilon \right), \rho \Bigr) \leq \frac{1}{k}\right\}
\end{equation}
\end{definition}

 \begin{definition}
\label{def:kcomp_2}
Let $x$ be a binary string, $\rho, \sigma$ be two quantum states, and $\mathsf{T}$ be a dcq-TM. The conditional algorithmic complexity of $\rho$ given the pair $(x;\sigma)$ is
\begin{equation}
    \K_{\mathsf{T}}(\rho \:|\: x, \sigma) = \inf \left\{ |p|: \forall k \in \mathbb{N}
\:\: D\Bigl(\mathsf{T}^{(Q)} \left( (p, x , k); \sigma) \right), \rho \Bigr) \leq \frac{1}{k} \right\}.
\end{equation}
\end{definition}

In other words, we define the complexity of a quantum state as the infimum length of the {\em classical description} of the sequence that converges to such a state. Note that, because Equation~\eqref{eq:kcomp} considers the infimum of the set, states that are not dcq-computable are assigned infinite complexity. This definition is analogous to the standard classical complexity for real numbers (see Def.~\ref{def:kreals}) and it is machine independent in the following sense: 
\begin{property}
    \label{pr:constantoverhead}
    Let $\mathsf{T}$ and $\mathsf{T}'$ be universal dcq-TM, then for any string $x$ and quantum states $\rho, \sigma$ it holds that
    \begin{equation}
        \K_{\mathsf{T}}(\rho\:|\:x;\sigma) = \K_{\mathsf{T}'}(\rho\:|\:x;\sigma) + O(1)
    \end{equation}
\end{property}
\begin{proof}
    Let $s$ be a map that translates $\mathsf{T}'$-programs to $\mathsf{T}$-programs and $c\in \nats$ be such that $|s(p)| \leq |p| + c$ for every $p\in\{0,1\}^*$. Then, 
\begin{align}
    \K_{\mathsf{T}'}(\rho\:|\: x;\sigma) + c  &= \inf \left\{ |p|: \forall k \in \mathbb{N}
\:\: D\left(\mathsf{T'}^{(Q)} \left( (p, x , k); \sigma) \right), \rho \right) \leq \frac{1}{k} \right\} + c \nonumber \\
    &= \inf \left\{ |p|: \forall k \in \mathbb{N}
\:\: D\left(\mathsf{T}^{(Q)} \left( (s(p), x , k); \sigma) \right), \rho \right) \leq \frac{1}{k} \right\} + c \nonumber \\
    &\geq \inf \left\{ |s(p)|: \forall k \in \mathbb{N}
\:\: D\left(\mathsf{T}^{(Q)} \left( (s(p), x , k); \sigma) \right), \rho \right) \leq \frac{1}{k} \right\}  \nonumber \\
    &\geq \inf \left\{ |p|: \forall k \in \mathbb{N}
\:\: D\left(\mathsf{T}^{(Q)} \left( (p, x , k); \sigma) \right), \rho \right) \leq \frac{1}{k} \right\}  \nonumber \\
    &= \K_{\mathsf{T}}(\rho\:|\: x;\sigma).
\end{align}
By a symmetric argument, since $\mathsf{T}'$ is also assumed to enjoy the \smn\ property, there is $c'\in\nats$ such that:
\begin{equation}
\K_{\mathsf{T}}(\rho\:|\:x;\sigma) + c' \geq \K_{\mathsf{T}'}(\rho\:|\:x;\sigma).
\end{equation}
\end{proof}

\begin{remark}
   From Property~\ref{pr:constantoverhead} we can talk about the machine independent algorithmic complexity $\K$ (without subscript) and consider complexities that are equal up to an additive constant to be equivalent.
\end{remark}

We can turn our attention now into exploring the relationship between the complexity of a quantum state and its respective classical representation. As the following theorem states, they turn out to be same.
\begin{theorem}
    \label{thm:KCQequivalence}
    For every dcq-TM $\mathsf{T}$ and quantum state $\rho$ it holds that
    \begin{equation}
        \label{eq:KCQequivalence}
        \K(\mathbold{\rho}) = \K(\rho).
    \end{equation}
    \end{theorem}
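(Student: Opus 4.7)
The plan is to prove the equality by establishing the two inequalities $\K(\rho) \leq \K(\mathbold{\rho}) + O(1)$ and $\K(\mathbold{\rho}) \leq \K(\rho) + O(1)$, which together (recalling our convention that complexities equal up to an additive constant are identified) give the stated identity. Both directions are essentially effective versions of the two inclusions proved in Theorem~\ref{thm:CQequivalence}, so the strategy is to show that the translations used there add only constant overhead to the program length.

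For the direction $\K(\mathbold{\rho}) \leq \K(\rho) + O(1)$, I would let $p^*$ be a program of length $\K(\rho)$ running on a universal dcq-TM $\mathsf{U}$ such that $\mathsf{U}^{(Q)}\bigl((p^*,k);\epsilon\bigr)$ is a directly dcq-computable state $\sigma_k$ with $D(\sigma_k, \rho) \leq 1/k$. Using the classical simulator construction of Equation~\eqref{eq:classicalsimulator}, there is a dcq-TM $\Tilde{\mathsf{U}}$ that, on the same classical program $p^*$, outputs the classical representation $\mathbold{\sigma}_k$ of $\sigma_k$; since every gate in $\qU$ has computable matrix entries, this simulation is effective. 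Because trace distance of density matrices equals trace distance of the states they represent, $D(\mathbold{\sigma}_k, \mathbold{\rho}) \leq 1/k$. Thus a program of size $|p^*| + O(1)$, which prepends to $p^*$ the fixed (constant-size) routine invoking $\Tilde{\mathsf{U}}$, witnesses $\K(\mathbold{\rho}) \leq \K(\rho) + O(1)$.

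For the direction $\K(\rho) \leq \K(\mathbold{\rho}) + O(1)$, I would start from a program $q^*$ of length $\K(\mathbold{\rho})$ that, on input $(q^*, k)$, outputs a classical representation $\mathbold{\sigma}_k$ satisfying $D(\mathbold{\sigma}_k, \mathbold{\rho}) \leq 1/k$. Build a dcq-TM that, on input $(q^*, k)$, first simulates the generation of $\mathbold{\sigma}_{2k}$ and then feeds $(k, \mathbold{\sigma}_{2k})$ into Algorithm~\ref{alg:classicaltoquantum}, whose quantum output $\sigma'_k$ is a directly dcq-computable state with $D(\sigma'_k, \sigma_{2k}) \leq 1/(2k)$. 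By the triangle inequality,
\begin{equation}
D(\sigma'_k, \rho) \leq D(\sigma'_k, \sigma_{2k}) + D(\sigma_{2k}, \rho) \leq \frac{1}{2k} + \frac{1}{2k} = \frac{1}{k}.
\end{equation}
The wrapper that composes $q^*$ with Algorithm~\ref{alg:classicaltoquantum} and the index-doubling $k \mapsto 2k$ is of constant size, so the resulting program witnesses $\K(\rho) \leq |q^*| + O(1) = \K(\mathbold{\rho}) + O(1)$.

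The only subtle point, and the place where care is needed, is ensuring that the two translations really have constant size independent of $\rho$: in the first direction one must rely on the fact that the gates of $\qU$ are described by a finite fixed table of computable matrix entries (so $\Tilde{\mathsf{U}}$ is a single fixed machine), and in the second direction one must rely on the fact that $\qU$ being universal makes directly dcq-computable states dense, guaranteeing the termination of Algorithm~\ref{alg:classicaltoquantum} on every input. Both are already established in Section~\ref{sec:dcqtm} and in the proof of Theorem~\ref{thm:CQequivalence}, so no further work is required beyond book-keeping the constants and appealing to machine independence of $\K$.
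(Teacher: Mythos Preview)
Your proposal is correct and follows essentially the same approach as the paper: the paper's proof simply states that the result follows from the algorithms in the proof of Theorem~\ref{thm:CQequivalence}, with the constant overhead being the length of each algorithm's specification, and you have spelled out precisely those two translations (classical simulator $\Tilde{\mathsf{U}}$ in one direction, Algorithm~\ref{alg:classicaltoquantum} with the index-doubling in the other) together with the triangle-inequality bookkeeping.
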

    \begin{proof}
    The result follows directly from the algorithms in the proof of Theorem~\ref{thm:CQequivalence}, with the constant term being the length of each algorithm's specification.
    \end{proof}
An immediate implication of Theorem~\ref{thm:KCQequivalence} is that, since the components of a density matrix can have arbitrarily high complexity even for single qubit quantum systems, there is no upper bound for the complexity of an $n$-qubit quantum state. This is in contrast to the complexity of $n$-bit strings, which is upper bounded by $n$.\footnote{This is also a significant way in which this definition differs from Vitányi's approach, which is also based on classical descriptions, and for which the complexity of an $n$-bit pure state is upper bounded by $2n$ (see Theorem 3 from \cite{vit:01})}

Does Theorem~\ref{thm:KCQequivalence} mean that the information value of the state of a quantum system exactly coincides with that of the numbers in its density matrix? The reason the two quantities in Equation~(\ref{eq:KCQequivalence}) coincide is because we are quantifying the complexity of a quantum system using a classical string (the description of the program). Note that having access to the description of a program that computes a given quantum state is not the same as having a copy of the quantum state itself. To get a little better insight about how these two types of objects differ we need to consider their use in the role of a {\em resource}.

\subsection{Conditional complexity and state cloning}
Let us continue by evidencing a consequence of the no-cloning theorem. 
Since the machine receives a specification of the size of its quantum input, we know that for any state $\rho$ it holds that $\K(\rho \:|\: \blk;\rho)=0$, as the program that outputs the quantum input has constant length. On the other hand $\K(\rho \otimes \rho \:|\:\blk; \rho)\neq 0$, as there is not a single machine that can copy arbitrary quantum states. In the worst case scenario, such machine would construct the second copy of $\rho$ from scratch, which means that the complexity of duplicating a physical state is upper bounded by the complexity of the state itself:
\begin{align}
    \label{eq:computetoduplicate}
    \K(\rho \otimes \rho \:|\: \blk; \rho)\leq \K(\rho). 
\end{align}
This is in contrast to the case in which one would want to copy a classical input, in which case it is always possible to output twice the input. If one had \textit{a priori} access to the classical description of the state, it is possible to run Algorithm~\ref{alg:classicaltoquantum} repeatedly to compute several copies of $\rho$. Therefore, copying a quantum state given its associated density matrix $\mathbold{\rho}$ has constant complexity:
\begin{align}
\label{eq:freecopy}
 \K(\rho \otimes \rho \:|\: \mathbold{\rho}; \epsilon) = 0.
\end{align}
This exercise suggests that the classical description of a quantum state contains ``more" descriptive information that the physical state itself, and thus we can conclude that 
\begin{property}
\label{thm:condCQnonequivalence}
For any dcq-computable quantum states $\rho, \sigma$ it holds that
\begin{equation}
    \K(\rho \:|\: \mathbold{\sigma}; \epsilon) \leq \K(\rho \:|\: \blk; \sigma).
\end{equation}
\end{property}

We now turn our attention to answering if, for arbitrary quantum states, the complexity of copying the state is the same as computing it. In other words, we want to know if the converse of Equation~(\ref{eq:computetoduplicate}) holds. Intuitively, this would mean that any machine that computes the duplicate of a state $\rho$ would necessarily have the descriptive information of that state somehow encoded within it. In order to tackle this question, let us first state a couple of lemmas, the proofs of which can be found in the Appendix. Recall that the fidelity between two quantum states $\rho$ and $\sigma$ is defined as
\begin{equation}
    F(\rho, \sigma) = \left( \tr\sqrt{\sqrt{\rho} \sigma \sqrt{\rho}} \right)^2.
\end{equation}

\begin{lemma}
\label{lem:quasiduplicationdifelity}
Let $T:\mathcal{D}(\mathcal{H})\rightarrow \mathcal{D}(\mathcal{H}^{\otimes 2})$ be a CPTP map, i.e., a completely positive and trace-preserving map. Define the {\em copying error} $\mathtt{CE}_T$ of the transformation $T$ on the state $\rho \in \mathcal{D}(\mathcal{H})$ as
\begin{equation}
    \mathtt{CE}_{T}(\rho) = \sqrt{1-F(\rho^{\otimes 2}, T(\rho))}.
\end{equation}
For any $\rho_1, \rho_2 \in \mathcal{D}(\mathcal{H})$ such that $\mathtt{CE}_{T}(\rho_1) + \mathtt{CE}_{T}(\rho_2) \leq \varepsilon \leq \frac{1}{4}$ it holds that
\begin{equation}
    F(\rho_1, \rho_2) \leq \frac{1}{2} - \sqrt{\frac{1}{4} - \varepsilon} \;\; \; \textnormal{or} \;\;\; F(\rho_1, \rho_2) \geq \frac{1}{2} + \sqrt{\frac{1}{4} - \varepsilon}.
\end{equation}
\end{lemma}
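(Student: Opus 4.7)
The plan is to exploit the fact that the \emph{Bures angle} $A(\rho,\sigma)=\arccos\sqrt{F(\rho,\sigma)}$ is a genuine metric on density operators, together with three properties that fit snugly with the cloning set-up: (i) contractivity under CPTP maps, which gives $A(T(\rho_1),T(\rho_2))\leq A(\rho_1,\rho_2)$; (ii) multiplicativity of fidelity over tensor products, $F(\rho_1^{\otimes 2},\rho_2^{\otimes 2})=F(\rho_1,\rho_2)^2$; and (iii) the identity $A(\rho_i^{\otimes 2},T(\rho_i))=\arcsin(\mathtt{CE}_T(\rho_i))$, which follows directly from the definition of the copying error.

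Writing $f = F(\rho_1,\rho_2)$ and $c_i = \mathtt{CE}_T(\rho_i)$, I would chain the triangle inequality from $\rho_1^{\otimes 2}$ to $\rho_2^{\otimes 2}$ through the two intermediate points $T(\rho_1)$ and $T(\rho_2)$, apply contractivity to the middle segment, and invoke (ii)--(iii) to rewrite the remaining angles. This yields
\begin{equation*}
\arccos(f) - \arccos(\sqrt{f}) \;\leq\; \arcsin(c_1) + \arcsin(c_2).
\end{equation*}
Because $\arcsin$ is convex on $[0,1]$, under the constraint $c_1+c_2\leq\varepsilon$ the right-hand side is maximised at a vertex, giving the clean upper bound $\arcsin(\varepsilon)$. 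Taking sines of both sides (both lie in $[0,\pi/2]$ for $\varepsilon\leq 1/4$) and expanding via the subtraction formula produces the trigonometric constraint
\begin{equation*}
\sqrt{f(1-f)}\,\bigl(\sqrt{1+f}-\sqrt{f}\bigr) \;\leq\; \varepsilon.
\end{equation*}

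The remaining task is to convert this to the clean quadratic bound $f(1-f)\leq\varepsilon$, which rearranges directly to $(f-\tfrac12)^2\geq \tfrac14-\varepsilon$ and so to the two-interval disjunction in the statement. For this I would rationalise the conjugate to rewrite the left-hand side as $\sqrt{f(1-f)}/(\sqrt{1+f}+\sqrt{f})$ and then invoke the elementary inequality $\sqrt{f(1-f)}\,(\sqrt{1+f}+\sqrt{f})\leq 1$ valid for $f\in[0,1]$. Upon squaring and substituting $t=\sqrt{1-f}$ this becomes a quartic $t(1-t)(1+t)(t+2)\leq 1$ whose unique interior critical point on $[0,1]$ attains the value $1$ exactly, at $t=(\sqrt{5}-1)/2$. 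I expect this final calibration to be the main obstacle: the Bures-angle chain naturally produces a slightly awkward trigonometric quantity rather than $f(1-f)$, and bridging the two requires this elementary but not entirely obvious polynomial bound.
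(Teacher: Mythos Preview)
Your proposal is correct and follows essentially the same route as the paper: Bures-angle triangle inequality from $\rho_1^{\otimes 2}$ to $\rho_2^{\otimes 2}$ through $T(\rho_1),T(\rho_2)$, contractivity on the middle segment, and multiplicativity of fidelity, yielding the same trigonometric constraint which is then weakened to $f(1-f)\leq\varepsilon$. The only differences are cosmetic---you bound $\arcsin c_1+\arcsin c_2$ via convexity where the paper takes sines first and uses $\sin(a+b)\leq\sin a+\sin b$---and your quartic verification of the bridging inequality $\sqrt{f(1-f)}(\sqrt{1+f}-\sqrt{f})\geq f(1-f)$ is considerably more careful than the paper's one-line ``holds because $F\in[0,1]$''.
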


\begin{lemma}
\label{lem:maxquasiduplication}
Let $\mathcal{H}$ be an $N$-dimensional Hilbert space and $A = \lbrace \rho_i \in \mathcal{D}(\mathcal{H})\rbrace$ be a set of density operators such that for $i \neq j$:
\begin{equation}
    \label{eq:maxquasiduplication}
    F(\rho_i, \rho_j) \leq \delta_0(N),
\end{equation}
with
\begin{equation}
    \delta_0(N) =\frac{1}{N^2}\left( \frac{1-\sqrt{1-\frac{1}{N}}}{1+2^{\left(\frac{3^N+1}{2}\right)}} \right)^4.
\end{equation}
Then, $A$ cannot have more than $N$ elements. 
\end{lemma}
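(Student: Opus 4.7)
The plan is to proceed by contradiction: assume $A$ contains $N+1$ density operators $\rho_1, \dots, \rho_{N+1}$ with pairwise fidelity at most $\delta_0(N)$, and derive an impossibility via Lemma~\ref{lem:quasiduplicationdifelity}. The broad idea is that near-orthogonality of the $\rho_i$'s should make them simultaneously approximately cloneable by a single CPTP map, while Lemma~\ref{lem:quasiduplicationdifelity} combined with the dimensional restriction of $\mathcal{H}$ should forbid such a cloner from existing when there are too many states.

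First I would construct an explicit CPTP map $T: \mathcal{D}(\mathcal{H}) \to \mathcal{D}(\mathcal{H}^{\otimes 2})$ with small copying error on every $\rho_i \in A$. The natural candidate is a measure-and-prepare channel: use an (approximately) optimal distinguishing POVM $\{E_1,\dots,E_{N+1}\}$ on the input, and upon outcome $k$ re-prepare two fresh copies of $\rho_k$ from a stored classical description. Such a POVM can be built from the dominant eigenvectors $|\psi_i\rangle$ of the $\rho_i$'s, which carry eigenvalue at least $1/N$; from $F(\rho_i,\rho_j) \geq \operatorname{tr}(\rho_i\rho_j) \geq (1/N^2)\,|\langle\psi_i|\psi_j\rangle|^2$ one controls the off-diagonal entries of the Gram matrix $G_{ij}=\langle\psi_i|\psi_j\rangle$, which in turn bounds the error of the Gram-inverse POVM associated with the $|\psi_i\rangle$'s. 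The factor $1/N^2$ and the fourth-power structure in $\delta_0(N)$ should emerge when one converts this Gram-matrix bound into a fidelity bound on $T(\rho_i)$ versus $\rho_i^{\otimes 2}$, since the $\mathtt{CE}$ functional is a square root of a fidelity complement.

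Once the cloner $T$ is in place, applying Lemma~\ref{lem:quasiduplicationdifelity} to every pair $(\rho_i,\rho_j)$ yields a dichotomy $F(\rho_i,\rho_j)\leq 1/2 - \sqrt{1/4-\varepsilon}$ or $F(\rho_i,\rho_j)\geq 1/2 + \sqrt{1/4-\varepsilon}$, for $\varepsilon$ determined by $\mathtt{CE}_T$. Since $F(\rho_i,\rho_j)\leq\delta_0(N)\ll 1/2$, all pairs fall in the low branch. The contradiction is then extracted from a dimensional obstruction: $N+1$ states pairwise satisfying the low branch cannot live in an $N$-dimensional Hilbert space, because the Gram matrix of their top eigenvectors $|\psi_i\rangle$ is a PSD $(N+1)\times(N+1)$ matrix of rank at most $N$, while Gershgorin applied to $I+E$ (with $|E_{ij}|\leq N\sqrt{\delta_0(N)}$ off-diagonal) forces it to be strictly positive definite when $\delta_0(N)$ is below the stated threshold.

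The main obstacle will be the quantitative calibration between the pairwise-fidelity hypothesis and the copying error of $T$, and in particular reconstructing the curious exponent $2^{(3^N+1)/2}$ in $\delta_0(N)$. My best guess is that this stems from an iterative/recursive construction of the distinguishing measurement: at each of up to $N$ levels one splits the remaining subspace into a ternary branching (close to $\rho_k$ / orthogonal to $\rho_k$ / ambiguous), producing $3^N$ measurement branches whose errors compound multiplicatively and must be absorbed into the permitted $\varepsilon\leq 1/4$ of Lemma~\ref{lem:quasiduplicationdifelity}. Getting the constants exactly right — in particular squaring the whole quantity once to pass from $\mathtt{CE}$ to fidelity, and again to pass from fidelity to trace distance — is what I expect to consume most of the technical work.
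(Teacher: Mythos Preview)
Your plan rests on a misconception about the role of Lemma~\ref{lem:quasiduplicationdifelity}. That lemma is \emph{not} used in the paper's proof of Lemma~\ref{lem:maxquasiduplication}; rather, Lemma~\ref{lem:maxquasiduplication} is a purely linear-algebraic fact that is later combined with Lemma~\ref{lem:quasiduplicationdifelity} inside the proof of Theorem~\ref{thm:qcloning}. Your proposed detour --- build a cloner $T$, apply the dichotomy, land in the low branch --- is circular: after all that work you would conclude $F(\rho_i,\rho_j)\leq \tfrac12-\sqrt{\tfrac14-\varepsilon}$, which is \emph{weaker} than the hypothesis $F(\rho_i,\rho_j)\leq\delta_0(N)$ you started from. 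Nothing is gained, and the cloner construction plays no role in reaching the eventual contradiction.

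The paper's actual argument is in two stages. First, a pure-state version: if $|\langle\psi_i|\psi_j\rangle|^2\leq\delta_0^{\mathrm{pure}}(N)$ for $i\neq j$, then the $\ket{\psi_i}$ are linearly independent, hence at most $N$ of them. This is proved by running Gram--Schmidt and tracking how the overlaps propagate; the recursion $a_{j+1}=a_j^2+a_j\leq 2a_j^2$ on the error coefficients is what produces the doubly-exponential $2^{3^N}$ term --- not any ternary measurement branching. Second, the mixed case is reduced to the pure one: from each $\rho_i$ pick an eigenvector with eigenvalue $\geq 1/N$, and use Uhlmann's theorem with a carefully chosen purification to convert the bound on $F(\rho_i,\rho_j)$ into a bound on the overlaps of those eigenvectors; this reduction costs the extra factor $1/N^2$ (after squaring). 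Your Gershgorin observation on the Gram matrix of dominant eigenvectors is in fact a legitimate (and arguably cleaner) substitute for the Gram--Schmidt step, but it stands on its own and does not require the cloner or Lemma~\ref{lem:quasiduplicationdifelity} at all; if you pursue it, you will recover the lemma with a \emph{less} restrictive threshold than the stated $\delta_0(N)$, and you should not expect to reproduce the $2^{(3^N+1)/2}$ constant.
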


\begin{lemma}
\label{lem:smallproperties}
For any CPTP map $T:\mathcal{D}(\mathcal{H}) \rightarrow \mathcal{D}\left(\mathcal{H}^{\otimes 2}\right)$:
\begin{enumerate}
    \item  $D\left(T(\rho), \rho^{\otimes 2}\right) \leq {1}/{k} \implies \mathtt{CE}_{T}(\rho) \leq \sqrt{{2}/{k}}$.
    \item $D\left(T(\rho), \rho^{\otimes 2}\right) \leq 3D(\rho, \rho') + D\left(T(\rho'), \rho'^{\otimes 2}\right)$.
\end{enumerate}
\end{lemma}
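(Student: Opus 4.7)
The plan is to prove each part independently using standard relations between trace distance and fidelity, together with the monotonicity of trace distance under CPTP maps. Both parts reduce to essentially one-line manipulations once the right inequalities are recalled, so the exposition will focus on pinning those down explicitly.

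For part (1), the strategy is to invoke the Fuchs--van de Graaf inequality $D(\rho,\sigma)\geq 1-\sqrt{F(\rho,\sigma)}$. First I would apply this to the pair $(T(\rho),\rho^{\otimes 2})$, concluding that $\sqrt{F(T(\rho),\rho^{\otimes 2})}\geq 1-\tfrac{1}{k}$. Squaring both sides and discarding the nonnegative $\tfrac{1}{k^2}$ term yields $F(T(\rho),\rho^{\otimes 2})\geq 1-\tfrac{2}{k}$, hence $1-F(T(\rho),\rho^{\otimes 2})\leq \tfrac{2}{k}$. Taking the square root recovers exactly $\mathtt{CE}_{T}(\rho)\leq\sqrt{2/k}$, closing the argument.

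For part (2), my plan is to insert two intermediate states via the triangle inequality for trace distance:
\begin{equation}
D\bigl(T(\rho),\rho^{\otimes 2}\bigr)\;\leq\; D\bigl(T(\rho),T(\rho')\bigr)+D\bigl(T(\rho'),\rho'^{\otimes 2}\bigr)+D\bigl(\rho'^{\otimes 2},\rho^{\otimes 2}\bigr).
\end{equation}
I would then bound each of the two outer terms. Monotonicity of trace distance under the CPTP map $T$ gives $D(T(\rho),T(\rho'))\leq D(\rho,\rho')$. For the tensor-product term, I would invoke the standard subadditivity inequality $D(\alpha\otimes\beta,\alpha'\otimes\beta')\leq D(\alpha,\alpha')+D(\beta,\beta')$ specialized to $\alpha=\beta=\rho$ and $\alpha'=\beta'=\rho'$, yielding $D(\rho^{\otimes 2},\rho'^{\otimes 2})\leq 2D(\rho,\rho')$. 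Summing the three bounds gives exactly $3D(\rho,\rho')+D(T(\rho'),\rho'^{\otimes 2})$.

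There is no serious obstacle here: both parts are consequences of textbook inequalities (Fuchs--van de Graaf, triangle inequality, CPTP contractivity, tensor subadditivity of trace distance). The only small care point is to quote the tensor subadditivity bound correctly, but this follows immediately from the triangle inequality applied to $\rho\otimes\rho$, $\rho\otimes\rho'$, and $\rho'\otimes\rho'$ together with the fact that trace distance is invariant under tensoring with a fixed state.
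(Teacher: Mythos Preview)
Your argument is correct. Part~(1) is exactly the Fuchs--van de Graaf bound $1-\sqrt{F}\leq D$ applied to $(T(\rho),\rho^{\otimes 2})$, and part~(2) is the natural triangle-inequality decomposition combined with contractivity of $D$ under $T$ and the tensor subadditivity $D(\rho^{\otimes 2},\rho'^{\otimes 2})\leq 2D(\rho,\rho')$; each step is valid as stated.

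As for the comparison: although the paper announces that the proofs of these lemmas are in the Appendix, no proof of Lemma~\ref{lem:smallproperties} actually appears there---only Lemmas~\ref{lem:quasiduplicationdifelity} and~\ref{lem:maxquasiduplication} are treated. The authors evidently regard this lemma as routine, and your write-up is the expected one.
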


\begin{theorem}
\label{thm:qcloning}
For any $n$-qubit quantum state $\rho$ it holds that
\begin{equation}
    \label{eq:duplicatetocompute}
    \K(\rho) \leq \K(\rho \otimes \rho \:|\: \blk; \rho)  + 2n + 2\log(n).
\end{equation}
\end{theorem}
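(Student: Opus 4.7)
The plan is to construct a classical program $p^*$ that, given the shortest duplicator program for $\rho \otimes \rho$ plus a short auxiliary description pointing at $\rho$, produces approximations of $\rho$ to arbitrary precision. Fix $q^*$ with $|q^*| = \K(\rho \otimes \rho \:|\: \blk; \rho)$, and for each precision $k'$ let $T_{k'}$ denote the CPTP map that the universal dcq-TM implements when it runs $q^*$ with precision parameter $k'$. By hypothesis $D(T_{k'}(\rho), \rho^{\otimes 2}) \leq 1/k'$ for every $k'$, so by Lemma~\ref{lem:smallproperties}(1) we have $\mathtt{CE}_{T_{k'}}(\rho) \leq \sqrt{2/k'}$. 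Applying Lemmas~\ref{lem:quasiduplicationdifelity} and~\ref{lem:maxquasiduplication} with $N = 2^n$, for $k'$ sufficiently large the directly dcq-computable $n$-qubit states $\sigma$ with $\mathtt{CE}_{T_{k'}}(\sigma) \leq \sqrt{2/k'}$ partition into at most $2^n$ fidelity-clusters separated by the threshold $\delta_0(2^n)$, and $\rho$ lies in one of these clusters.

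This motivates encoding $\rho$ relative to $q^*$ by the qubit count $n$ and an index $i \in \{1,\ldots,2^n\}$ picking out $\rho$'s cluster. I would define $p^*$ to take as classical input the tuple $(n, i, q^*)$ together with the precision parameter $k$, and to internally: (i) choose $k' = k'(k, n)$ large enough (a fixed polynomial in $k$ will suffice, as the calculation below indicates); (ii) classically simulate $T_{k'}$ via Theorem~\ref{thm:CQequivalence}; (iii) enumerate directly dcq-computable $n$-qubit states $\sigma_1, \sigma_2, \ldots$ through $\Pi_1$; (iv) maintain a list $L$ to which $\sigma_j$ is appended whenever $D(T_{k'}(\sigma_j), \sigma_j^{\otimes 2}) \leq 1/k'$ and $F(\sigma_j, \sigma) \leq \delta_0(2^n)$ for every previous $\sigma \in L$; and (v) halt and return the $i$-th entry of $L$. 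Lemma~\ref{lem:maxquasiduplication} caps $|L|$ at $2^n$, so an $n$-bit index suffices.

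For correctness, take $i_\rho$ to be $\rho$'s cluster index in the order generated by this process. By Lemma~\ref{lem:quasiduplicationdifelity}, the $i_\rho$-th entry of $L$ must share a fidelity-cluster with $\rho$, i.e., satisfy $F(\sigma, \rho) \geq \tfrac{1}{2} + \sqrt{\tfrac{1}{4} - \varepsilon}$ for the ambient bound $\varepsilon \sim 1/\sqrt{k'}$; choosing $k'$ polynomial in $k$ pushes this lower bound past $1 - 1/k^2$, and the Fuchs--van de Graaf inequality translates it into trace distance at most $1/k$. For the length, the paper's tuple encoding gives $|(n, i_\rho, q^*)| \leq 2\lceil \log n \rceil + 2n + |q^*| + O(1)$, and absorbing the constant cost of describing $p^*$ itself into the machine-independence of $\K$ yields $\K(\rho) \leq \K(\rho \otimes \rho \:|\: \blk; \rho) + 2n + 2\log(n)$.

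The hard part will be arranging the algorithm so that, for a single hardcoded index $i_\rho$, the $i_\rho$-th selection converges to $\rho$'s cluster uniformly in $k$ rather than drifting between clusters as $k$ grows: the order in which cluster representatives enter $L$ can a priori depend on $k'$, since different $\Pi_1$-indices pass the duplication test at different precisions. The likely fix is to use an internal reference precision $k'_0$ to fix a canonical order on the (at most $2^n$) clusters, then refine within $\rho$'s cluster as the target precision $1/k$ shrinks by retaining only $\Pi_1$-states whose fidelity with the cluster's reference representative exceeds $\delta_0(2^n)$; Lemma~\ref{lem:smallproperties}(2) controls how the copying-error test responds to perturbations of the input state, guaranteeing the refinement is well-defined.
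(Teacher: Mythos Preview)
Your proposal is correct and matches the paper's approach: the paper's Algorithm~\ref{alg:constructfromcopy} is precisely the two-phase procedure you arrive at in your final paragraph, with a fixed reference precision $k_0$ (depending only on $n$ via $\delta_0(2^n)$) used to build the cluster list and pin down the index $m\le 2^n$, followed by an output phase at a precision $k_{\textnormal{out}}(k,n)$ that searches for a state with small copying error under $T_{t,k_{\textnormal{out}}}$ and high fidelity with the reference representative $\rho_0=L[m]$. You correctly identified the key obstacle (cluster order drifting with $k$) and its resolution, and your length accounting agrees with the paper's.
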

\begin{proof}
To prove the result, we construct a family of algorithms (See Algorithm.~\ref{alg:constructfromcopy}) with three parameters: a program index $t$, a qubit number $n$, and a list index $m$. Then show that, given an $n$-qubit state $\rho$, there exist $t,m \in \mathbb{N}$ satisfying:
\begin{equation}
    |t| \leq K(\rho \otimes \rho | \blk; \rho) + c; \;\;\;\;\;\;\;\;\;\; |m| \leq n,
\end{equation}
where $c$ is a constant independent of $n$, such that the program specified by Algorithm~\ref{alg:constructfromcopy} with the parameters $(t,n,m)$ computes $\rho$. We can see Algorithm~\ref{alg:constructfromcopy} as a program that computes $\rho$ from $\epsilon$ using a program that computes $\rho \otimes \rho$ from $\rho$ as a subroutine.
\begin{center}
    \begin{algorithm}[H]
    \caption{Approximation of an $n$-qubit state given a subroutine that is known to be able to duplicate it and a specification index $m \leq \log(n)$.}
    \label{alg:constructfromcopy}
    \textbf{Parameters:} Universal dcq-TM $\mathsf{U}$. Integers $t,n,m$\; 
    \textbf{Input:} Integer $k$\; 
    \vspace{3mm}
    (List preparation phase)\\
    $L \leftarrow \left( \right); s \leftarrow 0; \varepsilon_0 \leftarrow \frac{1}{4}\delta_{0}(2^n)(1-\delta_{0}(2^n)); k_0 \leftarrow\left(\frac{4}{\varepsilon_0}\right)^2 $\;
    \While{The size of the list $|L| \leq m$}{
        $\mathbold{\sigma} \leftarrow \Tilde{\Pi}_1(s)$\;
        \If{$\mathbold{\sigma}$ is a $2^n \times 2^n$ matrix such that $\mathtt{CE}_{T_{t, k_0}}(\mathbold{\sigma}) \leq \frac{2}{\sqrt{k_0}} \And F(\mathbold{\sigma}, \mathbold{\sigma}') \leq \delta_0$ for all $\mathbold{\sigma}' \in L$}
        {Add $\mathbold{\sigma}$ to $L$;}
        $s \leftarrow s+1$\;
    }
    $\mathbold{\rho}_0 \leftarrow L[m]$  (the $m$-th element of the list $L$)\; 
    
    \vspace{3mm}
    (Output preparation phase)\\
    $s \leftarrow 0; \varepsilon_{\textnormal{out}} \leftarrow \min \left\{ \delta_{0}(2^n)(1-\delta_{0}(2^n)), \frac{1}{k^2}\left(1-\frac{1}{k^2}\right) \right\}; k_{\textnormal{out}} \leftarrow\left (\frac{4}{\varepsilon_{\textnormal{out}}}\right)^2 $\;
    \While{no matrix $\mathbold{\rho}_{\textnormal{out}}$ is found}{
        $\mathbold{\sigma} \leftarrow \Tilde{\Pi}_1(s)$\;
        \eIf{$\mathbold{\sigma}$ is a $2^n \times 2^n$ matrix such that $\mathtt{CE}_{T_{t, k_{\textnormal{out}}}}(\mathbold{\sigma}) \leq \frac{2}{\sqrt{k_{\textnormal{out}}}} \And F(\mathbold{\sigma}, \mathbold{\rho}_0) > \frac{16}{\sqrt{k_{0}}}$}
        {$\mathbold{\rho}_{\textnormal{out}} \leftarrow \mathbold{\sigma}$;}
        {$s \leftarrow s+1$;}
    }
    \Return{$\left(\blk; \rho_{\textnormal{out}} = \Pi_1(s) \right)$};
    \end{algorithm}
\end{center}
Let $\mathsf{U}$ be a reference universal dcq-TM. For any $t, k \in \mathbb{N}$ define the family of CPTP maps $\left\lbrace T_{t,k}: \mathbb{C}^{2^n} \rightarrow  \mathbb{C}^{2^{2n}} \right\rbrace$ as: 
\begin{equation}
    T_{t,k}(\sigma) = \mathsf{U}^{Q}_{t}(k;\sigma).
\end{equation}
That is, $T_{t,k}$ represents the transformation (understood as going from the quantum input space to the quantum output space) applied by the machine $\mathsf{U}$ running the program $t$ on classical input $k$.

Let us now gain a little intuition on the two phases of Algorithm~\ref{alg:constructfromcopy}. The list preparation phase goes through the set of directly dcq-computable states looking for states that are ``close'' to being copied by $T_{t, k_0}$. Whenever it finds one such state, it adds it to the list $L$ if it is ``far'' from all the other states already in the list. The phase ends when the $m$-th element is added to the list this way and assigns $\mathbold{\rho}_0$ to be equal to the last state added.  On the other hand, the output preparation phase goes again through the set of directly dcq-computable states, now looking for a state that is both close to the state $\rho_0$ obtained from the first phase and ``close enough" to being copied by $T_{t, k_{\textnormal{out}}}$. To show that the proposed algorithm satisfies the desired properties, we break down the proof into two parts. We want to show that:
\begin{enumerate}[i)]
    \item If $\mathsf{U}_t$ computes $\rho \otimes \rho$ from $\rho$ and for large enough $m$, the list preparation phase will always add a matrix $\mathbold{\rho}_0$ for which 
    \begin{equation}
    \label{eq:listphasecondition}
        F(\mathbold{\rho}_0, \mathbold{\rho}) \geq 1-\delta_0(2^n),
    \end{equation}
    and such matrix will be found before the list exceeds $2^n$ elements
    \item On input $k$, if $\mathbold{\rho}_0$ satisfies Equation~(\ref{eq:listphasecondition}) then the output state $\rho_{\textnormal{out}}(k)$ will satisfy
    \begin{equation}
    \label{eq:outputphasecondition}
        D(\rho_{\textnormal{out}}(k), \rho) \leq \frac{1}{k}.
    \end{equation}
\end{enumerate}
Once these properties are proven, the bound in Equation~(\ref{eq:duplicatetocompute}) follows by noting that
the minimal size program $t'$ that computes $\rho \otimes \rho$ from $\rho$ can be specified by using $|t'| = K_{\mathsf{U}}(\rho \otimes \rho | \blk; \rho) \leq K(\rho \otimes \rho | \blk; \rho) + c$ bits, and the values of $m \leq 2^{n}$ and $n$ can be specified with $n$ and $\log(n)$ bits, respectively. 
\begin{itemize}
    \item Proof of statement i)
\end{itemize}
Note that whenever a matrix $\mathbold{\sigma}$ gets added to $L$ it satisfies by Lemma~\ref{lem:smallproperties}(1) and the definition of dcq-computability
\begin{align}
    \label{eq:sumCEk0}
    \mathtt{CE}_{T_{t, k_0}}(\mathbold{\sigma}) + \mathtt{CE}_{T_{t, k_0}}(\mathbold{\rho}) \leq \frac{2}{\sqrt{k_0}} + \sqrt{\frac{2}{k_0}} \leq \frac{4}{\sqrt{k_0}},
\end{align}
invoking Lemma~\ref{lem:quasiduplicationdifelity} we get that
\begin{equation}
    \label{eq:fidelitiesk0}
    F(\mathbold{\sigma}, \mathbold{\rho}) \leq \frac{1}{2} - \sqrt{\frac{1}{4} - \frac{4}{\sqrt{k_{0}}}} \;\; \; \textnormal{or} \;\;\; F(\mathbold{\sigma}, \mathbold{\rho}) \geq \frac{1}{2} + \sqrt{\frac{1}{4} - \frac{4}{\sqrt{k_{0}}}},
\end{equation}
and substituting $k_0$ in terms of $\delta_0(2^n)$
\begin{equation}
    \label{eq:fidelitiesdelta0}
    F(\mathbold{\sigma}, \mathbold{\rho}) \leq \frac{\delta_0}{4} \;\; \; \textnormal{or} \;\;\; F(\mathbold{\sigma}, \mathbold{\rho}) \geq 1-\frac{\delta_0}{4}.
\end{equation}
Depending on the transformation $T_{t,k_0}$, the loop inside the \textbf{while} statement of the list preparation phase will add some amount of elements to the list before there are no more states that satisfy the stated conditions. Let $l_{\textnormal{max}}$ be the maximum amount of elements that can be added to $L$ in this way. From Lemma~\ref{lem:maxquasiduplication} we know that the $l_{\textnormal{max}} \leq 2^n$ because the largest size of a set $\lbrace \mathbold{\sigma}_i \rbrace$ that satisfies $F(\mathbold{\sigma}_i, \mathbold{\sigma}_j) \leq \delta_0(\mathbb{C}^{2^n})$ for $i \neq j$ is equal to $\textnormal{dim}(\mathbb{C}^{2^n}) = 2^n$. Recall now that the set of $n$-qubit directly dcq-computable states is dense in the set of $n$-qubit quantum states and the fact that the function $\tilde{\Pi}_1(s)$ will eventually output every directly dcq-computable state. Let $s'$ be the smallest integer such that
\begin{equation}
    \label{eq:closepreparable}
    D( \mathbold{\rho}' = \tilde{\Pi}_1(s'), \mathbold{\rho}) \leq \frac{1}{3 k_0},
\end{equation}
using Lemma~\ref{lem:smallproperties}(2) and the fact that $\mathsf{U}_t$ computes $\rho \otimes \rho$ from $\rho$
\begin{align}
    D\left(T_{t, k_0}(\mathbold{\rho}'), \mathbold{\rho}'^{\otimes 2}\right) &\leq 3D(\mathbold{\rho}', \mathbold{\rho}) + D\left(T_{t, k_0}(\mathbold{\rho}), \mathbold{\rho}^{\otimes 2}\right) \\
    &\leq \frac{2}{k_0},
\end{align}
and from Lemma~\ref{lem:smallproperties}(1) we get
\begin{equation}
    \label{eq:guaranteedmatrix}
    \mathtt{CE}_{T_{t, k_0}}(\mathbold{\rho}') \leq \frac{2}{\sqrt{k_0}}.
\end{equation}
Suppose the loop reaches $s'$ without finding $\mathbold{\sigma}$ satisfying Equation~(\ref{eq:listphasecondition}). Then, as shown in Equation~(\ref{eq:guaranteedmatrix}), at $s=s'$ the matrix $\mathbold{\rho}' = \tilde{\Pi}_1(s')$ will satisfy the first condition in the check to be added to $L$. Additionally, for every $\mathbold{\sigma}$ already in $L$ at that point, it will hold that (see Lemma 1 from~\cite{ras:03})
\begin{equation}
    \sqrt{1-F(\mathbold{\rho}', \mathbold{\sigma})} \geq |\underbrace{F(\mathbold{\rho}', \mathbold{\rho})}_{\geq 1 - \frac{\delta_0}{4}} - \underbrace{F(\mathbold{\rho}, \mathbold{\sigma})}_{\leq \frac{\delta_0}{4}} | \geq 1 - \frac{\delta_0}{2},
\end{equation}
and hence
\begin{equation}
    F(\mathbold{\rho}', \mathbold{\sigma}) \leq \delta_0 \left(1 - \frac{\delta_0}{4} \right) \leq \delta_0,
\end{equation}
which satisfies the second condition to be added to $L$. Therefore, for large enough $m$, the list $L$ will always have an element that satisfies Equation~(\ref{eq:listphasecondition}). Particularly, there is a value of $m \leq l_{\textnormal{max}} \leq 2^n$ for which the cycle ends when such element is added and sets it to be $\mathbold{\rho}_0$.
\begin{itemize}
    \item Proof of statement ii)
\end{itemize}
Suppose $\mathbold{\rho}_0$ satisfies Equation~(\ref{eq:listphasecondition}). We proceed to show that any density matrix $\mathbold{\sigma}$ that holds both of the conditions in the \textbf{if} statement of the output preparation phase must also satisfy
\begin{equation}
    D(\mathbold{\sigma}, \mathbold{\rho}) \leq \frac{1}{k}.
\end{equation}
Analogously to Equations~(\ref{eq:sumCEk0}) and~(\ref{eq:fidelitiesk0}), whenever a matrix $\mathbold{\sigma}$ is found that satisfies $\mathtt{CE}_{T_{t, k_{\textnormal{out}}}}(\mathbold{\sigma}) \leq {2}/{\sqrt{k_{\textnormal{out}}}}$ we have that
\begin{align}
    \label{eq:sumCEkout}
    \mathtt{CE}_{T_{t, k_{\textnormal{out}}}}(\mathbold{\sigma}) + \mathtt{CE}_{T_{t, k_{\textnormal{out}}}}(\mathbold{\rho}) \leq \frac{2}{\sqrt{k_{\textnormal{out}}}} + \sqrt{\frac{2}{k_{\textnormal{out}}}} \leq \frac{4}{\sqrt{k_{\textnormal{out}}}},
\end{align}
and
\begin{equation}
    \label{eq:fidelitieskout}
    F(\mathbold{\sigma}, \mathbold{\rho}) \leq \frac{1}{2} - \sqrt{\frac{1}{4} - \frac{4}{\sqrt{k_{\textnormal{out}}}}} \;\; \; \textnormal{or} \;\;\; F(\mathbold{\sigma}, \mathbold{\rho}) \geq \frac{1}{2} + \sqrt{\frac{1}{4} - \frac{4}{\sqrt{k_{\textnormal{out}}}}}.
\end{equation}
In other words, such $\mathbold{\sigma}$ is either ``very close" to $\mathbold{\rho}$ or ``very far" from it. Consider the case where left part of Eq~(\ref{eq:fidelitieskout}) is true, we have that (noting that $k_{\textnormal{out}} \geq k_0 $)
\begin{align}
    \sqrt{1-F(\mathbold{\rho}_0, \mathbold{\sigma})} &\geq | \underbrace{F(\mathbold{\rho}_0, \mathbold{\rho})}_{\substack{\geq \\ \frac{1}{2} + \sqrt{\frac{1}{4} - \frac{4}{\sqrt{k_{0}}}}}} - \underbrace{F(\mathbold{\rho}, \mathbold{\sigma})}_{\substack{\leq \\ \frac{1}{2} - \sqrt{\frac{1}{4} - \frac{4}{\sqrt{k_{\textnormal{out}}}}}}}| \nonumber \\
    &\geq \left|\sqrt{\frac{1}{4} - \frac{4}{\sqrt{k_{0}}}} + \sqrt{\frac{1}{4} - \frac{4}{\sqrt{k_{\textnormal{out}}}}} \right| \nonumber \\
    &\geq \sqrt{1 - \frac{16}{\sqrt{k_0}}},
\end{align}
and therefore
\begin{equation}
    F(\mathbold{\rho}_0, \mathbold{\sigma}) \leq \frac{16}{\sqrt{k_0}},
\end{equation}
which means that such $\mathbold{\sigma}$ cannot satisfy the second condition of the \textbf{if} statement. We conclude that for any $\mathbold{\sigma}$ that satisfies both conditions (and hence also for $\mathbold{\rho}_{\textnormal{out}}$), it holds that
\begin{align}
     F(\mathbold{\sigma}, \mathbold{\rho}) &\geq \frac{1}{2} + \sqrt{\frac{1}{4} - \frac{4}{\sqrt{k_{\textnormal{out}}}}} \nonumber \\
     &\geq \frac{1}{2} + \sqrt{\frac{1}{4} - \frac{1}{k^2}\left(1-\frac{1}{k^2}\right)} \nonumber \\
     &= 1 - \frac{1}{k^2}.
\end{align}
Finally, translating into trace distance we get
\begin{equation}
    D(\mathbold{\sigma}, \mathbold{\rho}) \leq \sqrt{1-F(\mathbold{\sigma}, \mathbold{\rho})} \leq \frac{1}{k}
\end{equation}
\end{proof}

One way to interpret Theorem~\ref{thm:qcloning} along with Equation~\eqref{eq:freecopy} is that, for the task of describing $\rho \otimes \rho$, the advantage of having prior access to a single copy of $\rho$ as compared to not having any prior resource state is at most $2n +2\log(n)$ bits (up to an additive constant). This means that for complex quantum states, for which $\K(\rho) \gg n$, having access to a copy of $\rho$ gives effectively \emph{no advantage} for describing $\rho \otimes \rho$. This is in contrast with classical strings, where having a copy of the string gives all the information needed to compute its duplicate, and can be understood as an algorithmic information version of the no-cloning theorem.

%---------------------------------------------------------------------------------
% QUANTUM PREFIXES, CHAIN RULE, AND COMPLEXITY OF CORRELATIONS
%---------------------------------------------------------------------------------

%---------------------------------------------------------------------------------
% COMPLEXITY OF CORRELATIONS
%---------------------------------------------------------------------------------
\section{On the chain rule and complexity of quantum correlations}
\label{sec:kcorrelations}

One feature of quantum multipartite states is that, in general, they contain correlations between their parts. One way to quantify correlations is through the notion of mutual information. In this section we use the dcq-TM model to generalize the chain rule and extend the concept of algorithmic mutual information to quantum states. Let us first define the concept of prefix for quantum states.
\begin{definition}(Prefix of a quantum state) \label{def:qprefix} \\
Let $\rho_{\langle 1,n\rangle} \in \mathcal{D}\left(\mathbb{C}^{2^{n}}\right)$. For $m \leq n$, we say that $\sigma \in \mathcal{D}(\mathbb{C}^{2^{m}})$ is a prefix of $\rho_{\langle 1,n\rangle}$ whenever
\begin{equation}
    \sigma = \textnormal{Tr}_{(m+1, \ldots, n)}\left[\rho_{\langle 1,n\rangle}\right].
\end{equation}
\end{definition}
We can make use of quantum prefixes to study the behaviour of $\K$ when considering systems split into two (or more) parts. Note that an $n$-qubit state has $n$ possible prefixes, which are manifestly defined by the layout of the cells in the quantum tape. When we want to stress the number $m$ of qubits of a prefix we will refer to it as the ``$m$-prefix". To make the separation more explicit in some of the following definitions we will denote the larger $n$-qubit space as $\mathcal{H}_{AB}$ and the $m$-qubit prefix space as $\mathcal{H}_{A}$ so we can write a quantum state and its prefix as $\rho_{AB}$ and $\rho_{A}$, respectively. 

By analogy with the classical algorithmic information theory, it seems natural to begin our exploration of mutual information by considering the straightforward generalization of two (classically) related quantities:

\begin{definition}(Algorithmic mutual information 1) \label{def:mi1}\\
Let $\rho_A$ and $\rho_B$ be two quantum states. Define the mutual information of $\rho_A$ respect to $\rho_B$ by:
\begin{equation}
\label{eq:mi1}
\I^{(1)}(\rho_A : \rho_B) = \K(\rho_A) - \K(\rho_A \: \vert \: \mathbold{\rho}_B).
\end{equation}
\end{definition}

\begin{definition}(Algorithmic mutual information 2) \label{def:mi2}\\
Let $\rho_{AB}$ be a bipartite quantum state. Define the mutual information between the partial states $\rho_A$ and $\rho_B$ by:
\begin{equation}
\label{eq:mi2}
\I^{(2)}(\rho_A : \rho_B) = \K(\rho_{A}) + \K(\rho_B ) - \K(\rho_{AB}).
\end{equation}
\end{definition}

Definition~\ref{def:mi1} compares the algorithmic complexity of the state $\rho_A$ assuming the machine starts with the all zero state in its classical and quantum tapes, versus its complexity assuming the machine has the description of the state $\rho_B$ given as a classical input. The choice of giving the classical description instead the state itself will become clearer as we study its potential in the context of correlations. Evidently $\K(\rho_A \vert \mathbold{\rho}_B)$ is upper bounded by $\K(\rho_{A})$, and hence $\I^{(1)}$ is non-negative. On the other hand, Def.~\ref{def:mi2} considers the difference between computing the states $\rho_A$ and $\rho_B$ separately as opposed to computing the (potentially correlated) single state $\rho_{AB}$. The main conceptual difference between $\I^{(1)}$ and $\I^{(2)}$ is whether to consider the states $\rho_A$ and $\rho_B$ to exist independently or as a part of a greater joint state. Under the classical definition of Kolmogorov complexity, the analogous expressions to (\ref{eq:mi1}) and (\ref{eq:mi2}) for mutual information turn out to be connected through the chain rule as shown by Equation~\eqref{eq:chainrulek2}. To explore what is the connection between these two quantities, we may ask ourselves how would a quantum version of the chain rule look like? A straightforward generalization of Equation~\eqref{eq:chainrulek2} would be to replace the string $xy$ with a bipartite state $\rho_{AB}$ leading to an expression of the form:
\begin{equation}
    \label{eq:incorrectchainrule}
     \K(\rho_{AB}) \underset{?}{=} \K(\rho_{A}) + \K(\rho_{B} \mid \mathbold{\rho}_{A}) + O \bigl(\textnormal{log} (n \K(\rho_{AB}))\bigr),
\end{equation}
but as Theorem~\ref{thm:noqchainrule} states, the above equality does not hold for arbitrary joint states.

\begin{theorem}
\label{thm:noqchainrule}
There exist no constant $g$ such that
\begin{equation}
    \label{eq:qchainrule}
    \K(\rho_{AB}) \leq \K(\rho_{A}) + \K(\rho_{B} \mid \mathbold{\rho}_{A}) + g \ \textnormal{log} (n\K(\rho_{AB}))
\end{equation}
for every natural number $n$ and $n$-qubit quantum bipartite state $\rho_{AB}$.
\end{theorem}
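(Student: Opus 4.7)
The plan is to construct an explicit family of bipartite states that witnesses the failure of the proposed inequality. The idea is to exploit the asymmetry between marginal and joint complexity present for entangled states: the marginals of any maximally entangled state are maximally mixed and hence have $O(1)$ complexity, but the joint state can encode arbitrarily long classical strings through the choice of Schmidt basis.

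Fix $n=2$ with one qubit on each side. For any computable real $\alpha \in [0,\pi/2]$, let
\begin{align*}
U_\alpha = \begin{pmatrix} \cos\alpha & -\sin\alpha \\ \sin\alpha & \cos\alpha \end{pmatrix}, \qquad \ket{\psi_\alpha} = (\qid \otimes U_\alpha)\,\tfrac{1}{\sqrt{2}}(\ket{00}+\ket{11}),
\end{align*}
and set $\rho^\alpha_{AB} = \ketbra{\psi_\alpha}$. Since $\ket{\psi_\alpha}$ is maximally entangled for every $\alpha$, both marginals equal $\qid/2$. In particular $\K(\rho^\alpha_A) = O(1)$, and because $\rho^\alpha_B = \qid/2$ can be produced by a fixed constant-length program that ignores its classical input, $\K(\rho^\alpha_B \:|\: \mathbold{\rho}^\alpha_A) = O(1)$ as well.

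Next I would establish the lower bound $\K(\rho^\alpha_{AB}) \geq \K(\alpha) - O(1)$. Given any program $p$ that $k$-approximates $\rho^\alpha_{AB}$ in trace distance, by Theorem~\ref{thm:CQequivalence} its classical simulator produces a matrix $\mathbold{\sigma}_k$ within $1/k$ of $\mathbold{\rho}^\alpha_{AB}$. Reading off a specific entry of $\mathbold{\sigma}_k$ (e.g.\ one equal up to sign to $\tfrac{1}{2}\cos\alpha$) and applying the computable inverse trigonometric function yields an approximation of $\alpha$ whose error is controlled by a computable function of $1/k$. Wrapping this procedure gives, at the cost of a constant additive overhead independent of $\alpha$, a program for $\alpha$ at any prescribed target precision, which proves the bound.

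Finally, I would select a sequence of computable reals $(\alpha_k)_{k \in \nats}$ with $\K(\alpha_k) \to \infty$, for instance by placing the first $k$ bits of an incompressible binary string into the binary expansion of $\alpha_k$. For these states the supposed chain rule would force
\begin{align*}
\K(\rho^{\alpha_k}_{AB}) \leq O(1) + g\,\log\bigl(2\,\K(\rho^{\alpha_k}_{AB})\bigr),
\end{align*}
which fails for all sufficiently large $k$, since $x$ eventually dominates $g\log(2x) + c$ for any fixed constants $g,c$. This contradicts the existence of such a $g$. The main obstacle will be making the inequality $\K(\rho^\alpha_{AB}) \geq \K(\alpha) - O(1)$ fully precise: one has to check that the wrapper translating a $1/k$ trace-distance approximation of the density matrix into an absolute-value approximation of $\alpha$ has bit-length independent of $\alpha$ and that the precision-conversion is handled uniformly. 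The remaining ingredients — verifying the marginals are maximally mixed and exhibiting a family of computable reals with unbounded complexity — are routine.
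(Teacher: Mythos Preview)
Your argument is correct and shares with the paper the same core idea: exhibit an infinite family of bipartite states all having the same (simple) marginals but whose joint complexity is unbounded. The constructions differ, however. The paper fixes arbitrary proper mixed marginals $\rho_A,\rho_B$ and builds a one-parameter family of convex mixtures $\tilde\rho_{AB}^\lambda = \lambda(\rho_A\otimes\rho_B) + (1-\lambda)\sigma_{AB}$ indexed by computable $\lambda\in(0,1)$, verifies that all of these share the marginals $\rho_A,\rho_B$, and then invokes a pure cardinality argument (only finitely many states have complexity $\leq m$, yet there are infinitely many computable $\lambda$) to conclude that $\K(\tilde\rho_{AB}^\lambda)$ is unbounded over the family. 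You instead work with pure maximally entangled two-qubit states rotated by a computable angle $\alpha$, and obtain the sharper quantitative bound $\K(\rho_{AB}^\alpha)\geq\K(\alpha)-O(1)$ by directly extracting $\alpha$ from a matrix entry via the classical simulator and Theorem~\ref{thm:KCQequivalence}. Your route is more explicit and gives a tighter handle on the joint complexity; the paper's route is slightly more general in that it manufactures counterexamples with \emph{any} prescribed mixed marginals, not just the maximally mixed one. The technical caveat you flag about precision conversion is real but easily dispatched by restricting $\alpha$ to a compact subinterval of $(0,\pi/2)$ on which $\arccos$ has a computable modulus of continuity.
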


\begin{proof}
Let $g \in \mathds{R}$, $\rho_{AB} \in \mathcal{D}((\mathds{C}^2)^{\otimes n = n_a +n_b})$ such that $\rho_{A} \in \mathcal{D}((\mathds{C}^2)^{\otimes n_a})$ and $\rho_{B} \in \mathcal{D}((\mathds{C}^2)^{\otimes n_b})$. Let $\rho_A = \sum_{i=1}^{2^{n_a}} \alpha_i \ket{i}\!\bra{i}$ and $\rho_B = \sum_{j=1}^{2^{n_b}} \beta_i \ket{j}\!\bra{j}$ be the spectral decomposition of the reduced systems $\rho_A$ and $\rho_B$. Assume that the Inequality~(\ref{eq:qchainrule}) holds for some $\rho_{AB}$ for which $\rho_A$ and $\rho_B$ are proper mixed states and some constant $g$. We show that there exists then a $\Tilde{\rho}_{AB}$ for which the inequality does not hold. Hence, there is no $g$ that satisfies Equation~(\ref{eq:qchainrule}) for all bipartite mixed states.

Note that since $\rho_A$ and $\rho_B$ are proper mixed states, there exist $r,r'$ and $s,s'$ such that $r \neq r'$, $s \neq s'$, and $\alpha_{r},\alpha_{r'},\beta{s},\beta{s'}$ are all non-zero. Define the set of states
\begin{equation}
    \Tilde{\mathcal{C}} = \left\lbrace \Tilde{\rho}_{AB}^{\lambda} \mid \Tilde{\rho}_{AB}^{\lambda} = \lambda(\rho_A \otimes \rho_B) + (1-\lambda)\sigma_{AB}, \: \textnormal{such that} \:  \lambda \in (0,1) \: \textnormal{is computable} \: \right\rbrace ,
\end{equation}
where $\rho_A \otimes \rho_B = \sum_{i,j=1}^{2^{n_a},2^{n_b}} \alpha_i \beta_j \ket{i,j}\!\bra{i,j}$, and 
\begin{equation}
    \sigma_{AB} = \rho_A \otimes \rho_B + \gamma(\ket{r, s}\bra{r',s'} + \ket{r',s'}\!\bra{r,s}),
\end{equation}
where $\gamma =\sqrt{\alpha_{r}\alpha_{r'}\beta_{s}\beta_{s'}}$. Note that $\sigma_{AB}$ and $\rho_A \otimes \rho_B$ are strictly different density operators and, since the partial traces on the second term of $\sigma_{AB}$ both vanish, we have that for any $0 \leq \lambda \leq 1$
\begin{align}
    \Tilde{\rho}_{A}^{\lambda}
    &= \textnormal{Tr}_{B}\Tilde{\rho}_{AB}^{\lambda} \nonumber \\
    &= \lambda \textnormal{Tr}_{B}[\rho_A \otimes \rho_B] + (1-\lambda) \textnormal{Tr}_{B}\sigma_{AB} \nonumber \\
    &= \lambda \textnormal{Tr}_{B} [\rho_A \otimes \rho_B] + (1-\lambda)\left(\textnormal{Tr}_{B} [\rho_A \otimes \rho_B] + \gamma \textnormal{Tr}_{B} \left[\ket{r, s}\!\bra{r',s'} + \ket{r',s'}\!\bra{r,s}\right]\right) \nonumber \\
    &= \rho_A,
\end{align}
and similarly $\Tilde{\rho}_{B}^{\lambda} = \rho_B$. Notice also that, since $\rho_A$ and $\rho_B$ are assumed to be dcq-computable, it follows from Theorem~\ref{thm:CQequivalence} all of the $ \Tilde{\rho}_{AB}^{\lambda}$ are dcq-computable quantum states. Let us denote the value $g_0 = \K(\rho_{A}) + \K(\rho_{B} \mid \mathbold{\rho}_{A})$. We can write the inequality (\ref{eq:qchainrule}) as:
\begin{equation}
    \K(\rho_{AB}) - g \textnormal{log}(n \K(\rho_{AB})) - g_0 \leq 0.
\end{equation}
Define now the set
\begin{equation}
    \mathcal{C}_m = \lbrace \rho_{AB}' \mid \rho_{AB}' \in \mathcal{D}((\mathds{C}^2)^{\otimes n}),
     \K(\rho_{AB}') \leq m \rbrace.
\end{equation}
Note that for all $m$, the set $\mathcal{C}_m$ is finite (of size at most $2^m$), whereas the set $\Tilde{\mathcal{C}}$ is infinite (its size is equal to the set of computable real numbers between 0 an 1), this means that for any $m$, there exists $\lambda$ such that $\Tilde{\rho}_{AB}^{\lambda} \notin \mathcal{C}_m$ ($K(\Tilde{\rho}_{AB}^{\lambda}) \geq m$). In other words, given a bipartite state $\rho_{AB}$ with mixed partial traces, there are an infinite number of states $\rho_{AB}'$ that share its partial traces and have larger quantum Kolmogorov complexity under the dcq-TM model. Additionally, for any $n,g,g_0$ we have that
\begin{equation}
    \lim_{m\to\infty} \bigl(m - g \textnormal{log}(nm) - g_0 \bigr) = \infty.
\end{equation}
Therefore, there is an $m_0$ such that $m' - g \textnormal{log}(nm') - g_0 > 0$ for any $m' \geq m_0$. The result follows by noting that, from the above argument, there exists $\lambda$ such that $\K(\Tilde{\rho}_{AB}^{\lambda}) \geq m_0$, and hence
\begin{equation}
    \label{eq:unboundchainrule}
    \K(\Tilde{\rho}_{AB}^{\lambda}) > \K(\Tilde{\rho}_{A}^{\lambda}) + \K(\Tilde{\rho}_{B}^{\lambda} \mid \Tilde{\mathbold{\rho}}_{A}^{\lambda}) + g \ \textnormal{log} (n\K(\Tilde{\rho}_{AB}^{\lambda})).
\end{equation}
\end{proof}

To understand better relation between Def.~\ref{def:mi1} and Def~\ref{def:mi2}, we have to consider the role of correlations in the description of bipartite states. In the Shannon approach, the notion of information is based on the knowledge of system's state: the more we are uncertain of the state of the system, the more we increase our knowledge (i.e., information) upon learning that state. Thus, mutual information between two systems is manifestly linked to correlations between them. In the original Kolmogorov approach, the notion of information is based on our ability to compute a given string. Consequently, mutual information between two strings tells us how much knowing one helps in computing the other. When generalizing to the quantum domain, we see that part of this intuition holds for Def.~\ref{def:mi1} as well. In other words, algorithmic mutual information is quantifying the similarity between the descriptions of two states, complexity wise, and says noting about the correlations in their joint state. Indeed, the quantities on the right hand side of Equation~(\ref{eq:mi1}) do not depend on the joint state $\rho_{AB}$ at all. On the other hand, Def.~\ref{def:mi2} effectively compares the complexity of computing a (generally) correlated state with its individual parts. One problem with Def.~\ref{def:mi2} as a measure of complexity is that, following Theorem~\ref{thm:noqchainrule}, the quantity is unbounded from below. 

At this point we can ask ourselves: is there a version of the chain rule that holds for quantum states? Such expression would help us motivate an alternative definition for algorithmic mutual information that satisfies a form of symmetry, as well as give us insight into the complexity of correlations in quantum systems. Theorem~\ref{thm:yesqchainrule} provides a relatively straightforward generalization of Equation~\eqref{eq:chainrulek2} by introducing a direct dependence on the joint state, as well as requiring a classical description of the prefix (as opposed to a copy of the state).

\begin{theorem}
\label{thm:yesqchainrule}
Let $n,m$ be integers such that $1< m \leq n$, and $\rho$ be an $n$-qubit quantum state. For any $m$-prefix $\sigma$ of $\rho$ it holds that
\begin{equation}
    \label{eq:yesqchainrule}
    \K(\rho) = \K(\sigma) + \K(\rho | \mathbold{\sigma}) + O\left(\log(n\K(\rho))\right).
\end{equation}
\end{theorem}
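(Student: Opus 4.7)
My plan is to establish the equality via two separate inequalities: the upper bound by direct program concatenation, and the lower bound by a counting/enumeration argument analogous to the classical proof of the chain rule in Equation~\eqref{eq:chainrulek2}.

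For the upper bound $\K(\rho) \leq \K(\sigma) + \K(\rho \mid \mathbold{\sigma}) + O(\log(n\K(\rho)))$, let $p_\sigma$ and $p_{\rho|\sigma}$ be shortest programs witnessing $\K(\sigma)$ and $\K(\rho \mid \mathbold{\sigma})$. I would build a program $p$ that prefix-encodes both (with $O(\log \K(\sigma))$ bits of length overhead) and, on input $k$, first produces a classical description $\mathbold{\sigma}$ from $p_\sigma$ using the construction behind Theorem~\ref{thm:KCQequivalence}, and then invokes $p_{\rho|\sigma}$ on input $(\mathbold{\sigma}, k)$, returning its quantum output. Correctness of the $1/k$-approximation to $\rho$ follows from the defining property of $p_{\rho|\sigma}$, and the total length is $\K(\sigma) + \K(\rho\mid\mathbold{\sigma}) + O(\log \K(\sigma))$. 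Since $\sigma$ is obtained from $\rho$ by tracing out the last $n-m$ qubits (costing $O(\log n)$ bits to specify $m$), $\K(\sigma) \leq \K(\rho) + O(\log n)$, so the overhead is absorbed into $O(\log(n\K(\rho)))$.

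For the lower bound $\K(\sigma) + \K(\rho \mid \mathbold{\sigma}) \leq \K(\rho) + O(\log(n\K(\rho)))$, set $k = \K(\rho)$. For each $m$-qubit state $\tau$, let $R_\tau$ be the set of $n$-qubit states $\rho'$ with $\K(\rho') \leq k$ whose $m$-prefix is $\tau$, and let $m_\tau = |R_\tau|$. Since there are at most $2^{k+1}$ programs of length $\leq k$ and each defines at most one state, $\sum_\tau m_\tau \leq 2^{k+1}$. Given $\mathbold{\sigma}$ together with $k$, $m$, and $n$ (an extra $O(\log(nk))$ bits), one can dovetail over programs of length $\leq k$, compute the $m$-prefixes of their outputs, retain those matching $\sigma$, and select $\rho$ by its index in $R_\sigma$; this yields $\K(\rho\mid\mathbold{\sigma}) \leq \lceil\log m_\sigma\rceil + O(\log(nk))$. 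A Markov-type count using $\sum_\tau m_\tau \leq 2^{k+1}$ shows at most $2^{j+1}$ prefixes satisfy $m_\tau \geq 2^{k-j}$; setting $j = k - \lceil\log m_\sigma\rceil$ and enumerating these heavy prefixes in the same dovetailed fashion, $\sigma$ can be described by its index among them, giving $\K(\sigma) \leq k - \log m_\sigma + O(\log(nk))$. Summing the two inequalities cancels $\log m_\sigma$ and produces the desired bound.

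The main obstacle, compared to the classical setting, is that quantum states are continuous and only accessed through approximating sequences: distinct programs may compute the same state, and the $m$-prefix relation must be checked from finite-precision data rather than exactly. I would address this by phrasing the enumeration at the level of programs rather than states, observing that the count $\sum_\tau m_\tau \leq 2^{k+1}$ remains valid because each distinct state is witnessed by at least one program and the inequality is only needed in that direction. A secondary issue is precision bookkeeping in the upper bound: since $\mathbold{\sigma}$ is conveyed as a classical oracle (an index of a computing machine) rather than a truncated approximation, refining $\sigma$ on demand happens automatically inside $p_{\rho|\sigma}$, absorbing what would otherwise be a delicate composition of approximation errors.
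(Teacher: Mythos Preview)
Your upper bound is essentially the paper's argument, and if anything cleaner: you pass the machine index $\mathbold{\sigma}$ exactly via Theorem~\ref{thm:KCQequivalence}, whereas the paper feeds a finite approximation $\mathbold{\sigma}_k$ and appeals to a ``contractive property''.

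The lower bound, however, has a real gap. You correctly identify the obstacle---the prefix relation can only be tested from finite-precision data---but your proposed fix (``phrase the enumeration at the level of programs'') does not resolve it. The set $R_\sigma = \{p : |p|\leq k,\ \text{the $m$-prefix of the state computed by $p$ equals }\sigma\}$ is not computably enumerable: membership asks whether the sequence $\mathsf{U}^{(Q)}_p(j;\epsilon)$ converges and whether its limit has $m$-prefix \emph{exactly} $\sigma$, a $\Pi_2$-type condition. Dovetailing at precision $s$ only certifies that the first $s$ outputs have prefix $1/x$-close to $\sigma$; this yields a set that varies with $s$, so a single index $i$ does not pick out the same program across precision levels, and there is no way to recover $\rho$ at arbitrary precision from it. The same problem recurs for the ``heavy prefix'' enumeration: the candidate prefixes $\tau$ range over a continuum, and deciding which of them are heavy again requires exact-prefix tests.

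The paper's proof fills exactly this gap, and it is the main technical content of the argument. For each precision $s$ it works with the enumerable sets $P_s^{\eta}$ (programs whose first $s$ outputs have $m$-prefix within $1/x$ of $\eta$) and a relaxed version $_2P_s^{\eta}$; the bound on $\K(\rho\mid\mathbold{\sigma})$ comes from the index of $t_\rho$ in $P_{s_0}^{\sigma}$ at a fixed $s_0$, which recovers $t_\rho$ itself and hence $\rho$ at any precision. For the $\K(\sigma)$ bound, the paper constructs a sparse subset $\Sigma_s^*$ of programs whose neighborhoods $_2P_s^{\eta(t,s)}$ are large \emph{and pairwise disjoint}; disjointness is what makes the Markov-type count survive the passage to finite precision. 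This machinery---precision-indexed enumerable surrogates plus a disjointness mechanism---is the missing idea in your sketch.
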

\begin{proof}
Let $\mathsf{U}$ be a reference universal dcq-TM:  \\
($\leq$): \\
We specify an algorithm to compute $\rho$ given as input two parameters: the specifications $t_1, t_2$ of two programs that compute $\sigma$ and $\rho | \mathbold{\sigma}$, respectively. The algorithm works as follows: on input $k$, simulates the action of $\mathsf{\Tilde{U}}_{t_1}(2k;\varepsilon)$ to obtain $\mathbold{\sigma_{k}}$, then simulate $\mathsf{U}_{t_2}((2k, \mathbold{\sigma_{k}});\varepsilon)$ to obtain $\rho'_{k}$. Let $\rho_{k} = \mathsf{U}_{t_2}((2k, \mathbold{\sigma});\varepsilon)$, using the triangle inequality and the contractive property of the trace distance we get that 
\begin{align}
    D(\rho, \rho'_k) &\leq D(\rho, \rho_{k}) + D(\rho_{k}, \rho'_{k}) \nonumber \\
    &\leq D(\rho, \rho_{k}) + D(\mathbold{\sigma}, \mathbold{\sigma_{k}}) \nonumber \\
    &\leq \frac{1}{2k} +  \frac{1}{2k} = \frac{1}{k}. 
\end{align}
So the algorithm correctly computes $\rho$. Specifying $t_1, t_2$ can be done using $\K(\sigma) + \K(\rho | \mathbold{\sigma})$ bits. Since this is a two-input program, a specification of the length of one of the inputs must be added, which is of size at least $\min \lbrace \log(\K(\sigma)), \log(\K(\rho | \mathbold{\sigma})) \rbrace \leq \log(\K(\rho)) + \log(m)$. We can conclude then that
\begin{equation}
    \K(\rho) \leq \K(\sigma) + \K(\rho | \mathbold{\sigma}) + \log(m\K(\rho)) \leq \K(\sigma) + \K(\rho | \mathbold{\sigma}) + \log(n\K(\rho))
\end{equation}
($\geq$): \\
By {\em reductio ad absurdum}, assume that for every positive real number $c$ there exist an integer $n > 1$, an $n$-qubit quantum state $\rho$, and a prefix $\sigma$ of $\rho$ such that
\begin{equation}
    \label{eq:chainassumption}
    \K(\rho) < \K(\sigma) + \K(\rho | \mathbold{\sigma}) - c \log(n \K(\rho)).
\end{equation}
Let $(c, \rho, \sigma)$ be such that they satisfy Equation~\eqref{eq:chainassumption} with $\rho, \sigma$ being $n$ and $m \leq n$ qubit states, respectively. Furthermore, let $t_{\rho}$ be such that $\mathsf{U}_{t_{\rho}}$ computes $\rho$ and $|t_{\rho}| = \K(\rho) + O(1)$. In the following treatment, it will be useful to have a shorthand notation for the prefixes of the quantum output of programs. Let $\eta(t,x)$ denote the $m$-prefix of $\mathsf{U}_{t}^{Q}(x;\varepsilon)$. For any positive integer $s$ consider now the following sets:
\begin{equation}
    P_{s} = \left\{ t \in \lbrace 0, 2^{|t_{\rho}|} \rbrace  | \; \textnormal{for all} \; 1 \leq x \leq s, \; \mathsf{U}_{t}(x; \varepsilon) \; \textnormal{halts with a quantum output of at least} \; m \; \textnormal{qubits}\right\}
\end{equation}
\begin{equation}
    \label{eq:pseta}
    P_{s}^{\eta} = \left \lbrace t \in P_{s}  | \; \textnormal{for all} \; 1 \leq x \leq s, \; D(\eta(t,x), \eta)\leq \frac{1}{x} \right \rbrace
\end{equation}
Recall that a set $A$ is computably enumerable if there exists a bijective function $\mathcal{E}_{A}: \lbrace 1, \ldots, |A| \rbrace \rightarrow A$ and a machine $\mathsf{T}_{A}$ such that $\mathsf{T}_{A}(i;\varepsilon) = (\mathcal{E}_{A}(i); \varepsilon)$ for every $i = \lbrace 1, \ldots, |A| \rbrace$. To see that the set $P_{s}^{\eta}$ is computably enumerable given $s, m, |t_{\rho}|$, and $\mathbold{\eta}$, consider a machine that classically simulates instances of all programs up to length $|t_{\rho}|$ running with all inputs $x \in \lbrace 1,\ldots, s \rbrace$ in parallel, keeping track of their simulated outputs as they halt, and outputting the $i$-th program found to satisfy the conditions in Equation~\eqref{eq:pseta}. Denote the enumeration of $P_{s}^{\eta}$ defined by this machine as $\mathcal{E}_{s,m,|t_{\rho}|,\mathbold{\eta}}$.

We want to use the set $P_{s}^{\sigma}$ to find an upper bound for $\K(\rho| \mathbold{\sigma})$. Because we know that $t_{\rho} \in P_{s}^{\sigma}$ for any $s$, fix $s=s_0$ and let $j$ be the index assigned to $t_{\rho}$ by $\mathcal{E}_{s_{0},m,|t_{\rho}|,\mathbold{\sigma}}$. We can compute $\rho| \mathbold{\sigma}$ as follows: consider a machine that on input $(k;\mathbold{\sigma})$ outputs the values of $|t_{\rho}|$, $m$, and $j$, then finds the value of $t_{\rho} = \mathcal{E}_{s_0,m,|t_{\rho}|,\mathbold{\sigma}}(j)$ and outputs the simulated quantum output of $\mathsf{U}_{t_{\rho}}(k)$. By recalling that $|m| \leq \log(n)$ we get that
\begin{equation}
    \label{eq:boundkrhosigma}
    \K(\rho| \mathbold{\sigma}) \leq \log(|P_{s_0}^{\sigma}|) + 2\log(\K(\rho)) + 2\log(n) = \log(|P_{s_0}^{\sigma}|) + 2\log(n \K(\rho)).
\end{equation}
Using Equation~\eqref{eq:chainassumption} we can obtain a bound for the size of $P_{s_0}^{\sigma}$
\begin{equation}
    \label{eq:boundpssigma}
    \log(|P_{s_0}^{\sigma}|) > \K(\rho) - \K(\sigma) + (c-2)\log(n \K(\rho)) = \ell 
\end{equation}
which is independent of the value of $s_0$. To arrive to a contradiction, we will proceed to show that the bound found in Equation~\eqref{eq:boundpssigma} is ``too big'', in the sense that it allows for a very short description of $\sigma$. To do this, we will use the fact that we can compute $\sigma$ by simulating a program that computes $\rho$ and then outputting only the first $m$ qubits of its quantum output. In general, for input $k$, we do not really need to run a program that computes $\rho$, instead we can run any program that outputs a state ``close enough'' to the output of $\mathsf{U}_{t_{\rho}}(ak)$ for some natural number $a \geq 2$. We want to construct a small set of programs that are candidates to approximate $\sigma$ in this way for a given $k$ and then specify one of them by giving its index in an enumeration of that set. First, consider a slightly relaxed version of the $P_{s}^{\eta}$ family of sets
\begin{equation}
    \label{eq:pseta2}
    _{2}P_{s}^{\eta} = \left\lbrace t \in P_{s}  | \; \textnormal{for all} \; 1 \leq x \leq s, \; D(\eta(t,x), \eta)\leq \frac{2}{x}  \right\rbrace.
\end{equation}
Clearly, $P_{s}^{\eta} \subseteq~ _{2}P_{s}^{\eta}$ for all parameters $s, \eta$, and they are computably enumerable in the same way the $P_{s}^{\eta}$ are. Additionally, note that for any $t \in P_{s}^{\sigma}$ and $1 \leq x \leq s$ it holds that 
\begin{equation}
    D(\eta(t,x), \eta(t_{\rho}, s)) \leq  D(\eta(t,x), \sigma) +  D(\sigma, \eta(t_{\rho}, s)) \leq \frac{1}{x} + \frac{1}{s} \leq \frac{2}{x},
\end{equation}
and thus, for any $s$ we have that $P_{s}^{\sigma} \subseteq ~_{2}P_{s}^{\eta(t_{\rho}, s)}$ and, following Equation~\eqref{eq:boundpssigma}, we get that 
\begin{equation}
    \label{eq:2sigmatrho}
    |_{2}P_{s}^{\eta(t_{\rho},s)}| > 2^{\ell}.
\end{equation}
Define now the sets
\begin{equation}
    \label{eq:sigmas}
    \Sigma_s = \left\lbrace t \in P_s | \; |_{2}P_{s}^{\eta(t,s)}| > 2^{\ell} \right\rbrace. 
\end{equation}
The set $\Sigma_s$ is computably enumerable as shown by Algorithm~\ref{alg:sigmasenumeration} (see Appendix). Because we know that $t_{\rho} \in \Sigma_s$ can be used to approximate $\sigma$ (given the value of $m$), we can think of the $\Sigma_s \subseteq P_s$ as a subset of candidates for programs to compute $\sigma$. Unfortunately, it is not straightforward to find a bound for $|\Sigma_s|$. Instead, define $\Sigma_s^*$ as the set of states that can be outputted by Algorithm~\ref{alg:sigmastar}, which constructs a list of programs in $\Sigma_s$ that are all at least $\frac{4}{s}$-distant from each other. The set $\Sigma_s^*$ is computably enumerable by construction and satisfies the following properties:
\begin{enumerate}
    \item For any distinct $t,t' \in \Sigma_s^*$ it holds that the associated $_{2}P_{s}^{\eta(t,s)}, ~_{2}P_{s}^{\eta(t',s)}$ are disjoint.
    \item There is at least one $\tau \in \Sigma_s^*$ such that $D(\eta(t_{\rho},s), \eta(\tau,s)) \leq \frac{4}{s}$ and therefore
    \begin{equation}
    \label{eq:tausigmastar}
        D(\sigma, \eta(\tau,s)) \leq D(\sigma, \eta(t_{\rho},s)) + D(\eta(t_{\rho},s), \eta(\tau,s)) \leq \frac{1}{s} + \frac{4}{s} = \frac{5}{s}
    \end{equation}
\end{enumerate}

\begin{center}
\begin{algorithm}[H]
\caption{Enumeration of the set $\Sigma_s^*$.}
\label{alg:sigmastar}
\textbf{Parameters:} Universal dcq-TM $\mathsf{U}$, enumeration $\mathcal{E}^{\Sigma}_{s,m,|t_{\rho}|}$ of the set $\Sigma_s$ given $s, m, |t_{\rho}|, \ell$ \;
\textbf{Input:} Integers $i, s, m, |t_{\rho}|, \ell$\; 
$L \leftarrow ()$, $j \leftarrow 1$\;
\While{$|L|<i$}{
    $t \leftarrow \mathcal{E}^{\Sigma}_{s,m,|t_{\rho}|}(j)$ \;
    \If{$D(\eta(t,s),\eta(t',s)) > \frac{4}{s}$ for each $t' \in L$}
    {{Add $t$ to $L$;}}
    $j \leftarrow j+1$\;
}
\Return{$L[i]$}
\end{algorithm}
\end{center}

This means that we can compute $\sigma$ as follows. On input $k$, let $j$ be the index of a program $\tau$ that satisfies Equation~\eqref{eq:tausigmastar} for the set $\Sigma_{5k}^*$. Directly compute the values of $m, |t_{\rho}|$, and $\ell$; then run Algorithm~\ref{alg:sigmastar} on input $(j, 5k, m, |t_{\rho}|, \ell)$ to obtain $\tau$, and finally output the $m$-prefix of $\mathsf{U}^{Q}_{\tau}(5k; \epsilon)$, which by Equation~\eqref{eq:tausigmastar} is at most $\frac{1}{k}$-distant from $\sigma$. The size of the index $j \leq |\Sigma_{5k}^*|$ can be bounded by
\begin{equation}
    \label{eq:boundsigmastar}
    |\Sigma_{5k}^*| < \frac{2^{\K(\rho)+O(1)}}{2^{\ell}},
\end{equation}
by noting that
\begin{equation}
    2^{\ell}|\Sigma_{5k}^*| = \sum_{t \in \Sigma_{5k}^*}\!2^{\ell} < \sum_{t \in \Sigma_{5k}^*} \!\!_{2}P_{5k}^{\eta(t,5k)} \leq |P_s| \leq 2^{\K(\rho)+O(1)},
\end{equation}
where we used the fact that the considered sets $_{2}P_{5k}^{\eta(t,5k)}$ are disjoint. Equation~\eqref{eq:boundsigmastar} in turn gives us a bound for the complexity of $\sigma$
\begin{align}
    \K(\sigma) &\leq \log(|\Sigma_{5k}^*|) + 2\log(n) + 2\log(\K(\rho)) + 2\log(\ell) \nonumber \\
               &< \K(\rho) - \ell + 2\log(n\K(\rho)) + 2\log(\ell), 
\end{align}
substituting $\ell$ from Equation~\eqref{eq:boundpssigma} and cancelling terms
\begin{align}
    \label{eq:largec}
    0 &< 2\log\bigl(\K(\rho) - \K(\sigma) +(c-2)\log(n\K(\rho))\bigr) -(c-4)\log(n\K(\rho))  \nonumber \\
      &\leq 2\log\bigl(n\K(\rho) + c\log(n\K(\rho))\bigr) -(c-4)\log(n\K(\rho)) \nonumber \\
      &\leq 2\log\bigl((c+1)n\K(\rho) \bigr) -(c-4)\log(n\K(\rho)) \nonumber \\
      &\leq 2\log(c+1) -(c-6)\log(n\K(\rho)) \nonumber \\
      &\leq 2\log(c+1) -(c-6) \quad \quad (\textnormal{for } c \geq 6 \textnormal{ and } n>1).
\end{align}
We arrive to the contradiction by noting that the last expression on Equation~\eqref{eq:largec} is negative for large enough $c$.
\end{proof}

By applying Theorem~\ref{thm:yesqchainrule} to a bipartite state $\rho_{AB}$ we get our new version of the chain rule
\begin{align}
    \label{eq:newchainrule}
    \K(\rho_{AB}) &= \K(\rho_{A}) + \K(\rho_{AB}|\mathbold{\rho}_{A}) + O\left(\log(n\K(\rho_{AB})) \right) \nonumber \\
    &= \K(\rho_{B}) + \K(\rho_{AB}|\mathbold{\rho}_{B}) + O\left(\log(n\K(\rho_{AB})) \right),
\end{align}
where the second equality comes from the fact that the state $\rho_{AB}$ and the state obtained by swapping the order of the subsystems $A$ and $B$ are related by a $\log(n)$ overhead, which gets absorbed by the last term in Equation~\eqref{eq:yesqchainrule}. Notably, Equation~\eqref{eq:newchainrule} recovers the form of the classical chain rule Equation~\eqref{eq:chainrulek2} when the two subsystems are uncorrelated, that is 
\begin{equation}
    \label{eq:qweakchainrule}
    \K(\rho_{A}\otimes\rho_{B}) = \K(\rho_{A}) + \K(\rho_{B}|\mathbold{\rho}_{A}) + O\left( \log(n\K(\rho_{A}\otimes\rho_{B})) \right).
\end{equation}
Equipped by this result, we can define a correlation-accounting version of mutual information by quantifying by ``how much'' a state deviates from the classical chain rule, we call this quantity \emph{complexity of correlations}:
\begin{definition} (Complexity of quantum correlations) \\
\label{def:comcorr}
Let $\rho_{AB}$ be a bipartite quantum state. Define the complexity of correlations in the subsystem $A$ respect to $B$ by:
\begin{align}
\label{eq:comcorr}
    C(A : B) &= \K(\rho_{AB}|\mathbold{\rho}_B) - \K(\rho_{A} | \mathbold{\rho}_{B}) \nonumber \\
    &=I^{(1)}(A:B) - I^{(Q)}(A:B),
\end{align}
where
\begin{equation}
    I^{(Q)}(A:B) = \K(\rho_{A}) - \K(\rho_{AB}|\mathbold{\rho}_B).
\end{equation}
\end{definition}
Because $\mathbold{\rho}_{B}$ has the information of the dimension of the space $\mathcal{H}_B$, it is straightforward to see that $C(A : B)$ is non-negative (up to a constant). Note that the quantity $I^{(Q)}(A:B)$ should not be interpreted as an information measure because it can be negative. In fact, it is generically negative, reaching its maximum zero value for product states $\rho_A \otimes \rho_B$ (we denote the quantity by $I^{(Q)}$ to highlight its analogous formal mathematical expression to $I^{(1)}$). From Equations~\eqref{eq:newchainrule} and~\eqref{eq:qweakchainrule} we know that $C$ also satisfies symmetry of information: for an $n$-qubit state $\rho_{AB}$ it holds that 
\begin{align}
    C(A : B) - C(B : A) &= \bigl(\underbrace{I^{(1)}(A:B) - I^{(1)}(B:A)}_{ \underbrace{\in O(\log(n \K(\rho_{A}\otimes \rho_{B})))}_{\in O(\log(n \K(\rho_{AB})))} } \bigr) - \bigl( \underbrace{I^{(Q)}(A:B) - I^{(Q)}(B:A)}_{ \in O(\log(n \K(\rho_{AB}))) } \bigr) \nonumber \\ &= O\big(\log(n \K(\rho_{AB}))\big).
\end{align}

\section{Discussion}
\label{sec:conclusionsk}

In this work, we have laid the foundations of a quantum algorithmic complexity theory based on the dcq-TM model. We started by expanding the definition of the machine to allow it to work with mixed states as inputs/outputs and defined the set of computable quantum states. The choice of our machine model is motivated by two of its main properties: a) the set of machines is naturally discrete and shown to be Turing-complete, and b) there is a single, well defined, halting state for every machine. We mention this properties in contrast to quantum-controlled type of machine models whose description may be parameterized by the set of quantum states associated to a given Hilbert space. If the set of machines is described by a continuous set of states, it can lead to super-Turing capabilities.\footnote{Models of computation that allow for performing fully quantum programs have the problem of outputting  uncomputable quantum states. For instance, consider an uncomputable real number $p\in (0,1)$. According to the Physical Church-Turing postulate, there is no physical system able to sample the Bernoulli distribution with parameter $p$. Otherwise, we could approximate arbitrarily $p$, by the law of large numbers and sampling. However, fully quantum programs would be able to create this distribution by preparing the state $\sqrt{1-p}\ket{0}+\sqrt{p}\ket{1}$ and measuring it on the computational basis.} This issue can be solved by allowing only some discrete subset of states (chosen adequately to output approximations of quantum states within a desired accuracy), as described in~\cite{ber:dam:lap:01}. Another, more difficult to solve issue, is that machines running quantum programs do not always have a single, well defined, halting state. Instead, due to the unitary evolution principle of quantum mechanics, they are generally in a superposition of halting and working states, which makes defining computability difficult. This issue has been openly discussed (see, for instance~\cite{ber:vaz:97, mye:97, lin:98, miy:05}). We make the observation that, in practical scenarios, for adequately modeling real-life quantum computers, we see ourselves as classical users interacting with the quantum machine, which is in line with the dcq-TM model.

Another advantage of the dcq-TM model is that, by having explicitly separated classical and quantum tapes, it lends naturally to comparing two of the approaches mentioned in~\cite{vit:00}. That is, it lets us work with quantum states by inputting them as the physical state of the cells of the quantum tape, or alternatively, we can input their classical representation by encoding the corresponding density matrix as a string in the classical tape. We defined the algorithmic complexity of a state $\rho$ as the size of the description of the shortest computable sequence of ``outputtable" states that approximates $\rho$, and showed that it is machine independent. We also showed that any program that computes $\rho$ can be transformed into one that computes $\mathbold{\rho}$, and vice-versa; which means that the resulting complexities $\K(\rho)$ and $\K(\mathbold{\rho})$ are equivalent. Nevertheless, we found that the conditional complexities $\K(\sigma | \rho)$ and $\K(\sigma | \mathbold{\rho})$ are, in general, not the same. Instead, the classical representations contain more descriptive information than the physical system when used as a resource. This led us to study the behaviour of complexity for quantum state copying. By combining Equations~\eqref{eq:computetoduplicate} and~\eqref{eq:duplicatetocompute} we can obtain the relation
\begin{equation}
    \label{eq:knocloning}
    \K(\rho) = \K(\rho \otimes \rho \: | \: \rho) + O(n),
\end{equation}
which can be seen as a statement of how much of an advantage having a copy of a state gives when computing two copies. For any qubit number $n$ and value $k$ of complexity, there is a finite number of quantum states for which $\K(\rho)\leq k$, and an infinite number of states for which $\K(\rho)> k$. It follows that for any $n \in \nats$ and integer function $k(n) \in O(n)$, there is an infinite number of $n$-qubit states $\rho$ for which $\K(\rho) \gg k(n)$. Therefore, for \emph{almost all} quantum states, having access to a copy of the state gives no significant advantage for computing its duplicate. 

Finally, we considered notions of quantum algorithmic mutual information, and found that two classically equivalent definitions are no longer equivalent for the quantum case. We noted that the property that connects these two quantities in the classical case, namely the chain rule, does not hold for general bipartite quantum states. We interpreted this discrepancy to be caused by the presence of quantum correlations. To compute a quantum multipartite state it is not enough to be able to compute each of its parts individually, but one must also describe the inherent correlations present in the joint state. Taking this into account, we developed a generalization of the chain rule for quantum states in the dcq-TM model. This new property let us to propose a candidate for a measure of the algorithmic complexity of correlations in quantum systems, which is symmetric up to a logarithmic term in the complexity of the joint state times the number of qubits. 

Potential future work directions include the study of the relationship between quantum Kolmogorov complexity and Von Neumann entropy, one way to approach this is through generalization of Brudno's Theorem, which relates the complexity rate with the entropy rate for stationary, ergodic sources. Such analysis has already been done in~\cite{ben:kru:etal:06} for Deutsch's machine model~\cite{deu:85}, which is quantum-controlled. Additionally, further properties of complexity of correlations can be explored and a potential theory of algorithmic complexity of multipartite correlations can be developed. 

\section*{Acknowledgements}
The research on this paper was funded under the FCT project QuantumPrime reference: PTDC/EEI-TEL/8017/2020 (50\%) the QuantaGENOMICS project, through the EU H2020 QuantERA II Programme, Grant Agreement No 101017733, CERN/FIS-PAR/0023/2019 (50\%).
The authors acknowledge Fundação para a Ciência e Tecnologia, Instituto de Telecomunicações Research Unit, ref. UIDB/50008/2020, UIDP/50008/2020 and PEst-OE/EEI/LA0008/2013 and LASIGE Research Unit, ref. UIDB/00408/2020 and ref. UIDP/00408/2020.
NP acknowledges the FCT Estímulo ao Emprego Científico grant no. CEECIND/04594/2017/CP1393/CT000. ML acknowledges the PhD scholarship PD/BD/114334/2016. 

\bibliographystyle{plainnat}
\bibliography{bibfile}

%---------------------------------------------------------------------------------
%%%%%%%%%%%%%%%%%%%%%%%% APPENDIX %%%%%%%%%%%%%%%%%%%%%%%%%%%%%%%%%%%
%---------------------------------------------------------------------------------

\section{Appendix}

%---------------------------------------------------------------------------------
%%%%%%%%%%%%%%%%%%%%%% PROOF OF COPYING LEMMA %%%%%%%%%%%%%%%%%%%%%%%%%%%%%%%%%
%---------------------------------------------------------------------------------
\subsection{Proof of Lemma~\ref{lem:quasiduplicationdifelity}}

\begin{definition}
The Bures angle $\Delta(\rho_1, \rho_2)$ between two quantum states $\rho_1, \rho_2$ is defined as:
\begin{equation}
    \label{def:buresangle}
    \Delta(\rho_1, \rho_2) = \cos^{-1}(\sqrt{F(\rho_1, \rho_2)})
\end{equation}
\end{definition} 
The Bures angle is a metric in the space of density operators acting on a given Hilbert space. Moreover, it is contractive under quantum operations~\cite{nie:chu:00}, that is, for any CPTP transformation $T$ it holds that
\begin{equation}
    \Delta(T(\rho_1), T(\rho_2)) \leq \Delta(\rho_1, \rho_2).  
\end{equation}

\begin{lemma}
Let $T:\mathcal{D}(\mathcal{H})\rightarrow \mathcal{D}(\mathcal{H}^{\otimes 2})$ be a CPTP map. Define the {\em copying error} $\mathtt{CE}_T$ of the transformation $T$ on the state $\rho \in \mathcal{D}(\mathcal{H})$ as

\begin{equation}
    \mathtt{CE}_{T}(\rho) = \sqrt{1-F(\rho^{\otimes 2}, T(\rho))}.
\end{equation}
For any $\rho_1, \rho_2 \in \mathcal{D}(\mathcal{H})$ such that $\mathtt{CE}_{T}(\rho_1) + \mathtt{CE}_{T}(\rho_2) \leq \varepsilon \leq \frac{1}{4}$ it holds that

\begin{equation}
    F(\rho_1, \rho_2) \leq \frac{1}{2} - \sqrt{\frac{1}{4} - \varepsilon} \;\; \; \textnormal{or} \;\;\; F(\rho_1, \rho_2) \geq \frac{1}{2} + \sqrt{\frac{1}{4} - \varepsilon}.
\end{equation}
\end{lemma}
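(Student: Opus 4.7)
The plan is to leverage the Bures angle $\Delta(\rho,\sigma) = \cos^{-1}(\sqrt{F(\rho,\sigma)})$ defined just above, which is a metric on the space of density operators and is contractive under CPTP maps. The strategy is to chain its triangle inequality with contractivity through the intermediate points $T(\rho_1)$ and $T(\rho_2)$: the triangle inequality gives $\Delta(\rho_1^{\otimes 2},\rho_2^{\otimes 2}) \leq \Delta(\rho_1^{\otimes 2}, T(\rho_1)) + \Delta(T(\rho_1), T(\rho_2)) + \Delta(T(\rho_2), \rho_2^{\otimes 2})$, and contractivity of $T$ bounds the middle term by $\Delta(\rho_1,\rho_2)$.

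To turn this into a concrete inequality on $F := F(\rho_1,\rho_2)$, I would use multiplicativity of fidelity under tensor products, $F(\rho_1^{\otimes 2}, \rho_2^{\otimes 2}) = F^2$, so that $\Delta(\rho_1^{\otimes 2}, \rho_2^{\otimes 2}) = \cos^{-1}(F)$ while $\Delta(\rho_1,\rho_2) = \cos^{-1}(\sqrt{F})$. Writing $\delta_i = \mathtt{CE}_T(\rho_i)$, the copy-error terms become $\Delta(\rho_i^{\otimes 2}, T(\rho_i)) = \cos^{-1}(\sqrt{1-\delta_i^2}) = \sin^{-1}(\delta_i)$. Assembling the ingredients yields $\cos^{-1}(F) - \cos^{-1}(\sqrt{F}) \leq \sin^{-1}(\delta_1) + \sin^{-1}(\delta_2)$. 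To collapse the right-hand side in terms of $\varepsilon$, I will exploit that $\sin^{-1}$ is convex on $[0,1]$ with $\sin^{-1}(0) = 0$, giving the subadditivity $\sin^{-1}(\delta_1) + \sin^{-1}(\delta_2) \leq \sin^{-1}(\delta_1 + \delta_2) \leq \sin^{-1}(\varepsilon)$, valid because $\varepsilon \leq 1/4 < 1$. Both sides of the resulting inequality lie in $[0,\pi/2]$, so applying the monotone sine preserves it; a short computation via the angle-subtraction formula then produces $\sin\!\left(\cos^{-1}(F) - \cos^{-1}(\sqrt{F})\right) = \sqrt{F(1-F)}\,(\sqrt{1+F} - \sqrt{F})$, so the chain becomes $\sqrt{F(1-F)}\,(\sqrt{1+F} - \sqrt{F}) \leq \varepsilon$.

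The main obstacle will be reconciling this with the cleaner target form. Solving the quadratic in the lemma statement shows that the claim is equivalent to $F(1-F) \leq \varepsilon$, so it suffices to verify the elementary inequality $\sqrt{1+F} - \sqrt{F} \geq \sqrt{F(1-F)}$ for $F \in [0,1]$; I would square both sides and simplify to $(F^2 + F - 1)^2 \geq 0$, which is manifestly true. Multiplying the Bures-angle bound by $\sqrt{F(1-F)}$ then delivers $F(1-F) \leq \sqrt{F(1-F)}\,(\sqrt{1+F} - \sqrt{F}) \leq \varepsilon$, completing the argument. The subtle aspect is recognizing that the natural output of the Bures triangle inequality is not $F(1-F) \leq \varepsilon$ directly, but this slightly awkward factored form from which the stated bound must be extracted via the above one-line algebraic identity.
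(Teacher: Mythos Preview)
Your proof is correct and follows essentially the same route as the paper: both arguments chain the Bures-angle triangle inequality through $T(\rho_1),T(\rho_2)$, invoke contractivity of $T$, use multiplicativity of fidelity under tensor products, and reduce to the quadratic $F(1-F)\le\varepsilon$. The only differences are bookkeeping: you use superadditivity of $\sin^{-1}$ where the paper uses $\sin(a+b)\le\sin a+\sin b$, and you make the final algebraic step explicit via the identity $(F^2+F-1)^2\ge 0$, which is arguably tidier than the paper's somewhat terse last line (a minor terminological slip: the inequality $\sin^{-1}\delta_1+\sin^{-1}\delta_2\le\sin^{-1}(\delta_1+\delta_2)$ is superadditivity, not subadditivity, but your direction is correct).
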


\begin{proof}
From the triangle inequality of the Bures angle we know that
\begin{equation}
    \Delta(\rho^{\otimes 2}_1, \rho^{\otimes 2}_2) \leq \Delta(\rho^{\otimes 2}_1, T(\rho_1)) + \Delta(T(\rho_1),T(\rho_2)) + \Delta(T(\rho_2),\rho^{\otimes 2}_2).
\end{equation}
Rearranging terms and taking the $\sin$ on both sides we get
\begin{equation}
    \label{eq:trianglebures}
    \sin\left(\Delta(\rho^{\otimes 2}_1, T(\rho_1)) + \Delta(T(\rho_2),\rho^{\otimes 2}_2)\right) \geq \sin \left(\Delta(\rho^{\otimes 2}_1, \rho^{\otimes 2}_2) - \Delta(T(\rho_1),T(\rho_2)) \right).
\end{equation}
We can bound the left hand side of Equation~\eqref{eq:trianglebures} by recalling that
\begin{equation}
    \label{eq:boundlefttriangle}
    \sin\left(\Delta(\rho^{\otimes 2}_1, T(\rho_1)) + \Delta(T(\rho_2),\rho^{\otimes 2}_2)\right) \leq \underbrace{\sin\left(\Delta(\rho^{\otimes 2}_1, T(\rho_1))\right)}_{\mathtt{CE}_{T}(\rho_1)} + \underbrace{\sin\left( \Delta(T(\rho_2),\rho^{\otimes 2}_2)\right)}_{\mathtt{CE}_{T}(\rho_2)}.
\end{equation}
Additionally, noting that
\begin{align}
     \sin \left(\Delta(\rho^{\otimes 2}_1, \rho^{\otimes 2}_2) - \Delta(T(\rho_1),T(\rho_2)) \right) = \sin &\left(  \Delta(\rho^{\otimes 2}_1, \rho^{\otimes 2}_2) \right) \cos\left( \Delta(T(\rho_1),T(\rho_2)) \right) \nonumber \\ 
     &- \cos\left(  \Delta(\rho^{\otimes 2}_1, \rho^{\otimes 2}_2) \right) \sin\left( \Delta(T(\rho_2),\rho^{\otimes 2}_2)\right), 
\end{align}
and using the following identities/inequalities (derived from the contractive property of the Bures angle and the squaring property of fidelity) we can bound the right hand side of Equation~(\ref{eq:trianglebures}) and write it in terms of fidelity:
\begin{equation}
    \label{eq:boundrighttriangle}
    \begin{gathered}
        \sin \left(  \Delta(\rho^{\otimes 2}_1, \rho^{\otimes 2}_2) \right) = \sqrt{1-F(\rho^{\otimes 2}_1, \rho^{\otimes 2}_2)} = \sqrt{1-F(\rho_1, \rho_2)^2} \\
        \cos\left( \Delta(T(\rho_1),T(\rho_2)) \right) \geq  \cos\left( \Delta(\rho_1,\rho_2) \right) = \sqrt{F(\rho_1,\rho_2)} \\
        \cos \left( \Delta(\rho^{\otimes 2}_1, \rho^{\otimes 2}_2) \right) = \sqrt{F(\rho^{\otimes 2}_1, \rho^{\otimes 2}_2)} = F(\rho_1, \rho_2) \\
        \sin\left( \Delta(T(\rho_1),T(\rho_2)) \right) \leq \sin\left( \Delta(\rho_1, \rho_2) \right) = \sqrt{1-F(\rho_1, \rho_2)}.
    \end{gathered}
\end{equation}
Substituting the inequalities~(\ref{eq:boundlefttriangle}) and~(\ref{eq:boundrighttriangle}) in the left and right hand sides of Equation~(\ref{eq:trianglebures}), respectively, we get:
\begin{align}
    \mathtt{CE}_{T}(\rho_1) + \mathtt{CE}_{T}(\rho_2) &\geq \sqrt{(F(\rho_1,\rho_2))(1-F(\rho_1, \rho_2)^2)} + F(\rho_1, \rho_2)\sqrt{1-F(\rho_1, \rho_2)} \nonumber \\ 
    &\geq F(\rho_1,\rho_2)(1-F(\rho_1, \rho_2)),
\end{align}
where the last step holds because $F(\rho_1,\rho_2) \in [0,1]$. Finally, assume that $ \mathtt{CE}_{T}(\rho_1) + \mathtt{CE}_{T}(\rho_2) \leq \varepsilon \leq \frac{1}{4}$, then the solutions of the quadratic inequality in the interval $[0,1]$ satisfy
\begin{equation}
    \varepsilon \geq F(\rho_1,\rho_2)(1-F(\rho_1, \rho_2)) \implies F(\rho_1, \rho_2) \leq \frac{1}{2} - \sqrt{\frac{1}{4} - \varepsilon} \;\; \; \textnormal{or} \;\;\; F(\rho_1, \rho_2) \geq \frac{1}{2} + \sqrt{\frac{1}{4} - \varepsilon}.
\end{equation}
\end{proof}

%---------------------------------------------------------------------------------
%%%%%%%%%%% PROOF SIZE OF SET WITH MAX PAIRWISE DISTANCES  %%%%%%%%%%%%%%%%%%%%%
%---------------------------------------------------------------------------------

\subsection{Proof of Lemma~\ref{lem:maxquasiduplication}}

Lemma~\ref{lem:maxquasiduplication} is a generalization of the basic property which states that for a finite dimensional Hilbert space $\mathcal{H}$, any set $\lbrace \rho_i \rbrace$ of orthogonal states can have at most $\dim(\mathcal{H})$ elements. The lemma states that this property holds even if the states are not quite orthogonal, but instead have pair-wise fidelities upper bounded by $\delta_0(\mathcal{H})$. To prove this result we will start by proving a restricted version of the main lemma, which is then used to prove the more general statement. First, we restrict the sets to consist only of pure states, then we build upon that version removing the requirement of the states to be pure.

\begin{theorem} (Gram-Schmidt orthogonalization)
    \label{thm:gramschmidt}
    Let $\mathcal{H}$ be a Hilbert space and $\lbrace \ket{\psi_i} \in \mathcal{H} \rbrace_{i=1}^{k}$ a normalized, linearly independent set. For $i=1,2,\ldots,k$ define
    \begin{equation}
        \label{eq:gramschmidtnotnormalized}
        \ket{\tilde{e}_i} = \ket{\psi_i} - \sum_{j<i} \bracket{\tilde{e}_j}{\psi_i}\ket{\tilde{e}_j},
    \end{equation}
    and
    \begin{equation}
        \ket{e_i} = \frac{\ket{\tilde{e}_i}}{|\ket{\tilde{e}_i}|}.
    \end{equation}
    The set $\lbrace \ket{e_i} \rbrace_{i=1}^{k}$ is an orthonormal set.
\end{theorem}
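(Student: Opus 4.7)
The plan is to prove the claim by strong induction on $i$, at each step simultaneously establishing that $\ket{\tilde{e}_i}$ is nonzero (so that the normalization defining $\ket{e_i}$ is well posed) and that $\bracket{e_j}{e_i}=\delta_{ji}$ for all $j\le i$. For the base case $i=1$ the sum in Equation~\eqref{eq:gramschmidtnotnormalized} is empty, so $\ket{\tilde{e}_1}=\ket{\psi_1}$ is already a unit vector by hypothesis, giving $\ket{e_1}=\ket{\psi_1}$, and the singleton $\{\ket{e_1}\}$ is trivially orthonormal.

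For the inductive step, assume $\{\ket{e_1},\ldots,\ket{e_{i-1}}\}$ is orthonormal. A secondary induction shows that $\operatorname{span}\{\ket{e_1},\ldots,\ket{e_{i-1}}\}=\operatorname{span}\{\ket{\psi_1},\ldots,\ket{\psi_{i-1}}\}$: Equation~\eqref{eq:gramschmidtnotnormalized} expresses each $\ket{\tilde{e}_j}$ (and hence $\ket{e_j}$ after normalization) as a linear combination of $\ket{\psi_1},\ldots,\ket{\psi_j}$ in which the coefficient of $\ket{\psi_j}$ is nonzero, and the change of basis is therefore invertible. Consequently, if $\ket{\tilde{e}_i}$ vanished then $\ket{\psi_i}$ would lie in the span of $\ket{\psi_1},\ldots,\ket{\psi_{i-1}}$, contradicting the linear independence of $\{\ket{\psi_i}\}_{i=1}^k$; hence $\ket{\tilde{e}_i}\neq 0$ and $\ket{e_i}$ is well defined with unit norm by construction.

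It remains to verify that $\bracket{e_j}{\tilde{e}_i}=0$ for every $j<i$. This reduces to taking the inner product of Equation~\eqref{eq:gramschmidtnotnormalized} with $\bra{e_j}$ and using the inductive orthonormality hypothesis to annihilate all but the $j$-th term of the sum, which then exactly cancels the contribution $\bracket{e_j}{\psi_i}$ coming from the leading $\ket{\psi_i}$ term. I do not foresee a substantial obstacle here: the argument is the classical Gram-Schmidt calculation, and the only subtlety in the bookkeeping is ensuring that the projection coefficients appearing in the recursion are interpreted consistently with the normalized vectors $\ket{e_j}$ produced at the preceding inductive steps, so that the inductive orthonormality hypothesis applies directly.
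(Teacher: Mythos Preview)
Your cancellation argument in the final paragraph does not go through as written. The recursion in Equation~\eqref{eq:gramschmidtnotnormalized} projects onto the \emph{unnormalized} vectors $\ket{\tilde{e}_j}$, so the $j$-th summand is $\bracket{\tilde{e}_j}{\psi_i}\ket{\tilde{e}_j}$. Pairing with $\bra{e_j}$ yields $\bracket{\tilde{e}_j}{\psi_i}\,\lvert\ket{\tilde{e}_j}\rvert$, whereas the leading term contributes $\bracket{e_j}{\psi_i}=\bracket{\tilde{e}_j}{\psi_i}/\lvert\ket{\tilde{e}_j}\rvert$. These two quantities cancel only when $\lvert\ket{\tilde{e}_j}\rvert=1$, which already fails for $j=2$ since $\lvert\ket{\tilde{e}_2}\rvert^{2}=1-\lvert\bracket{\psi_1}{\psi_2}\rvert^{2}<1$ whenever $\ket{\psi_1},\ket{\psi_2}$ are not orthogonal.

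In fact the recursion exactly as stated does not produce an orthogonal family. Take $\ket{\psi_1}=(1,0,0)$, $\ket{\psi_2}=\tfrac{1}{\sqrt{2}}(1,1,0)$, $\ket{\psi_3}=\tfrac{1}{\sqrt{2}}(0,1,1)$; then $\ket{\tilde{e}_2}=(0,\tfrac{1}{\sqrt{2}},0)$ and a direct computation gives $\bracket{\tilde{e}_2}{\tilde{e}_3}=\tfrac{1}{4}\neq 0$. The paper does not supply its own proof here (the theorem is quoted as classical), and the formula is evidently intended to be the standard Gram--Schmidt recursion, which either uses the normalized $\ket{e_j}$ in place of $\ket{\tilde{e}_j}$ or inserts the factor $\bracket{\tilde{e}_j}{\tilde{e}_j}^{-1}$ in each summand. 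Your inductive argument is the correct one for that corrected recursion; the ``subtlety in the bookkeeping'' you flag in your last sentence is exactly where the literal formula breaks, and it cannot be dismissed.
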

The following properties refer directly to the sets defined in Theorem~\ref{thm:gramschmidt}:
\begin{lemma}
    \label{lem:gramschmidtij}
    Let $\mathcal{H}$ be an $N$-dimensional Hilbert space, $0\leq \delta \leq 1$, and $\lbrace \ket{\psi_i} \in \mathcal{H} \rbrace$ a normalized, linearly independent set such that for $i \neq j$
    \begin{equation}
        | \bracket{\psi_i}{\psi_j} |^2 \leq \delta.
    \end{equation}
    The non-normalized Gram-Schmidt vectors $\lbrace \ket{\tilde{e}_i} \rbrace_{i=1}^{k}$ satisfy for $j < i$
    \begin{equation}
        | \bracket{\psi_i}{\tilde{e}_j} | \leq a_{j-1}\sqrt{\delta},
    \end{equation}
    where
    \begin{equation}
    \label{eq:gramschmidtsequance}
        a_j = 
            \begin{cases}
                1 &\quad\text{if } j=0\\
                a_{j-1}^2 + a_{j-1} &\quad\text{if } j \geq 1 \\
            \end{cases}
    \end{equation}
\end{lemma}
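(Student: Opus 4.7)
The plan is to prove this by strong induction on $j$, using the recursive Gram-Schmidt formula in Equation~(\ref{eq:gramschmidtnotnormalized}) to expand $\ket{\tilde{e}_j}$ and then applying the induction hypothesis to the inner products of both $\ket{\psi_i}$ and $\ket{\psi_j}$ with previously constructed vectors $\ket{\tilde{e}_l}$ for $l<j$.

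First I would dispose of the base case $j=1$: the sum in Equation~(\ref{eq:gramschmidtnotnormalized}) is empty, so $\ket{\tilde{e}_1}=\ket{\psi_1}$ and $|\bracket{\psi_i}{\tilde{e}_1}|=|\bracket{\psi_i}{\psi_1}|\leq\sqrt{\delta}=a_0\sqrt{\delta}$ by hypothesis. For the inductive step, assume the claim for every $j'<j$ and every $i>j'$. Then
\begin{equation*}
|\bracket{\psi_i}{\tilde{e}_j}|=\left|\bracket{\psi_i}{\psi_j}-\sum_{l<j}\bracket{\tilde{e}_l}{\psi_j}\bracket{\psi_i}{\tilde{e}_l}\right|\leq|\bracket{\psi_i}{\psi_j}|+\sum_{l=1}^{j-1}|\bracket{\tilde{e}_l}{\psi_j}|\cdot|\bracket{\psi_i}{\tilde{e}_l}|.
\end{equation*}
The first term is bounded by $\sqrt{\delta}$ directly from the assumption on the $\{\ket{\psi_i}\}$, and both factors inside the sum are bounded by $a_{l-1}\sqrt{\delta}$ by the induction hypothesis (using $j>l$ for the first and $i>l$ for the second), giving $|\bracket{\psi_i}{\tilde{e}_j}|\leq\sqrt{\delta}+\delta\sum_{l=1}^{j-1}a_{l-1}^2$.

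To close the induction I would use $\delta\leq\sqrt{\delta}$ (which holds since $\delta\leq 1$) to factor out $\sqrt{\delta}$, reducing the target to showing $1+\sum_{l=0}^{j-2}a_l^2\leq a_{j-1}$. The key observation is that the recurrence $a_j=a_{j-1}^2+a_{j-1}$ is engineered precisely so that the stronger identity $a_m=1+\sum_{l=0}^{m-1}a_l^2$ holds with equality for all $m\geq 0$; this follows by a one-line induction, since $a_{m+1}=a_m^2+a_m=a_m^2+1+\sum_{l=0}^{m-1}a_l^2=1+\sum_{l=0}^{m}a_l^2$. The main subtlety, and the only nontrivial step, is recognizing that the peculiar definition of $a_j$ is exactly what makes the recursive bound close; once this identity is in hand, the whole argument reduces to routine application of the triangle inequality and the induction hypothesis.
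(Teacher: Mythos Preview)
Your proof is correct and follows essentially the same approach as the paper's: induction on $j$, expansion via the Gram--Schmidt recursion, triangle inequality, the bound $\delta\leq\sqrt{\delta}$, and the identity $a_m=1+\sum_{l=0}^{m-1}a_l^2$. The only difference is that you explicitly verify this last identity by a short induction, whereas the paper simply asserts the equality $1+\sum_{k=1}^{j}a_{k-1}^2=a_j$ in the final step.
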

\begin{proof}
We proceed by induction over the index $j$. Note that for $j=1$ and any $i > 1$ we have that 
\begin{equation}
    | \bracket{\psi_i}{\tilde{e}_j} | = | \bracket{\psi_i}{\psi_1} | \leq \sqrt{\delta} = a_0\sqrt{\delta}.
\end{equation}
Assume now that for some $i$ and $j < i-1$ it holds that for all $1 < i' \leq i, 1 \leq j' \leq j$ and $j' < i'$
\begin{equation}
    | \bracket{\psi_{i'}}{\tilde{e}_{j'}} | \leq a_{j'-1}\sqrt{\delta},
\end{equation}
then
\begin{align}
    | \bracket{\psi_i}{\tilde{e}_{j+1}} | &= \big| \bracket{\psi_i}{\psi_{j+1}} - \sum_{k \leq j}  \bracket{\tilde{e}_{k}}{\psi_{j+1}} \bracket{\psi_i}{\tilde{e}_{k}} \big| \nonumber \\
    &\leq \underbrace{\big| \bracket{\psi_i}{\psi_{j+1}}  \big|}_{\leq \sqrt{\delta}} + \sum_{k=1}^{j}  \underbrace{\big|\bracket{\tilde{e}_{k}}{\psi_{j+1}} \big| \big|\bracket{\psi_i}{\tilde{e}_{k}} \big|}_{\leq a_{k-1}^2 \delta \:\: \leq \:\: a_{k-1}^2 \sqrt{\delta}} \nonumber \\
    &\leq \left(1 + \sum_{k=1}^{j} a_{k-1}^2  \right) \sqrt{\delta} \nonumber \\
    &= a_{j}\sqrt{\delta}
\end{align}
\end{proof}

\begin{corolary}
\label{cor:gramschmidtii}
Under the assumptions of Lemma~\ref{lem:gramschmidtij}, with $\delta \leq 2^{-3^N}$
\begin{equation}
    \label{eq:gramschmidbound}
    | \bracket{\psi_{i}}{e_{i}} | \geq \big| 1 - 2^{3^N}\delta \big|
\end{equation}
\end{corolary}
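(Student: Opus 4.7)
The plan is to write $|\bracket{\psi_i}{e_i}| = \bracket{\psi_i}{\tilde{e}_i}/|\ket{\tilde{e}_i}|$ and bound the numerator from below by close to $1$ and the denominator from above by $1$. First I would take the inner product of Equation~\eqref{eq:gramschmidtnotnormalized} with $\bra{\psi_i}$, which collapses to
\begin{equation}
    \bracket{\psi_i}{\tilde{e}_i} = 1 - \sum_{j<i} |\bracket{\psi_i}{\tilde{e}_j}|^2,
\end{equation}
because the coefficient of $\ket{\tilde{e}_j}$ in the sum is exactly $\bracket{\tilde{e}_j}{\psi_i}$. Lemma~\ref{lem:gramschmidtij} then gives $|\bracket{\psi_i}{\tilde{e}_j}|^2 \leq a_{j-1}^2 \delta$, so the heart of the numerator estimate reduces to summing $a_{j-1}^2$.

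The recurrence~\eqref{eq:gramschmidtsequance} rearranges to $a_{j-1}^2 = a_j - a_{j-1}$, which telescopes to $\sum_{j=1}^{i-1} a_{j-1}^2 = a_{i-1} - 1$. I would bound $a_{i-1} \leq a_{N-1}$ by monotonicity, and then prove by induction on $j$ using $a_j = a_{j-1}(a_{j-1}+1) \leq 2 a_{j-1}^2$ that $a_j \leq 2^{3^j}$; the inductive step needs only $1 + 2\cdot 3^{j-1} \leq 3^j$, which holds for all $j \geq 1$. Combining these yields
\begin{equation}
    \bracket{\psi_i}{\tilde{e}_i} \;\geq\; 1 - (a_{N-1}-1)\,\delta \;\geq\; 1 - 2^{3^N}\delta,
\end{equation}
and the hypothesis $\delta \leq 2^{-3^N}$ makes this quantity non-negative.

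To finish, I would show by induction on $i$ that $|\ket{\tilde{e}_i}| \leq 1$, starting from $\ket{\tilde e_1} = \ket{\psi_1}$. For the inductive step, I would expand
\begin{equation}
    |\ket{\tilde{e}_i}|^2 = 1 - 2\sum_{j<i}|\bracket{\psi_i}{\tilde{e}_j}|^2 + \sum_{j,k<i} \bracket{\psi_i}{\tilde{e}_j}\bracket{\tilde{e}_k}{\psi_i}\bracket{\tilde{e}_j}{\tilde{e}_k},
\end{equation}
and treat the diagonal part $j=k$ using the induction hypothesis $|\ket{\tilde e_j}|^2 \leq 1$, so that it contributes at most $\sum_j |\bracket{\psi_i}{\tilde{e}_j}|^2$, cancelling one factor of $2$ in the middle sum. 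Combined with the numerator bound and $|\ket{\tilde e_i}| \leq 1$, this yields $|\bracket{\psi_i}{e_i}| \geq \bracket{\psi_i}{\tilde{e}_i} \geq 1 - 2^{3^N}\delta$, as claimed.

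The main obstacle is the off-diagonal contribution in the denominator step: the recursion used in the paper does not divide by $|\ket{\tilde{e}_j}|^2$, so the vectors $\ket{\tilde{e}_j}$ are only approximately orthogonal, and the cross-terms $\bracket{\tilde{e}_j}{\tilde{e}_k}$ are in general nonzero for $j,k \geq 2$. My plan is to carry a simultaneous induction producing $|\bracket{\tilde{e}_j}{\tilde{e}_k}| = O(a_{\max(j,k)-1}\sqrt{\delta})$ for $j \neq k$ (which follows by substituting the definition into $\bracket{\tilde{e}_j}{\tilde{e}_k}$ and using the inductive control on both $|\bracket{\psi_\ell}{\tilde{e}_m}|$ from Lemma~\ref{lem:gramschmidtij} and the previous off-diagonal bounds). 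Under $\delta \leq 2^{-3^N}$, these cross terms are subdominant and absorb into the $2^{3^N}\delta$ slack, closing the induction.
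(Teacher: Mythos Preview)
Your strategy matches the paper's: write $|\bracket{\psi_i}{e_i}| = \bracket{\psi_i}{\tilde{e}_i} / |\ket{\tilde{e}_i}|$, compute the numerator exactly as $1 - \sum_{j<i}|\bracket{\psi_i}{\tilde{e}_j}|^2$, bound it via Lemma~\ref{lem:gramschmidtij} together with the telescoping identity $\sum_{j=1}^{i-1}a_{j-1}^2 = a_{i-1}-1$ and the doubly exponential estimate $a_j \leq 2^{3^j}$, and finish by showing $|\ket{\tilde{e}_i}| \leq 1$.

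The only divergence is in how the denominator bound is obtained. The paper dispatches $|\ket{\tilde{e}_i}|\leq 1$ in one line: from $\ket{\psi_i} = \ket{\tilde{e}_i} + \sum_{j<i}\bracket{\tilde{e}_j}{\psi_i}\ket{\tilde{e}_j}$ it asserts that the two summands are mutually orthogonal and applies Pythagoras with $|\ket{\psi_i}|=1$. You are right that the recursion \emph{as literally written} in Equation~\eqref{eq:gramschmidtnotnormalized} (projecting onto the unnormalized $\tilde{e}_j$) does not produce pairwise orthogonal $\ket{\tilde{e}_j}$ --- for instance $\bracket{\tilde{e}_3}{\tilde{e}_2} = \bracket{\psi_3}{\tilde{e}_2}\bigl(1-|\ket{\tilde{e}_2}|^2\bigr)$ --- so that step is not literally justified. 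The paper is tacitly working in the standard Gram--Schmidt picture (its Theorem~\ref{thm:gramschmidt} asserts the $\ket{e_i}$ are orthonormal), under which the $\ket{\tilde{e}_j}$ really are orthogonal and $|\ket{\tilde{e}_i}|\leq 1$ is immediate. Your proposed cross-term induction is a plausible workaround if one insists on the literal recursion, but it is only sketched and is more work than necessary; the cleaner route is to take the orthogonality as part of the Gram--Schmidt setup and drop the off-diagonal analysis entirely.
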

\begin{proof}
From Equation~(\ref{eq:gramschmidtnotnormalized}) we know that
\begin{equation}
    \ket{\psi_i}  = \ket{\tilde{e}_i} + \sum_{j<i} \bracket{\tilde{e}_j}{\psi_i}\ket{\tilde{e}_j}.
\end{equation}
Recalling that the two terms on the right side are mutually orthogonal and that $|\ket{\psi_i}| =1$, it follows that $|\ket{\tilde{e}_i}| \leq 1$ and therefore
\begin{equation}
    | \bracket{\psi_i}{e_j} | \geq | \bracket{\psi_i}{\tilde{e}_j} |.
\end{equation}
On the other hand, we can bound the sequence defined in Equation~(\ref{eq:gramschmidbound}) by the doubly exponential sequence
\begin{equation}
    a_{j+1} = a_{j}^2 + a_{j} \leq 2a_{j}^2 \leq 2^{3^j},
\end{equation}
and hence, by Lemma~\ref{lem:gramschmidtij}
\begin{align}
     | \bracket{\psi_i}{e_i} | \geq | \bracket{\psi_i}{\tilde{e}_i} | &\geq \big| 1 - \sum_{j<i}| \bracket{\tilde{e}_j}{\psi_i} |^2 \big| \nonumber \\
     &\geq \big| 1 - (a_i-1)\delta \big| \nonumber \\
     &\geq \big| 1 - (a_n)\delta \big| \nonumber \\
     &\geq 1 - 2^{3^N}\delta 
\end{align}
\end{proof}

\begin{lemma} (Pure states version)
\label{lem:maxquasiduplicationpure}
Let $\mathcal{H}$ be an $N$-dimensional Hilbert space and $A = \lbrace \ket{\psi_i} \in \mathcal{H} \rbrace$ a normalized set such that for $i \neq j$
\begin{equation}
    \label{eq:maxquasiduplicationpure}
    | \bracket{\psi_i}{\psi_j} |^2 \leq \delta_{0}^{\textnormal{pure}}(N).
\end{equation}
with
\begin{equation}
    \label{eq:deltapurestates}
    \delta_{0}^{\textnormal{pure}}(N) = \left( \frac{1-\sqrt{1-\frac{1}{N}}}{1+2^{\left(\frac{3^N+1}{2}\right)}} \right)^2.
\end{equation}
Then $A$ cannot have more than $N$ elements. 
\end{lemma}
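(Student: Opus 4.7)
The plan is to argue by contradiction: assume there exist $N+1$ unit vectors $\ket{\psi_1}, \ldots, \ket{\psi_{N+1}} \in A$. Since $\dim \mathcal{H} = N$, these vectors must be linearly dependent, so there are scalars $c_1, \ldots, c_{N+1}$, not all zero, with $\sum_i c_i \ket{\psi_i} = 0$. After rescaling, I can take $|c_{i_0}| = \max_i |c_i| = 1$ for some index $i_0$. Rewriting the dependence as $\ket{\psi_{i_0}} = -\sum_{i \neq i_0} c_i \ket{\psi_i}$ and taking the inner product with $\ket{\psi_{i_0}}$ will yield
\begin{equation*}
1 = \bracket{\psi_{i_0}}{\psi_{i_0}} = -\sum_{i \neq i_0} c_i \bracket{\psi_{i_0}}{\psi_i}.
\end{equation*}

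Next, I will apply the triangle inequality together with $|c_i| \leq 1$ and the hypothesis $|\bracket{\psi_{i_0}}{\psi_i}| \leq \sqrt{\delta_0^{\textnormal{pure}}(N)}$ to conclude $1 \leq N \sqrt{\delta_0^{\textnormal{pure}}(N)}$. The contradiction then follows from the crude upper bound
\begin{equation*}
\delta_0^{\textnormal{pure}}(N) \leq \frac{1}{\bigl(1+2^{(3^N+1)/2}\bigr)^2} \leq \frac{1}{2^{3^N+1}},
\end{equation*}
which gives $N\sqrt{\delta_0^{\textnormal{pure}}(N)} \leq N / 2^{(3^N+1)/2} < 1$ for every $N \geq 1$, completing the contradiction.

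This direct linear-dependence argument establishes the lemma with room to spare; any $\delta < 1/N^2$ would in fact suffice. The specific, much smaller value of $\delta_0^{\textnormal{pure}}(N)$ in the statement signals that the author's intended proof runs through the Gram-Schmidt machinery set up in Theorem~\ref{thm:gramschmidt}, Lemma~\ref{lem:gramschmidtij} and Corollary~\ref{cor:gramschmidtii}, presumably to prepare the ground for the mixed-state generalization in Lemma~\ref{lem:maxquasiduplication}. That alternative route would apply Gram-Schmidt to any $N$ linearly independent vectors from $A$, expand a hypothetical $(N+1)$-th vector as $\ket{\psi_{N+1}} = \sum_j c_j \ket{e_j}$, exploit the triangular structure $\bracket{\psi_i}{e_j} = 0$ for $j > i$ in order to solve for the coefficients $c_j$ iteratively using Corollary~\ref{cor:gramschmidtii} on the diagonal and Lemma~\ref{lem:gramschmidtij} off the diagonal, and finally contradict $\sum_j |c_j|^2 = 1$. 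The main obstacle along that route would be threading the double-exponential sequence $a_j$ through the cascade of estimates so as to recover precisely the factor $2^{(3^N+1)/2}$ in the denominator of $\delta_0^{\textnormal{pure}}(N)$; the elementary argument sketched above sidesteps this difficulty entirely and is what I would present as the principal proof.
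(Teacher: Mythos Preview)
Your argument is correct and considerably simpler than the paper's. The paper proceeds exactly along the Gram--Schmidt route you anticipate: it takes a maximal linearly independent subset $A'\subseteq A$, applies Gram--Schmidt to obtain an orthonormal basis $\{\ket{e_i}\}$ of $\mathrm{Span}(A)$, observes by pigeonhole that any hypothetical $\ket{\tilde\phi}\in A\setminus A'$ satisfies $|\bracket{\tilde\phi}{e_k}|^2\ge 1/N$ for some $k$, and then combines Corollary~\ref{cor:gramschmidtii} with the triangle inequality for the trace distance to force $|\bracket{\tilde\phi}{\phi_k}|^2>\delta_0^{\textnormal{pure}}(N)$. The doubly exponential factor $2^{(3^N+1)/2}$ in the definition of $\delta_0^{\textnormal{pure}}$ is there precisely to absorb the error bound $2^{3^N}\delta$ from Corollary~\ref{cor:gramschmidtii}. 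Your direct linear-dependence argument bypasses all of this and shows, as you note, that any $\delta<1/N^2$ already suffices; since the mixed-state Lemma~\ref{lem:maxquasiduplication} invokes the pure-state lemma only as a black box, your shortcut would propagate there as well, and one could in principle replace $\delta_0(N)$ by something polynomial in $1/N$. The paper's approach buys nothing for the lemma itself; the Gram--Schmidt estimates appear to have been developed with this application in mind but turn out to be unnecessary.
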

\begin{proof}
We intend to show that for $\delta_{0}^{\textnormal{pure}}(N)$ defined as in Equation~(\ref{eq:deltapurestates}), the set $A$ must be linearly independent. Let $A'=\lbrace \ket{\phi_{i}} \rbrace_{i=1}^{M} \subseteq A$ be a maximal linearly independent subset of $A$ and $A'' = A \smallsetminus A'$. We proceed to prove that any other state belonging to the Hilbert space $\mathcal{H}'= \textnormal{Span}(A) = \textnormal{Span}(A')$ cannot satisfy Equation~(\ref{eq:maxquasiduplicationpure}), and thus $A''$ must be the empty set.

Let $\lbrace \ket{e_{i}} \rbrace_{i=1}^{M}$ be the set obtained from $A'$ through the Gram-Schmidt process. Because the $\ket{e_{i}}$ form an orthonormal basis for $\mathcal{H}'$, it holds that for any $\ket{\tilde{\phi}} \in A'' \subseteq \mathcal{H}'$ there exists $k$ such that
\begin{equation}
    \label{eq:fidelitywithbasis}
    |\bracket{\tilde{\phi}}{e_{k}}|^2 \geq \frac{1}{M} \geq \frac{1}{N}.
\end{equation}
Furthermore, from the triangle inequality of the trace distance
\begin{equation}
    D(\ket{\tilde{\phi}}, \ket{e_{k}}) + D(\ket{e_{k}}, \ket{\phi_{k}}) \geq D(\ket{\tilde{\phi}}, \ket{\phi_{k}}),
\end{equation}
writing the trace distances in terms of inner products we obtain
\begin{equation}
    \sqrt{1-\big| \bracket{\tilde{\phi}}{e_{k}} \big|^2} +   \sqrt{1-\big| \bracket{e_{k}}{\phi_{k}} \big|^2} \geq \sqrt{1-\big| \bracket{\tilde{\phi}}{\phi_{k}} \big|^2}.
\end{equation}
We can now use the inequalities~(\ref{eq:gramschmidbound}) and~(\ref{eq:fidelitywithbasis}) to get
\begin{equation}
    \sqrt{1-\frac{1}{N}} +   \sqrt{1-\left(1 - 2^{3^N}\delta_{0}^{\textnormal{pure}}  \right)^2} \geq \sqrt{1-\big| \bracket{\tilde{\phi}}{\phi_{k}} \big|^2},
\end{equation}
which we can solve for $\big| \bracket{\tilde{\phi}}{\phi_{k}} \big|^2$ as follows
\begin{equation}
    \label{eq:fidelitywithoutside}
    \big| \bracket{\tilde{\phi}}{\phi_{k}} \big|^2 \geq 1 - \left( \sqrt{1-\frac{1}{N}} + \sqrt{1-\left(1 - 2^{3^N}\delta_{0}^{\textnormal{pure}}  \right)^2} \right)^2.
\end{equation}
All that remains now is to show that the right side of Equation~(\ref{eq:fidelitywithoutside}) is strictly greater than  $\delta_{0}^{\textnormal{pure}}$. Indeed, setting $a = 1-\sqrt{1-\frac{1}{N}}$ and $b = 2^{3^{N}}$ we can rewrite Equation~\eqref{eq:deltapurestates} as
\begin{equation}
    \sqrt{\delta_{0}^{\textnormal{pure}}} = a -\sqrt{2b \delta_{0}^{\textnormal{pure}}},
\end{equation}
and we can now use the fact that for all $N \geq 1$ it holds that $0 < \delta_{0}^{\textnormal{pure}}(N) \leq 2^{-3^{N}} < 1$ to conclude that
\begin{align}
    \delta_{0}^{\textnormal{pure}} < \sqrt{\delta_{0}^{\textnormal{pure}}} &= a -\sqrt{2b \delta_{0}^{\textnormal{pure}}} \nonumber\\
    &< a - \sqrt{b \delta_{0}^{\textnormal{pure}} (2- b \delta_{0}^{\textnormal{pure}})} \nonumber \\
    &= 1- \underbrace{\left( (1-a) + \sqrt{b \delta_{0}^{\textnormal{pure}} (2- b \delta_{0}^{\textnormal{pure}})}\right)}_{\leq 1} \nonumber \\
    &\leq 1-\left( (1-a) + \sqrt{b \delta_{0}^{\textnormal{pure}} (2- b \delta_{0}^{\textnormal{pure}})}\right)^2 \nonumber \\
    &= 1 - \left( \sqrt{1-\frac{1}{N}} + \sqrt{1-\left(1 - 2^{3^N}\delta_{0}^{\textnormal{pure}}  \right)^2} \right)^2 \nonumber \\
    &\leq \big| \bracket{\tilde{\phi}}{\phi_{k}} \big|^2.
\end{align}
This means that no state $\ket{\tilde{\phi}} \in \mathcal{H}'$ can satisfy both $\ket{\tilde{\phi}} \notin A'$ and $\ket{\tilde{\phi}} \in A$. Therefore, $A = A'$ is a linearly independent set and as such may have at most $N$ elements.
\end{proof}

\begin{lemma} (General statement)
Let $\mathcal{H}$ be an $N$-dimensional Hilbert space and $A = \lbrace \rho_i \in \mathcal{D}(\mathcal{H})\rbrace$ be a set of density operators such that for $i \neq j$:
\begin{equation}
    \label{eq:maxquasiduplicationmixed}
    F(\rho_i, \rho_j) \leq \delta_0(N),
\end{equation}
with
\begin{equation}
    \delta_0(N) =\left( \frac{\delta_{0}^{\textnormal{pure}}(N)}{N} \right)^2.
\end{equation}
Then $A$ cannot have more than $N$ elements. 
\end{lemma}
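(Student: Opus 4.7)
My plan is to reduce the general statement to the pure-state version, Lemma~\ref{lem:maxquasiduplicationpure}, by associating to each $\rho_i\in A$ a single representative pure state and showing that the resulting collection inherits a sufficiently small pairwise overlap. Concretely, for every $\rho_i\in A$ I would pick $\ket{\phi_i}\in\mathcal{H}$ to be a normalized eigenvector corresponding to a largest eigenvalue $\lambda_i^{\max}$ of $\rho_i$. Since $\tr(\rho_i)=1$ and $\rho_i$ has at most $N$ nonzero eigenvalues, we have the basic estimate $\lambda_i^{\max}\geq 1/N$, which will produce the needed factors of $N$ below.

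The heart of the argument is to transfer the hypothesis $F(\rho_i,\rho_j)\leq\delta_0(N)$ into an inner-product bound $|\bracket{\phi_i}{\phi_j}|^2\leq\delta_0^{\textnormal{pure}}(N)$. I would use two standard facts about the fidelity $F$ with the convention adopted in the paper: joint concavity of $F$, and the pure-state identity $F(\ketbra{\psi},\tau)=\bra{\psi}\tau\ket{\psi}$. Writing the spectral decompositions $\rho_i=\sum_k\lambda_k^{(i)}\ketbra{e_k^{(i)}}$ (with $\ket{e_1^{(i)}}=\ket{\phi_i}$), joint concavity retained on just the leading term gives $F(\rho_i,\rho_j)\geq\lambda_i^{\max}\,\bra{\phi_i}\rho_j\ket{\phi_i}$, hence $\bra{\phi_i}\rho_j\ket{\phi_i}\leq N\,F(\rho_i,\rho_j)\leq N\,\delta_0(N)$. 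Expanding $\rho_j$ in its own eigenbasis and keeping only the $\ket{\phi_j}$ term yields the matching lower bound $\bra{\phi_i}\rho_j\ket{\phi_i}\geq\lambda_j^{\max}|\bracket{\phi_i}{\phi_j}|^2\geq (1/N)|\bracket{\phi_i}{\phi_j}|^2$. Chaining the two estimates and substituting the definition of $\delta_0(N)$ gives
\begin{equation}
|\bracket{\phi_i}{\phi_j}|^2\leq N^2\,\delta_0(N)=\bigl(\delta_0^{\textnormal{pure}}(N)\bigr)^2\leq\delta_0^{\textnormal{pure}}(N),
\end{equation}
where the last inequality uses $0\leq\delta_0^{\textnormal{pure}}(N)\leq 1$, which is immediate from the explicit formula in Equation~(\ref{eq:deltapurestates}).

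Since $\delta_0^{\textnormal{pure}}(N)<1$, this overlap bound also forces the vectors $\ket{\phi_i}$ to be pairwise distinct, so $\{\ket{\phi_i}\}$ is a normalized set satisfying the hypothesis of Lemma~\ref{lem:maxquasiduplicationpure}; invoking that lemma yields $|A|=|\{\ket{\phi_i}\}|\leq N$, as required. I do not foresee any genuine obstacle: the nontrivial content is all packaged into the pure-state lemma, and the reduction relies only on joint concavity of fidelity together with the trivial bound $\lambda_i^{\max}\geq 1/N$. The one point to watch is the factor accounting: the two powers of $N$ accumulated from the extremal-eigenvalue bounds must exactly absorb the $1/N^2$ built into the definition $\delta_0(N)=(\delta_0^{\textnormal{pure}}(N)/N)^2$, which is precisely what makes the reduction close.
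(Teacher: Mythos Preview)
Your reduction to the pure-state lemma via dominant eigenvectors is correct and runs parallel to the paper's strategy. Both proofs select, for each $\rho_i$, an eigenvector with eigenvalue at least $1/N$ and show that the resulting family of pure states satisfies the hypothesis of Lemma~\ref{lem:maxquasiduplicationpure}. The paper obtains its overlap bound by writing down an explicit purification (the vectors $\ket{\psi_j^s(i)}$ of Equation~\eqref{eq:tricksuperposition}) and invoking Uhlmann's theorem to deduce $\sqrt{\delta_0(N)}\geq\sqrt{\lambda_i^k\lambda_j^s}\,|\bracket{\psi_i^k}{\psi_j^s}|^2$; you instead use concavity of $F$ in one argument together with the pure-state identity $F(\ketbra{\psi},\tau)=\bra{\psi}\tau\ket{\psi}$, which is more streamlined and avoids the bespoke purification. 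Your resulting bound $|\bracket{\phi_i}{\phi_j}|^2\leq(\delta_0^{\textnormal{pure}}(N))^2$ is in fact sharper than the paper's $|\bracket{\psi_i^k}{\psi_j^s}|^2\leq\delta_0^{\textnormal{pure}}(N)$, though either suffices for the reduction.

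One terminological caveat: you invoke ``joint concavity of $F$'', but with the paper's convention $F(\rho,\sigma)=\bigl(\tr\sqrt{\sqrt{\rho}\sigma\sqrt{\rho}}\,\bigr)^2$ the \emph{squared} fidelity is not jointly concave in general (only the root fidelity $\sqrt{F}$ is). What you actually use, and what does hold, is concavity of $F(\cdot,\sigma)$ in its first argument for fixed $\sigma$: writing $\rho_i=\lambda_i^{\max}\ketbra{\phi_i}+(1-\lambda_i^{\max})\tilde\rho_i$ and dropping the nonnegative remainder term gives exactly your inequality $F(\rho_i,\rho_j)\geq\lambda_i^{\max}\bra{\phi_i}\rho_j\ket{\phi_i}$. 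So the step is sound; only the name needs adjusting.
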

\begin{proof}
To prove the result we will proceed as follows: for each $\rho_i \in A$ we will define an associated pure state such that if $A$ satisfies Equation~\eqref{eq:maxquasiduplicationmixed} then the associated set $I$ of pure states satisfies Equation~\eqref{eq:maxquasiduplicationpure}, which in turn by lemma~\ref{lem:maxquasiduplicationpure} implies that $|I|=|A| \leq N$.

Let $M = |A|$. Following the spectral decomposition theorem; for $i =1, \ldots ,M$, let $\lbrace \ket{\psi_{i}^{k}}\rbrace_{k=1}^{N}$ be a set of orthogonal states and $\lbrace \lambda_{i}^{k}\rbrace_{k=1}^{N}$ a set of non-negative real numbers such that
\begin{equation}
    \rho_i = \sum_{k=1}^{N} \lambda_{i}^{k} \ketbra{\psi_{i}^{k}}.
\end{equation}
Furthermore, it is easy to check that for $i,j = 1, \ldots ,M$ the following is an orthonormal set:
\begin{equation}
    \label{eq:tricksuperposition}
    \Bigl\{ \ket{\psi_{j}^{s}(i)} =  \sum_{k=1}^{N} \bracket{\psi_{j}^{s}}{\psi_{i}^{k}} \ket{\psi_{i}^{k}} \Bigr\}_{s=1}^{N}. 
\end{equation}
We intend now to use the Uhlmann's theorem to find a bound for inner products $\bracket{\psi_{j}^{s}}{\psi_{i}^{k}}$ in terms of $\delta_0(N)$. Recall that any state of the form 
\begin{equation}
    \label{eq:purification}
    \sum_{k=1}^{N} \sqrt{\lambda_{i}^{k}} \ket{\psi_{i}^{k}}_{\mathcal{H}}\ket{\phi_{i}^{k}}_{\mathcal{H}'}
\end{equation}
is a purification of $\rho_i$ in $\mathcal{H} \otimes \mathcal{H}'$ for any orthogonal basis $\lbrace \ket{\phi_{i}^{k}} \in \mathcal{H}'\rbrace$, where $\mathcal{H}'$ is an ancilliary Hilbert space isomorphic to $\mathcal{H}$. Now, from Ulhmann's theorem:
\begin{align}
    \delta_0(N) \geq F(\rho_i, \rho_j) &= \max_{\Psi_i, \Psi_j} |\bracket{\Psi_i}{\Psi_j}|^2 \nonumber \\
    &\geq \Big| \sum_{k} \sqrt{\lambda_{i}^{k}} {}_{\mathcal{H}}\bra{\psi_{i}^{k}} {}_{\mathcal{H}'}\bra{\psi_{i}^{k}} \sum_{s} \sqrt{\lambda_{j}^{s}} \ket{\psi_{j}^{s}}_{\mathcal{H}} \ket{\psi_{j}^{s}(i)}_{\mathcal{H}'} \Big|^2 \nonumber \\
    &= \Big| \sum_{k,s} \sqrt{\lambda_{i}^{k} \lambda_{j}^{s}} \bracket{\psi_{i}^{k}}{\psi_{j}^{s}} \underbrace{\bracket{\psi_{i}^{k}}{\psi_{j}^{s}(i)}}_{\bracket{\psi_{j}^{s}}{\psi_{i}^{k}}} \Big|^2 \nonumber \\
    &= \Big| \sum_{k,s} \sqrt{\lambda_{i}^{k} \lambda_{j}^{s}} \big| \bracket{\psi_{i}^{k}}{\psi_{j}^{s}} \big|^2  \Big|^2,
\end{align}
where the maximum is taken over all purifications $\Psi_i, \Psi_j$ of $\rho_i,\rho_j$, respectively. Since all terms in the last summation are positive numbers we can conclude that for any $k,s = 1, \ldots, N$:
\begin{equation}
    \label{eq:fidelityeigenvectors}
    \sqrt{\delta_0(N)} \geq \sqrt{\lambda_{i}^{k} \lambda_{j}^{s}} \big| \bracket{\psi_{i}^{k}}{\psi_{j}^{s}} \big|^2.
\end{equation}
Define now the sets 
\begin{equation}
    I_i = \Bigl\{ \ket{\psi_{i}^{k}} \: | \: \lambda_{i}^{k} \geq \frac{1}{N} \Bigr\}; \;\;\; I = \bigcup_{i=1}^{M}I_i,
\end{equation}
notice that the set $I$ must have \emph{at least} $M$ elements because every density operator acting on an $N$-dimensional space must have at least one eigenvalue greater or equal to $\frac{1}{N}$. Substituting $\ket{\psi_{i}^{k}} \in I_i$ and $\ket{\psi_{j}^{s}} \in I_j$ for $i \neq j$ in Equation~\eqref{eq:fidelityeigenvectors} we obtain
\begin{equation}
    \frac{\delta_{0}^{\textnormal{pure}}(N)}{N} = \sqrt{\delta_0(N)} \geq \sqrt{\lambda_{i}^{k} \lambda_{j}^{s}} \big| \bracket{\psi_{i}^{k}}{\psi_{j}^{s}} \big|^2 \geq \frac{1}{N} \big| \bracket{\psi_{i}^{k}}{\psi_{j}^{s}} \big|^2,
\end{equation}
and thus
\begin{equation}
    \delta_{0}^{\textnormal{pure}}(N) \geq \big| \bracket{\psi_{i}^{k}}{\psi_{j}^{s}} \big|^2.
\end{equation}
This means that the set $I$ satisfies the conditions of lemma~\ref{lem:maxquasiduplicationpure} and hence may have \emph{at most} $N$ elements. We can conclude then
\begin{equation}
    M\leq |I|\leq N.
\end{equation}
\end{proof}

\subsection{Enumeration of the $\Sigma_s$}

In the proof of Theorem~\ref{thm:yesqchainrule} the sets 
\begin{equation}
    \Sigma_s = \lbrace t \in P_s | \; |_{2}P_{s}^{\eta(t,s)}| > 2^{\ell} \rbrace, 
\end{equation}
were defined and claimed to be computably enumerable. Algorithm~\ref{alg:sigmasenumeration} provides a way to enumerate them by calling a computable enumeration $\mathcal{E}_s$ of the set $P_s$, and subroutine $\Upsilon$, which on input $(s,t,i)$:
\begin{itemize}
    \item Outputs the $i$-th element of some fixed computable enumeration $\Upsilon_{s,t}$ of the set $_{2}P_{s}^{\eta(t,s)}$ if $i \leq |_{2}P_{s}^{\eta(t,s)}|$.
    \item Does not halt if $i > |_{2}P_{s}^{\eta(t,s)}|$.
\end{itemize}
Note that a subroutine with the above properties can always be constructed from a program that enumerates a set, since such program can be simulated to produce a certificate of correctness for its output whenever the index is less or equal to the size of the set (by printing the transcript of all its computation), so whenever it halts with an invalid certificate, the simulator can be made to enter an infinite cycle. 

\begin{center}
\begin{algorithm}[H]
\caption{Enumeration of the set $\Sigma_s$ given computable enumerations $\mathcal{E}_s, \Upsilon_{s,t}$ of the respective sets $P_s$ and $_{2}P_{s}^{\eta(t,s)}$, and the values of $s$ and $\ell$.}
\label{alg:sigmasenumeration}
\textbf{Parameters:} Subroutine $\Upsilon$, which in input $(s,t,i)$ outputs the $i$-th element of some fixed computable enumeration $\Upsilon_{s,t}$ of the set $_{2}P_{s}^{\eta(t,s)}$. Computable enumeration $\mathcal{E}_s$ of the set $P_s$. Positive integers $s,\ell$ \; 
\textbf{Input:} Integer $i \geq 1$\; 
$L \leftarrow (), j \leftarrow 1$; \\
\While{$|L| < i$}{
    Set $n$ equal to the smallest natural number such that $j < \frac{(n+1)(n+2)}{2}$\;
    \eIf{$n = 1$}{$m=s-n$\;}{$m = s \mod\frac{n(n+1)}{2}$\;}
    $t' \leftarrow \mathcal{E}_{s}(n-m)$\;
    Simulate the first $m$ steps of $\Upsilon(s,t',2^{l}+1)$\;
    \If{The simulated machine reached its halting state and $t'$ is not in $L$}{
        Add $t'$ to $L$
    }
    $j \leftarrow j+1$\;
}
\Return{$L[i]$}
\end{algorithm}
\end{center}
\end{document}